\documentclass[letterpaper, 10 pt]{ieeetran}  
\IEEEoverridecommandlockouts                              
\usepackage{graphics} 
\usepackage{epsfig} 
\usepackage{amsmath} 
\usepackage{amssymb}  
\usepackage{enumitem}
\usepackage{booktabs}
\usepackage{graphicx}

\usepackage{cite}


\usepackage{amsthm}
\newtheorem{assumption}{\bf Assumption}

\newtheorem{theorem}{\bf Theorem}
\newtheorem{proposition}{\bf Proposition}
\newtheorem{lemma}{\bf Lemma}
\newtheorem{corollary}{\bf Corollary}
\newtheorem{remark}{\bf Remark}

\usepackage{setspace}

\usepackage{tikz} \usetikzlibrary{arrows,patterns,mindmap,backgrounds,shadows}
\usetikzlibrary{backgrounds}
\usetikzlibrary{shapes,arrows,chains}

\usepackage{siunitx}
\usepackage[colorlinks=true,        
  citecolor=black,        
  linkcolor=blue,        
  menucolor=green,        
  urlcolor=blue,         
  bookmarks=true, hyperindex=true, breaklinks=true]{hyperref}
\usepackage{comment,cancel}
\usepackage{algpseudocode, algorithm}
\usepackage{enumitem}


\DeclareMathOperator*{\argmin}{arg\,min}

\def\bquad{\!\!\!\!}

\begin{document} 
\title{Adaptive Economic Model Predictive Control: Performance Guarantees for Nonlinear Systems}
\author{Maximilian Degner$^{1,2}$, Raffaele Soloperto$^1$, Melanie N. Zeilinger$^2$, John Lygeros$^1$, Johannes K\"ohler$^2$
\thanks{$^1$Automatic Control Laboratory, ETH Zürich}
\thanks{$^2$Institute for Dynamic Systems and Control, ETH Zürich\\
mdegner(at)ieee(dot)org, \{soloperr$|$mzeilinger$|$lygeros$|$jkoehle\}(at)ethz(dot)ch\\
Johannes K\"ohler and Raffaele Soloperto were supported as a part of NCCR Automation, a National Centre of Competence in Research, funded by the Swiss National Science Foundation (grant number 51NF40\_225155).\\
Raffaele Soloperto was also supported by
the European Research Council under the H2020 Advanced Grant no. 787845 (OCAL).}
}

\maketitle
\begin{abstract}
    We consider the problem of optimizing the economic performance of nonlinear constrained systems subject to uncertain time-varying parameters and bounded disturbances. 
In particular, we propose an adaptive economic model predictive control (MPC) framework that: (i) directly minimizes transient economic costs, (ii) addresses parametric uncertainty through online model adaptation, (iii) determines optimal setpoints online, and (iv) ensures robustness by using a tube-based approach.  
The proposed design ensures recursive feasibility, robust constraint satisfaction, and a transient performance bound. In case the disturbances have a finite energy and the parameter variations have a finite path length, the asymptotic average performance is (approximately) not worse than the performance obtained when operating at the best reachable steady-state.  
We highlight performance benefits in a numerical example involving a chemical reactor with unknown time-invariant and time-varying parameters.

\end{abstract}  
\begin{IEEEkeywords}
NL predictive control, economic MPC, robust adaptive control, optimal control, constrained control 
 \end{IEEEkeywords}

\section{Introduction}\label{sec:intro}
Model Predictive Control (MPC) is an optimization-based control technique that can ensure constraint satisfaction for general nonlinear systems~\cite{rawlingsModelPredictiveControl2020}.
The performance of MPC schemes relies on the accuracy of the model, which is typically identified offline~\cite{darby2012mpc}. 
However, the system dynamics often change during online operation. Thus, online model adaptation is required to ensure consistent performance~\cite{yoonAdaptivePredictiveControl1994, astromAdaptiveControl2008}.

In addition, the performance of a plant is usually characterized by some economic cost, such as the yield of chemical plants~\cite{rawlingsFundamentalsEconomicModel2012,ellisEconomicModelPredictive2017,faulwasserEconomicNonlinearModel2018} or the energy cost in building temperature control~\cite{maApplicationEconomicMPC2014}, and power grids~\cite{hu2023economic}.
However, most MPC implementations consider this economic cost indirectly  by stabilizing an optimal setpoint. 
In this work, we address both limitations by minimizing the transient economic cost directly and adapting the prediction model online.

\begin{figure}[t]
    \centering
    \includegraphics[width=0.9\linewidth]{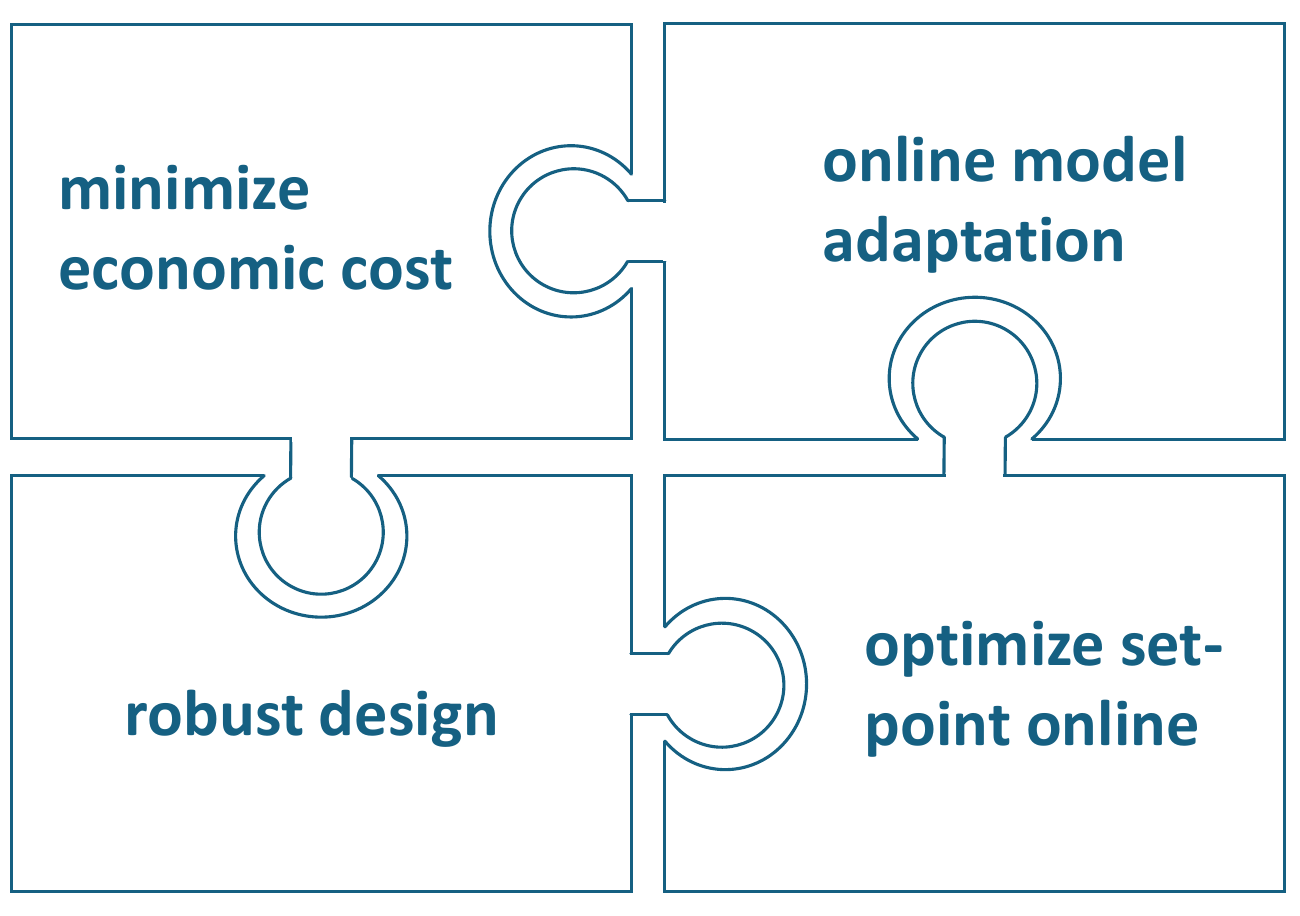}
    \caption{%
    We minimize the economic cost and adapt the prediction model during online operation. 
    Utilizing existing robust MPC designs guarantees constraint satisfaction and 
    thanks to artificial references,
    optimize setpoints online, leveraging the adapted model.
    }
    \label{fig:intro}
\end{figure} 

\subsection*{Related Work}
\emph{Adaptive MPC} schemes deal with online changes in the system dynamics by adapting the model parameters during closed-loop operation. 
While empirical results demonstrating performance enhancement through adaptive MPC are well documented~\cite{yoonAdaptivePredictiveControl1994, bouffard2012learning,kumarExperimentalEvaluationMIMO2015, didierRobustAdaptiveModel2021}, theoretical developments have been lacking until recently~\cite{mayne2014model}. 
\emph{Robust adaptive MPC} approaches for linear and nonlinear uncertain systems are proposed in~\cite{lorenzenRobustMPCRecursive2019, kohlerLinearRobustAdaptive2019} and \cite{adetola2011robust,lopez2019adaptive,kohlerRobustAdaptiveModel2021,sinha2022adaptive,sasfiRobustAdaptiveMPC2023,solopertoGuaranteedClosedLoopLearning2023}, respectively. These approaches ensure robust constraint satisfaction while leveraging online adaptation of the model uncertainty to reduce the conservatism.  
In~\cite{aswani2013provably} the optimized cost leveraged a separate prediction with an online learned model to enhance performance. 
By adapting such a model with a least-mean square parameter estimator, adaptive MPC schemes have shown $\mathcal{L}_2$ stability of the closed-loop system despite parametric uncertainty~\cite{lorenzenRobustMPCRecursive2019, kohlerLinearRobustAdaptive2019,kohlerRobustAdaptiveModel2021}. 
However, all these approaches rely on a cost that is positive definite compared to fixed equilibrium. 
Hence, these approaches cannot directly optimize economic costs or consider online changes in the optimal setpoint.
In~\cite{guayAdaptiveEconomicOptimising2013} and~\cite[Chap.~3]{wuLearningBasedEconomicControl2023}, also economic cost functions have been investigated in an adaptive MPC scheme. However, worst-case costs are optimized and no guarantees regarding closed-loop performance are established.  

\emph{Economic MPC} formulations~\cite{rawlingsFundamentalsEconomicModel2012,faulwasserEconomicNonlinearModel2018} directly minimize the predicted economic cost, such as the energy consumption of heating, ventilation, and cooling (HVAC) systems~\cite{taheriModelPredictiveControl2022}, the production yield of a chemical reactor~\cite{rawlingsFundamentalsEconomicModel2012}, or the profit of flexible manufacturing systems~\cite{risbeckUnificationClosedloopScheduling2019}.  This direct optimization of the economic cost can significantly enhance the performance compared to traditional stabilizing MPC formulations~\cite{ellisEconomicModelPredictive2017}.
In the nominal case, i.e., perfect model knowledge and no disturbances, economic MPC theory ensures that the asymptotic average performance of the closed-loop system is not worse than the performance of operating at the optimal steady-state~\cite{angeliAveragePerformanceStability2012,amritEconomicOptimizationUsing2011}.  
In case the optimal steady-state is subject to online changes, artificial references can be included in the economic MPC formulation to optimize setpoints online~\cite{krupaModelPredictiveControl2024,fagianoGeneralizedTerminalState2013, mullerEconomicModelPredictive2013,mullerPerformanceEconomicModel2014, ferramoscaEconomicMPCChanging2014, kohlerPeriodicOptimalControl2020}. In particular, such approaches can ensure that the asymptotic average performance is (approximately) not worse than the performance at the best reachable steady-state~\cite{fagianoGeneralizedTerminalState2013, mullerEconomicModelPredictive2013, mullerPerformanceEconomicModel2014}.
Although these economic MPC schemes yield strong performance guarantees, the results rely on a perfect prediction model and are thus not applicable to uncertain systems. 
In case of model mismatch, \emph{robust} economic MPC schemes address this issue by optimizing for the worst-case cost~\cite{bayerOptimalSystemOperation2018}. 
While this provides robust bounds on the closed-loop performance, it can also lead conservative operation.

To maximize the closed-loop economic performance of uncertain systems, we recently proposed an \emph{adaptive economic} MPC scheme that uses least-mean squares parameter adaptation and provides strong asymptotic and transient performance guarantees~\cite{degnerAdaptiveEconomicModel2024}.
However, the approach is limited to open-loop stable linear systems, soft state constraints, and assumes that the optimal steady-state is independent of the unknown model parameters.

\subsection*{Contribution}
In this paper, we propose a robust adaptive economic MPC scheme that provides suitable bounds on the closed-loop cost. 
We consider nonlinear systems that are subject to state and input constraints, time-varying uncertain parameters, and bounded disturbances. 
We have the following main contributions:
\begin{itemize}
    \item We derive a transient performance guarantee relative to the optimized steady-state of the certainty-equivalent prediction model (Theorem~\ref{thm_transient}).
    \item In case of finite-energy disturbances and finite parameter variations, we show that the asymptotic average closed-loop cost is (approximately) not worse than the cost at the optimal achievable steady-state (Corollaries~\ref{corr_asymptotic_performance} and~\ref{corr_subopt}).
    \item We show how to design an economic terminal cost that can deal with online changing model parameters and setpoints and does not rely on a sufficiently small terminal set (Proposition~\ref{prop_constr-term-ingred}).
\end{itemize}

Figure~\ref{fig:intro} depicts the main components of the adaptive economic MPC formulation: The model parameters are adapted online using least-mean square parameter adaptation. The MPC directly minimizes the predicted economic cost using a certainty-equivalent prediction with the online parameter estimates. As the optimal steady-state is unknown a priori, the terminal cost and set are formulated using online optimized artificial references. We utilize a tube-based MPC scheme to guarantee constraint satisfaction for all admissible disturbances and parameters.
Based on simulations of a chemical reactor, we demonstrate the performance benefits of the proposed adaptive economic MPC scheme.

\paragraph*{Outline}
In Section~\ref{sec:problem_setup}, we describe the control problem and discuss preliminaries. 
In Section~\ref{sec:proposed_design}, we present the proposed adaptive economic MPC approach, while the theoretical analysis and discussion can be found in Section~\ref{sec:theo_analysis}. We illustrate the performance benefits of the proposed approach with a numerical example in Section~\ref{sec:num_ex} and conclude in Section~\ref{sec:concl}. Appendices~\ref{app:proofs} and~\ref{appendix_aux-results} contain the proofs of our main results and of auxiliary results, respectively. In Appendix~\ref{ssec:app_SDP}, we provide details details regarding the offline design.

\subsection*{Notation}
For any $a,b\in\mathbb{R}$, we denote $\mathbb{I}_{[a,b]} = {\{n\in\mathbb{N}| a\leq n\leq b\}}$, where $\mathbb{N}$ signifies the non-negative integers. A positive semidefinite matrix $R$ is denoted by $R\succeq 0$ and a positive definite matrix $Q$ is denoted by $Q\succ0$. We denote the $1$-norm and $2$-norm of a vector $x\in\mathbb{R}^n$ by $\|x\|_1$ and $\|x\|$, respectively. 
For a matrix $Q\succ0$, we denote the weighted norm by $\|x\|_Q^2=x^\top Qx$. 
A sequence $w_k\in\mathbb{R}^n$, $k\in\mathbb{I}_{[0,T]}$ is denoted by $w_{[0,T]}$, and its $\mathcal{L}_2$-norm by  $|w_{[0,T]}|_{\mathcal{L}_2}^2:={\sum_{k=0}^T\|w_k\|^2}$. 
The spectral norm of a matrix $A\in\mathbb{R}^{n\times m}$ is given by $\|A\|=\sqrt{\lambda_{\text{max}}(A^\top A)}$, where $\lambda_{\max}$ denoted the maximal eigenvalue.
The identity matrix of size $n\times n$ is signified by $I_n$.
The $i$-th element of a vector $x\in\mathbb{R}^n$ is represented by $[x]_i$.
A function $\alpha$ is of class $\mathcal{K}_\infty$, if $\alpha: \mathbb{R}_{\geq0}\rightarrow\mathbb{R}_{\geq0}, \alpha(0)=0, \lim_{s\rightarrow \infty}\alpha(s)=\infty$, and $\alpha$ is strictly increasing and continuous.

\section{Problem Setup and Preliminaries} 
\label{sec:problem_setup}
In this section, we first discuss the problem setup (Section~\ref{ssec:setup_system}). Then, we present background on robust MPC (Section~\ref{ssec:robust_MPC}) and least-mean square parameter adaptation (Section~\ref{ssec:LMS}). Lastly, we introduce regularity conditions on the system (Section~\ref{ssec:syst_prop}).

\subsection{Problem Setup} \label{ssec:setup_system}
\label{sec:system}
The system dynamics are given by
\begin{align}\label{prob_setup:eq-dynamics}
\begin{split}
    x_{k+1} &= f_0(x_k,u_k)+G(x_k,u_k)\theta_k + E(x_k,u_k)d_k \\
    &=: f_\mathrm{w}(x_k,u_k,\theta_k,d_k),
    \end{split}
\end{align} 
where the states, control inputs, disturbances, the parameters, and the time are denoted by $x_k\in\mathbb{R}^{n_x},\ u_k\in\mathbb{R}^{n_u}$, $d_k \in\mathbb{R}^{n_d}$, $\theta_k\in\mathbb{R}^{n_\theta}$, and $k\in\mathbb{N}$, respectively. The dynamics $f_\mathrm{w}$ are assumed to be continuous, while the time-varying parameters $\theta_k$ is assumed to be unknown.
The considered model structure, which is linear in the parameters $\theta_k$, is standard in adaptive control~\cite{astromAdaptiveControl2008, krsticNonlinearAdaptiveControl1995}. 
In case uncertain model parameters enter non-linearly in the dynamics, we can often obtain the structure~\eqref{prob_setup:eq-dynamics} by suitably re-defining the parameters $\theta$. 
We denote the nominal dynamics, i.e., neglecting disturbances $d$, by
\begin{align} \label{nonlin_discrete_system-eq}
    f(x,u,\theta) := f_0(x,u)+G(x,u)\theta.
\end{align}

\begin{assumption}[Bounded disturbances and parameters]
\label{assump_bounded-disturb-param}
    The disturbances and parameters are contained in known sets $\mathbb{D},\ \Theta$, i.e., $d_k\in\mathbb{D}$ and $\theta_k\in\Theta$ for all $k\in\mathbb{N}$. Furthermore, $\mathbb{D}$ is compact and includes the origin, and $\Theta$ is compact and convex.
\end{assumption}

The goal is to minimize the economic cost $\ell:{\mathbb{R}^{n_x}\times\mathbb{R}^{n_u}} \rightarrow \mathbb{R}$ of the closed-loop system, which directly reflects the economic performance and is generally not simply penalizing the distance to some optimal steady-state(s). 
In addition, state and input constraints should be robustly satisfied
\begin{equation} \label{prob_setup:constraints-eq}
    (x_k,u_k)\in \mathbb{Z},\quad \forall k\in\mathbb{N},
\end{equation} 
where $\mathbb{Z}$ is a compact set. 
The proposed approach addresses this problem by incorporating online parameter adaptation and a robust design in the economic MPC formulation. 
Thus, in the following, we introduce preliminaries regarding robust MPC and parameter adaptation.

\subsection{Robust MPC framework} \label{ssec:robust_MPC}
In the following, we recapitulate tube-based robust MPC, which ensures that the constraints~\eqref{prob_setup:constraints-eq} are satisfied for all possible realizations of the uncertain model~\eqref{prob_setup:eq-dynamics}. Analogous to~\cite{kohlerRobustAdaptiveModel2021}, we consider general conditions on the tube, which we will utilize in the MPC design to robustly ensure constraint satisfaction and which can be constructively satisfied by most existing tube-based robust MPC approaches (see Remark~\ref{rk:tube}). 

The considered robust MPC formulation predicts the evolution of a nominal trajectory $(z_{j|k},v_{j|k})$ using the nominal model~\eqref{nonlin_discrete_system-eq} with some nominal parameters $\bar{\theta}\in\Theta$. Here, $z_{j|k}$ denotes the prediction of $z$ for time $k+j$ computed at time $k$.
In addition, the MPC predicts a tube $\mathbb{X}$ that incorporates all possible trajectories resulting from the parametric uncertainty and disturbances affecting the system, i.e., $x\in\mathbb{X}\ \forall d\in \mathbb{D},\theta\in\Theta$.
Furthermore, we make use of an auxiliary feedback $u=\kappa(x,z,v)$ 
that drives the state $x$ to the nominal prediction $z$, and therefore reduces the size of the predicted tube. Without loss of generality, we consider $\kappa(z,z,v)=v$, $\forall (z,v)\in\mathbb{Z}$.  
Formally, the tube $\mathbb{X}$ is propagated via a general mapping $\mathbb{X}_{k+1}\supseteq\Phi(\mathbb{X}_k,z_k,v_k)$ that satisfies the following assumption.

\begin{assumption}[Tube propagation]
\label{assump_Phi-function}
     For any $\mathbb{X}\subseteq\mathbb{R}^{n_{x}}$, any $x,z\in\mathbb{X}$, any $(x,\kappa(x,z,v))\in\mathbb{Z}, (z,v)\in\mathbb{Z}$, any $\theta \in \Theta$, and any $d\in\mathbb{D}$, the tube propagation mapping $\Phi$ satisfies
    \begin{align} \label{assump_Phi-function-eq}
        f_\mathrm{w}(x,\kappa(x,z,v),\theta,d) \in \ \Phi(\mathbb{X},z,v).
    \end{align}
\end{assumption}
Property~\eqref{assump_Phi-function-eq} enables efficient implementations by sequentially propagating a set containing all possible state trajectories, see also Figure~\ref{fig:tube_illustration} for an illustration.
\begin{remark}
\label{rk:tube}[Tube designs]
The general condition~\eqref{assump_Phi-function-eq} can be satisfied by many sequential propagation techniques, such as interval arithmetic~\cite{limonRobustMPCConstrained2005, meyerIntervalReachabilityAnalysis2021}, rigid tubes~\cite{yuTubeMPCScheme2013,bayerDiscretetimeIncrementalISS2013,singh2023robust}, or Lipschitz based propagation techniques~\cite{kohlerNovelConstraintTightening2018,pinRobustModelPredictive2009}, see~\cite{houskaRobustOptimizationMPC2019,althoffSetPropagationTechniques2021} for an overview.

One of the simplest and computationally most efficient designs include a linear feedback $\kappa(x,z,v) = K(x-z)+v$
and scaled tube of the form $\mathbb{X}_{j|k}= \{z_{j|k}\} \oplus s_{j|k} \mathbb{E}$, where $\mathbb{E}$ is an offline specified set (typically an ellipsoid or polytope/zonotope), and $s_{j|k}$ is a scaling that is automatically predicted in the MPC~\cite{adetola2011robust,lopez2019adaptive,kohlerRobustAdaptiveModel2021,sasfiRobustAdaptiveMPC2023}. 
In Appendix~\ref{ssec:app_SDP}, we also provide explicit formulas to design such a set $\mathbb{E}$ and determine the scaling by adapting the results from~\cite{kohlerComputationallyEfficientRobust2021,sasfiRobustAdaptiveMPC2023}.  
\end{remark}

Given Assumption~\ref{assump_Phi-function}, we can formulate a general robust MPC problem as follows:\begin{subequations}\label{robust-MPC-scheme}\begin{alignat}{2} 
    \bquad \min_{\substack{z_{\cdot|k},\, v_{\cdot|k},\\ v_k^\mathrm{s},\, \mathbb{X}_{\cdot|k}}}&   && \mathrm{cost}
    \label{robust-MPC_scheme-a}\\[-5pt]
     &\ \ \text{s.t.}  ~ && z_{j+1|k} = f (z_{j|k}, v_{j|k}, \bar{\theta}),\label{robust-MPC_scheme-b}\\
                    &    &&\mathbb{X}_{j+1|k} \supseteq \Phi(\mathbb{X}_{j|k}, z_{j|k}, v_{j|k}),\label{robust-MPC_scheme-c}\\
                    &    && \{(x,\kappa(x, z_{j|k}, v_{j|k})):~ x\in \mathbb{X}_{j|k}\} \subseteq\mathbb{Z},  \label{robust-MPC_scheme-d}\\
                    &    && z_{0|k}, x_k \in \mathbb{X}_{0|k},\label{robust-MPC_scheme-e}\\
                    &    && \mathbb{X}_{N|k} \supseteq \Phi(\mathbb{X}_{N|k}, z_{N|k}, v_k^\mathrm{s}),\label{robust-MPC_scheme-f}\\
                    & && \{(x,\kappa(x, z_{N|k}, v_k^\mathrm{s})):~ x\in \mathbb{X}_{N|k}\} \subseteq\mathbb{Z}, \label{robust-MPC_scheme-f2}\\
                    &    && z_{N|k} = f(z_{N|k}, v_k^\mathrm{s}, \bar{\theta}),\label{robust-MPC_scheme-g}\\
                    &    && \qquad \qquad \forall j \in\mathbb{I}_{[0,N-1]}.\nonumber
\end{alignat}\end{subequations}
We minimize some suitable cost function~\eqref{robust-MPC_scheme-a}, while propagating the nominal system dynamics~\eqref{robust-MPC_scheme-b}, propagating the tube~\eqref{robust-MPC_scheme-c}, and ensuring robust constraint satisfaction~\eqref{robust-MPC_scheme-d}. The set $\mathbb{X}_{j|k}$ can be finitely parameterized (e.g., ellipsoid or polytope). The constraint~\eqref{robust-MPC_scheme-e} ensures that the initial tube $\mathbb{X}_{0|k}$ contains the measured state $x_k$ and the optimized nominal initial state $z_{0|k}$. The nominal trajectory is constrained to end at a steady-state with some input $v^\mathrm{s}_k$~\eqref{robust-MPC_scheme-g}. The constraint~\eqref{robust-MPC_scheme-f} ensures that the terminal set $\mathbb{X}_{N|k}$ is robust positive invariant and~\eqref{robust-MPC_scheme-f2} ensures that it also satisfies the constraints. 

We denote the solution of Problem~\eqref{robust-MPC-scheme} by $z_{j|k}^\ast,\ v_{j|k}^\ast$, and $\mathbb{X}_{\cdot|k}^\ast$.

\begin{lemma}[Robust MPC properties~\protect{\cite[Theorem~1]{kohlerRobustAdaptiveModel2021}}]
    Let Assumptions~\ref{assump_bounded-disturb-param} and~\ref{assump_Phi-function} hold. If Problem~\eqref{robust-MPC-scheme} is feasible at time $k=0$, then it is recursively feasible and the closed-loop system $x_{k+1} = f_\mathrm{w}(x_k, u_k, \theta_k, d_k)$  with $u_k=\kappa(x_k,z^\ast_{0|k},v^\ast_{0|k})$ satisfies the constraints~\eqref{prob_setup:constraints-eq}.
\end{lemma}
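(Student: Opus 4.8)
The plan is to establish the two claims---recursive feasibility and robust constraint satisfaction---by a standard shifted-candidate argument for tube-based MPC, adapted to the general tube propagation mapping $\Phi$ from Assumption~\ref{assump_Phi-function}. First I would argue robust constraint satisfaction \emph{given} feasibility: at any time $k$ with a feasible solution $z_{\cdot|k}^\ast,\ v_{\cdot|k}^\ast,\ \mathbb{X}_{\cdot|k}^\ast$, constraint~\eqref{robust-MPC_scheme-e} gives $x_k\in\mathbb{X}_{0|k}^\ast$ and $z_{0|k}^\ast\in\mathbb{X}_{0|k}^\ast$, so constraint~\eqref{robust-MPC_scheme-d} with $j=0$ yields $(x_k,u_k)=(x_k,\kappa(x_k,z_{0|k}^\ast,v_{0|k}^\ast))\in\mathbb{Z}$. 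This gives~\eqref{prob_setup:constraints-eq} at every $k$ for which the problem is feasible, so the remaining work is purely the feasibility recursion.

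For recursive feasibility, I would assume Problem~\eqref{robust-MPC-scheme} is feasible at time $k$ and exhibit a feasible (not necessarily optimal) candidate at time $k+1$ by the usual time-shift: set $z_{j|k+1}=z_{j+1|k}^\ast$ and $v_{j|k+1}=v_{j+1|k}^\ast$ for $j\in\mathbb{I}_{[0,N-1]}$, append the terminal steady-state input $v_{N|k+1}=v_k^{\mathrm{s},\ast}$ (so $z_{N|k+1}=z_{N|k}^\ast$ by~\eqref{robust-MPC_scheme-g}), keep $v_{k+1}^{\mathrm s}=v_k^{\mathrm s,\ast}$, and shift the tubes $\mathbb{X}_{j|k+1}=\mathbb{X}_{j+1|k}^\ast$ for $j\in\mathbb{I}_{[0,N]}$. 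The key points to verify are: (i) the nominal dynamics~\eqref{robust-MPC_scheme-b} and tube inclusions~\eqref{robust-MPC_scheme-c} for $j\in\mathbb{I}_{[0,N-2]}$ hold by construction since they were satisfied by the optimal solution at $k$; (ii) the new ``last'' inclusion $\mathbb{X}_{N|k+1}\supseteq\Phi(\mathbb{X}_{N-1|k+1},z_{N-1|k+1},v_{N-1|k+1})$ is exactly~\eqref{robust-MPC_scheme-c} for $j=N-1$ at time $k$ rewritten via the terminal constraint~\eqref{robust-MPC_scheme-f}, since $\mathbb{X}_{N|k}^\ast\supseteq\Phi(\mathbb{X}_{N|k}^\ast,z_{N|k}^\ast,v_k^{\mathrm s,\ast})$ and $z_{N|k}^\ast=z_{N-1|k+1}$ only if $N-1$ is handled correctly---more precisely the shifted terminal inclusion follows from combining~\eqref{robust-MPC_scheme-c} at $j=N-1$ and~\eqref{robust-MPC_scheme-f}; (iii) the shifted terminal conditions~\eqref{robust-MPC_scheme-f}--\eqref{robust-MPC_scheme-g} are literally the old~\eqref{robust-MPC_scheme-f}--\eqref{robust-MPC_scheme-g}; (iv) the constraint inclusions~\eqref{robust-MPC_scheme-d} for the shifted indices and~\eqref{robust-MPC_scheme-f2} carry over from the old~\eqref{robust-MPC_scheme-d}/\eqref{robust-MPC_scheme-f2}; and (v) the initial-tube constraint~\eqref{robust-MPC_scheme-e} at $k+1$, i.e.\ $z_{0|k+1},x_{k+1}\in\mathbb{X}_{0|k+1}=\mathbb{X}_{1|k}^\ast$. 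The inclusion $z_{0|k+1}=z_{1|k}^\ast\in\mathbb{X}_{1|k}^\ast$ follows from~\eqref{robust-MPC_scheme-b}--\eqref{robust-MPC_scheme-c} and $z_{0|k}^\ast\in\mathbb{X}_{0|k}^\ast$ together with the normalization $\kappa(z,z,v)=v$. The inclusion $x_{k+1}\in\mathbb{X}_{1|k}^\ast$ is where Assumption~\ref{assump_Phi-function} does the work: since $x_k\in\mathbb{X}_{0|k}^\ast$, $z_{0|k}^\ast\in\mathbb{X}_{0|k}^\ast$, $(x_k,\kappa(x_k,z_{0|k}^\ast,v_{0|k}^\ast))\in\mathbb{Z}$, $(z_{0|k}^\ast,v_{0|k}^\ast)\in\mathbb{Z}$ (both from~\eqref{robust-MPC_scheme-d}), and $\theta_k\in\Theta$, $d_k\in\mathbb{D}$ (Assumption~\ref{assump_bounded-disturb-param}), property~\eqref{assump_Phi-function-eq} gives $x_{k+1}=f_\mathrm{w}(x_k,\kappa(x_k,z_{0|k}^\ast,v_{0|k}^\ast),\theta_k,d_k)\in\Phi(\mathbb{X}_{0|k}^\ast,z_{0|k}^\ast,v_{0|k}^\ast)\subseteq\mathbb{X}_{1|k}^\ast$, the last inclusion by~\eqref{robust-MPC_scheme-c} at $j=0$. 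By induction, feasibility at $k=0$ implies feasibility for all $k\in\mathbb{N}$, and combined with the first paragraph this yields the claim.

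The main obstacle, and the only genuinely nontrivial step, is handling the terminal ingredients in the shift: one must check that the single terminal tube $\mathbb{X}_{N|k}^\ast$ plays a dual role---it is simultaneously the ``$N$-th'' tube that needs the invariance property~\eqref{robust-MPC_scheme-f} and the set that, after shifting, becomes the $(N-1)\to N$ propagation target---and that the terminal input $v_k^{\mathrm s,\ast}$ can be consistently reused so that both~\eqref{robust-MPC_scheme-f} and~\eqref{robust-MPC_scheme-g} remain satisfied at $k+1$. This is exactly the structure already verified in~\cite[Theorem~1]{kohlerRobustAdaptiveModel2021}, so the argument reduces to citing that construction; I would simply note that the proof is identical once Assumptions~\ref{assump_bounded-disturb-param} and~\ref{assump_Phi-function} are in place, and refer to that reference for the detailed bookkeeping.
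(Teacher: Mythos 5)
Your proposal is correct and follows essentially the same route the paper relies on: the paper itself does not re-prove this lemma but cites \cite[Theorem~1]{kohlerRobustAdaptiveModel2021}, and the shifted-candidate construction you describe (reuse of $v_k^{\mathrm{s}\ast}$, $\mathbb{X}_{N|k+1}=\mathbb{X}^\ast_{N|k}$, and Assumption~\ref{assump_Phi-function} with $\theta_k\in\Theta$, $d_k\in\mathbb{D}$, $\bar\theta\in\Theta$, $0\in\mathbb{D}$ to get $x_{k+1},z^\ast_{1|k}\in\mathbb{X}^\ast_{1|k}$) is exactly the argument used there and replicated in the paper's proof of Theorem~\ref{thm_transient}. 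Only cosmetic issues remain: the shifted terminal inclusion at $j=N-1$ follows from~\eqref{robust-MPC_scheme-f} alone (no need to combine with~\eqref{robust-MPC_scheme-c}), and your index bookkeeping should append $v_{N|k}^\ast:=v_k^{\mathrm{s}\ast}$, $\mathbb{X}^\ast_{N+1|k}:=\mathbb{X}^\ast_{N|k}$ before shifting rather than referencing nonexistent variables.
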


\begin{figure}[t]
    \centering
    \begin{tikzpicture}[scale = 1]
        \node at (0,0) {\includegraphics[width=0.95\linewidth]{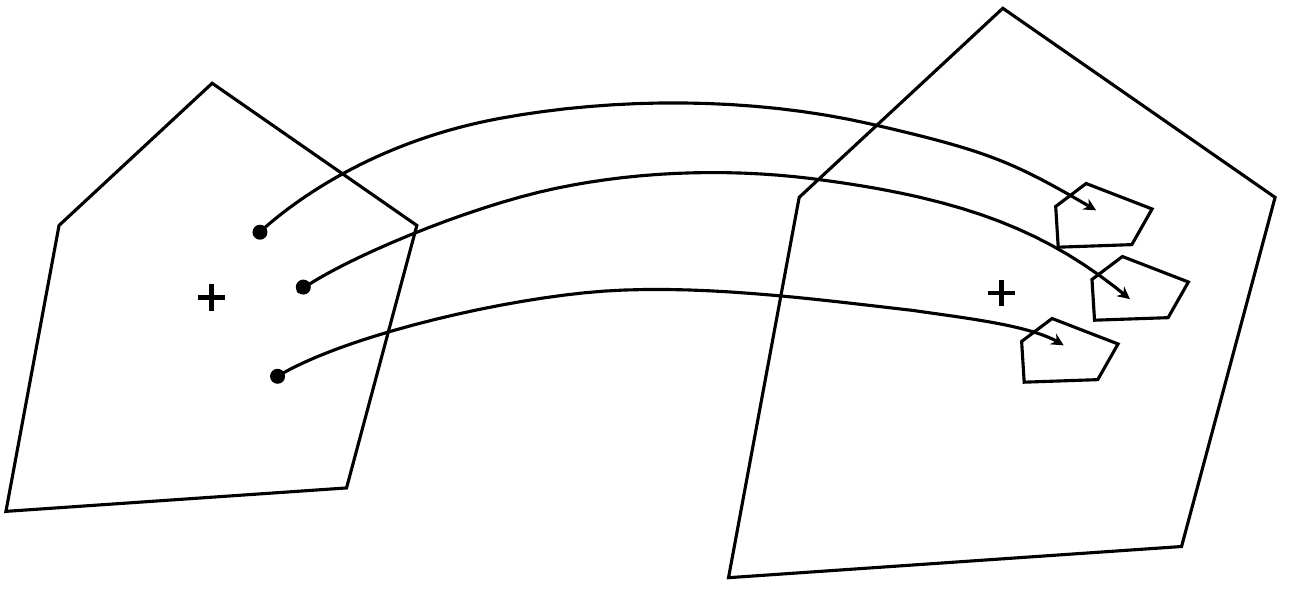}};
        \node at (-3.5,-1.0) {$\mathbb{X}_{k}$};
        \node at (2.1,-1.3) {$\Phi(\mathbb{X}_{k}, z_{k}, v_{k})$};
        \node at (1.8,0.3) {$z_{k+1}$};
        \node at (-3.2,0.2) {$z_{k}$};
    \end{tikzpicture}
    \caption{For each $x\in \mathbb{X}_{k}$, we can predict future states for different disturbances $d\in \mathbb{D}$ and parameters $\theta\in\Theta$. All of these predictions are contained in the robust one-step reachable set $\Phi(\mathbb{X}_{k}, z_{k}, v_{k})$.}
    \label{fig:tube_illustration}
\end{figure}

\subsection{Least-Mean Squares Parameter Estimation} \label{ssec:LMS}
In the following, we recall the projected least-mean squares (LMS) estimator from~\cite{lorenzenRobustMPCRecursive2019}. This parameter estimation method iteratively computes an estimate $\hat{\theta}_k\in\Theta$ starting from the initial estimate $\hat{\theta}_0\in\Theta$. The estimate $\hat{\theta}_k$ will be used in our proposed adaptive economic MPC formulation.

To shorten the notation, we define $w_k := E(x_k,u_k) d_k$ and $\Delta\theta_k := \theta_{k+1} - \theta_k$.
We define the one-step prediction $\hat{x}_{1|k}$ and the corresponding prediction error $\tilde{x}_{1|k}$ as 
\begin{subequations}
\begin{align} 
    \hat{x}_{1|k} &:= f_0(x_k,u_k) + G(x_k,u_k) \cdot \hat{\theta}_k,             \label{LMS_one-step_prediction-eq}\\
        \tilde{x}_{1|k} &:= G(x_k,u_k) \cdot (\theta_k-\hat{\theta}_k).     \label{probsetup_prop_LMS-prediction-error-eq}
\end{align}
\end{subequations}
Note that the one-step prediction error satisfies $\tilde{x}_{1|k} + w_k= x_{k+1}-\hat{x}_{1|k}$.

The update equations of the projected LMS estimator are
\begin{subequations}\label{LMS_update-eqs}\begin{align} 
\begin{split}
        \tilde{\theta}_{k+1} 
    &= \hat{\theta}_k + \mu G(x_k,u_k)^\top \cdot\left(x_{k+1}-\hat{x}_{1|k}\right),
    \end{split}\label{LMS_update_a}\\
    \hat{\theta}_{k+1} &= \argmin_{\theta \in \Theta} \|\theta-\tilde{\theta}_{k+1}\|, \label{LMS_update_b}
\end{align} \end{subequations}
where $\mu>0$ is the parameter update gain and satisfies 
\begin{align}\label{probsetup_LMS_gain-definition}
    \frac{1}{\mu} \geq \max_{(x,u)\in \mathbb{Z}} \|G(x,u)\|^2.
\end{align}
Equation~\eqref{LMS_update_a} corresponds to a gradient step on the instantaneous squared prediction error, while~\eqref{LMS_update_b} projects the intermediate result onto the convex set $\Theta$ to guarantee $\hat{\theta}_{k+1} \in \Theta$.
Since $G$ is continuous and $\mathbb{Z}$ compact  (cf. Section~\ref{ssec:setup_system}), the maximum in~\eqref{probsetup_LMS_gain-definition} is bounded. 
\begin{proposition}[LMS bounds]\label{probsetup_prop_LMS-bound}
    Let Assumption~\ref{assump_bounded-disturb-param} hold. If $(x_k,u_k)\in \mathbb{Z}$, $\forall k\in\mathbb{N}$, then, for all $T\in\mathbb{N}$, it holds that
    \begin{align}\label{probsetup_prop_LMS-bound-eq}
        \sum_{k=0}^T \|\tilde{x}_{1|k}\|^2 &\leq \frac{1}{\mu}\|\hat{\theta}_0-\theta_0\|^2 + \sum_{k=0}^T \left[\|w_k\|^2 
        +\frac{2c_\theta}{\mu} \|\Delta \theta_k\|\right],
    \end{align}
    where $c_\theta:=\max_{\theta_1,\theta_2\in\Theta}\|\theta_1-\theta_2\|$.
    Furthermore, the difference between successive parameter estimates satisfies
    \begin{equation} \label{probsetup_prop_LMS-theta-diff-eq}
    \frac{1}{\sqrt{\mu}} \|\hat{\theta}_{k+1}-\hat{\theta}_k\| \leq \|\tilde{x}_{1|k}+w_k\|.
    \end{equation}
\end{proposition}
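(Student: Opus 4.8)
The plan is to analyze the evolution of the parameter estimation error $\|\hat\theta_k - \theta_k\|^2$ as a Lyapunov-like quantity and telescope the resulting one-step inequality. First I would establish the key one-step inequality. Write $e_k := \hat\theta_k - \theta_k$ and recall $x_{k+1} - \hat x_{1|k} = \tilde x_{1|k} + w_k$ with $\tilde x_{1|k} = G(x_k,u_k)(\theta_k - \hat\theta_k) = -G(x_k,u_k)e_k$. From the gradient step~\eqref{LMS_update_a}, $\tilde\theta_{k+1} - \theta_k = e_k + \mu G_k^\top(\tilde x_{1|k} + w_k)$ where $G_k := G(x_k,u_k)$. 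Expanding $\|\tilde\theta_{k+1} - \theta_k\|^2$ gives
\begin{align*}
\|\tilde\theta_{k+1} - \theta_k\|^2 &= \|e_k\|^2 + 2\mu e_k^\top G_k^\top(\tilde x_{1|k} + w_k) + \mu^2 \|G_k^\top(\tilde x_{1|k}+w_k)\|^2.
\end{align*}
Now $e_k^\top G_k^\top = -\tilde x_{1|k}^\top$, so the cross term is $-2\mu \tilde x_{1|k}^\top(\tilde x_{1|k} + w_k) = -2\mu\|\tilde x_{1|k}\|^2 - 2\mu \tilde x_{1|k}^\top w_k$. For the quadratic term, I would use $\|G_k^\top v\|^2 \le \|G_k\|^2\|v\|^2 \le \frac{1}{\mu}\|v\|^2$ by the gain condition~\eqref{probsetup_LMS_gain-definition}, giving $\mu^2\|G_k^\top(\tilde x_{1|k}+w_k)\|^2 \le \mu\|\tilde x_{1|k}+w_k\|^2 = \mu\|\tilde x_{1|k}\|^2 + 2\mu\tilde x_{1|k}^\top w_k + \mu\|w_k\|^2$. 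Adding up, the cross terms $\pm 2\mu\tilde x_{1|k}^\top w_k$ cancel and one $-2\mu\|\tilde x_{1|k}\|^2$ combines with $+\mu\|\tilde x_{1|k}\|^2$, yielding
\begin{align*}
\|\tilde\theta_{k+1} - \theta_k\|^2 \le \|e_k\|^2 - \mu\|\tilde x_{1|k}\|^2 + \mu\|w_k\|^2.
\end{align*}

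Next I would handle the projection and the parameter drift. The projection~\eqref{LMS_update_b} onto the convex set $\Theta$ is nonexpansive and $\theta_{k+1}\in\Theta$, so $\|\hat\theta_{k+1} - \theta_{k+1}\| \le \|\tilde\theta_{k+1} - \theta_{k+1}\|$. To relate $\theta_{k+1}$ back to $\theta_k$, write $\tilde\theta_{k+1} - \theta_{k+1} = (\tilde\theta_{k+1} - \theta_k) - \Delta\theta_k$ and expand the square:
\begin{align*}
\|\tilde\theta_{k+1} - \theta_{k+1}\|^2 &= \|\tilde\theta_{k+1} - \theta_k\|^2 - 2(\tilde\theta_{k+1} - \theta_k)^\top\Delta\theta_k + \|\Delta\theta_k\|^2.
\end{align*}
The cross term I would bound by $2\|\tilde\theta_{k+1} - \theta_k\|\,\|\Delta\theta_k\|$; since $\tilde\theta_{k+1}\in\Theta$ would not quite hold (it is the pre-projection iterate), I instead bound $\|\tilde\theta_{k+1} - \theta_k\|$ — actually the cleanest route is to note $\|\hat\theta_{k+1}-\theta_{k+1}\| \le \|\hat\theta_{k+1} - \theta_k\| + \|\Delta\theta_k\|$ and $\|\hat\theta_{k+1}-\theta_k\|\le c_\theta$ (both in $\Theta$), so that expanding $\|\hat\theta_{k+1}-\theta_{k+1}\|^2 \le \|\hat\theta_{k+1}-\theta_k\|^2 + 2c_\theta\|\Delta\theta_k\| + \|\Delta\theta_k\|^2$; and by nonexpansiveness $\|\hat\theta_{k+1}-\theta_k\|^2\le\|\tilde\theta_{k+1}-\theta_k\|^2$. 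Combining with the first inequality gives the telescoping bound
\begin{align*}
\|e_{k+1}\|^2 \le \|e_k\|^2 - \mu\|\tilde x_{1|k}\|^2 + \mu\|w_k\|^2 + 2c_\theta\|\Delta\theta_k\| + \|\Delta\theta_k\|^2,
\end{align*}
where I would absorb the $\|\Delta\theta_k\|^2$ term into $2c_\theta\|\Delta\theta_k\|$ if $\|\Delta\theta_k\|\le 2c_\theta$ (which holds since consecutive parameters lie in $\Theta$, so $\|\Delta\theta_k\|\le c_\theta \le 2c_\theta$), giving the cleaner $4c_\theta\|\Delta\theta_k\|$ — but the stated bound has coefficient $2c_\theta/\mu$ after dividing by $\mu$, so I should be careful to get exactly $2c_\theta$; this suggests the intended argument keeps the projection-vs-true-parameter comparison tight, e.g. using that the projection of $\tilde\theta_{k+1}$ is the nearest point in $\Theta$ and $\theta_k\in\Theta$, yielding $(\tilde\theta_{k+1}-\hat\theta_{k+1})^\top(\theta_k - \hat\theta_{k+1})\le 0$, hence $\|\hat\theta_{k+1}-\theta_k\|^2 \le \|\tilde\theta_{k+1}-\theta_k\|^2 - \|\tilde\theta_{k+1}-\hat\theta_{k+1}\|^2 \le \|\tilde\theta_{k+1}-\theta_k\|^2$, and then only one $\|\Delta\theta_k\|$ cross term with a single $c_\theta$ factor survives after a final reindexing. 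Finally, summing the one-step inequality over $k=0,\dots,T$ telescopes the left side to $\|e_{T+1}\|^2 - \|e_0\|^2 \ge -\|\hat\theta_0-\theta_0\|^2$, rearranging to $\mu\sum_k\|\tilde x_{1|k}\|^2 \le \|\hat\theta_0-\theta_0\|^2 + \sum_k(\mu\|w_k\|^2 + 2c_\theta\|\Delta\theta_k\|)$, and dividing by $\mu$ yields~\eqref{probsetup_prop_LMS-bound-eq}.

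For the second claim~\eqref{probsetup_prop_LMS-theta-diff-eq}, I would argue directly: by nonexpansiveness of the projection, $\|\hat\theta_{k+1} - \hat\theta_k\| = \|\Pi_\Theta(\tilde\theta_{k+1}) - \Pi_\Theta(\hat\theta_k)\| \le \|\tilde\theta_{k+1} - \hat\theta_k\| = \mu\|G_k^\top(x_{k+1}-\hat x_{1|k})\| \le \mu\|G_k\|\,\|x_{k+1}-\hat x_{1|k}\| \le \sqrt{\mu}\,\|\tilde x_{1|k} + w_k\|$, using $\|G_k\|\le 1/\sqrt\mu$ from~\eqref{probsetup_LMS_gain-definition} and $x_{k+1}-\hat x_{1|k} = \tilde x_{1|k}+w_k$; dividing by $\sqrt\mu$ gives the result.

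The main obstacle is bookkeeping the projection step so that exactly the coefficient $2c_\theta/\mu$ (and not a larger constant) multiplies $\|\Delta\theta_k\|$ in the final bound; the cleanest path uses the variational (obtuse-angle) characterization of Euclidean projection onto the convex set $\Theta$ together with $\theta_k\in\Theta$, plus possibly a reindexing of the parameter-drift sum, rather than crude triangle-inequality bounds which would overshoot the constant. Everything else — completing the square, invoking the gain condition~\eqref{probsetup_LMS_gain-definition}, and telescoping — is routine.
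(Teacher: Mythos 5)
Your overall route is essentially the paper's: you treat $\tfrac{1}{\mu}\|\hat\theta_k-\theta_k\|^2$ as a Lyapunov-like quantity, complete the square using the gain condition~\eqref{probsetup_LMS_gain-definition} to obtain the one-step bound $\|\tilde\theta_{k+1}-\theta_k\|^2\le\|\hat\theta_k-\theta_k\|^2-\mu\|\tilde x_{1|k}\|^2+\mu\|w_k\|^2$, invoke nonexpansiveness of the projection (with $\theta_k,\hat\theta_k\in\Theta$), account for the parameter drift, and telescope; your argument for~\eqref{probsetup_prop_LMS-theta-diff-eq} is exactly the paper's.

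The one genuine gap is the drift bookkeeping, which you flag but do not resolve. Your triangle-inequality estimate $\|\hat\theta_{k+1}-\theta_{k+1}\|^2\le\|\hat\theta_{k+1}-\theta_k\|^2+2c_\theta\|\Delta\theta_k\|+\|\Delta\theta_k\|^2$ leaves an extra $\|\Delta\theta_k\|^2$, so after dividing by $\mu$ you obtain a coefficient $3c_\theta/\mu$ (or your $4c_\theta/\mu$), not the stated $2c_\theta/\mu$. Moreover, the remedy you sketch (the obtuse-angle characterization of the projection plus a ``reindexing'') targets the wrong step: the only projection property needed is $\|\hat\theta_{k+1}-\theta_k\|\le\|\tilde\theta_{k+1}-\theta_k\|$, which you already have, and sharpening it does not remove the $\|\Delta\theta_k\|^2$. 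The paper's fix is to expand the square exactly and write the cross term against $\theta_{k+1}$ rather than $\theta_k$:
\begin{align*}
\|\hat\theta_{k+1}-\theta_{k+1}\|^2 &= \|\hat\theta_{k+1}-\theta_k\|^2 + 2\Delta\theta_k^\top\bigl(\theta_{k+1}-\hat\theta_{k+1}\bigr) - \|\Delta\theta_k\|^2 \\
&\le \|\hat\theta_{k+1}-\theta_k\|^2 + 2c_\theta\|\Delta\theta_k\|,
\end{align*}
where Cauchy--Schwarz with $\theta_{k+1},\hat\theta_{k+1}\in\Theta$ gives $\|\theta_{k+1}-\hat\theta_{k+1}\|\le c_\theta$ and the term $-\|\Delta\theta_k\|^2\le 0$ is simply dropped (equivalently, a reverse-triangle/difference-of-squares argument yields the same $2c_\theta$). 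Substituting this in place of your triangle-inequality step, your one-step inequality becomes exactly the paper's, and the telescoping you describe then delivers~\eqref{probsetup_prop_LMS-bound-eq} with the correct constant; no reindexing of the drift sum is needed.
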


\begin{proof}
    The proof follows the steps in~\cite[Lemma~5]{lorenzenRobustMPCRecursive2019} and is provided in Appendix~\ref{app:proofs}.
\end{proof}
Note that Proposition~\ref{probsetup_prop_LMS-bound} holds without assuming that a persistence of excitation condition holds. Furthermore, it only provides a bound on the prediction error, not on the error in the parameter estimate $\|\hat{\theta}_k-\theta_k\|$.

\subsection{Regularity Conditions} \label{ssec:syst_prop}
We impose mild regularity conditions on the system dynamics, the cost, and the auxiliary feedback law. To leverage convex design principles even when dealing with non-convex constraints, we define the convex and compact sets $\mathbb{Z}_x$ and $\mathbb{Z}_u$ such that
\begin{align}\label{def_Zu_Zx-eq}
    (x,v)\in\mathbb{Z}\Rightarrow x\in\mathbb{Z}_x,v\in\mathbb{Z}_u.
\end{align}
In case $\mathbb{Z}$ is convex, these two sets can simply be the projection of $\mathbb{Z}$ on the state and input space.
\begin{assumption}[Regularity] \label{assump_regularity}
The dynamics $f_{\mathrm{w}}$, the cost $\ell$, and the stabilizing feedback $\kappa$ 
are Lipschitz continuous, i.e., there exist constants $L_{\mathrm{dyn}},L_{\ell}, L_{\kappa} \geq 0$, such that the inequalities
\begin{subequations}\label{assump_regularity-eq}
\begin{align}
    \begin{split} 
    &\bquad \| f_\mathrm{w}(x,u,\theta,d)-f_\mathrm{w}(x',u',\theta',d')\|\\
    &\bquad \leq L_\mathrm{dyn} (\|x-x'\| + \|u-u'\| + \|\theta-\theta'\| + \|d-d'\|),
    \label{assump_regularity-eq-a}
    \end{split}\\[5pt]
    & \bquad \|\ell(x,u)-\ell(x',u')\| \leq L_{\ell}(\|x-x'\|+\|u-u'\|),
    \label{assump_regularity-eq-b}\\[5pt]
    \begin{split}
    &\bquad \|\kappa(x,z,v)-\kappa(x',z',v')\| \\
    &\bquad\bquad \qquad\qquad\quad\  \leq L_\kappa \left(\|x-x'\| + \|z-z'\| + \|v-v'\| \right), \label{assump_regularity-eq-c}
    \end{split}
\end{align}
\end{subequations}
hold for all $(x,u), (x',u'), (z,v), (z',v')\in\mathbb{Z}_x\times \mathbb{Z}_u$, and all $(\theta,d), (\theta',d')\in \Theta\times \mathbb{D}$.
\end{assumption}
This assumption is satisfied for continuously differentiable functions $f_\mathrm{w},\ell,\kappa$, given that $\mathbb{Z},\Theta, \mathbb{D}$ are compact (cf. Assumption~\ref{assump_bounded-disturb-param} and~\eqref{prob_setup:constraints-eq}). 

Furthermore, we impose a mild regularity condition on the equilibrium manifold of system~\eqref{nonlin_discrete_system-eq}.
\begin{assumption}[Steady-state manifold]\label{assump_equi-manifold}
    There exists a function $h:\, \mathbb{Z}_u\times \Theta \rightarrow \mathbb{R}^{n_x}$ 
    such that for any $\theta \in \Theta,\ v\in \mathbb{Z}_u$, the states $x=h(v,\theta),\ z^\mathrm{s}=h(v,\bar{\theta})$ satisfy
    \begin{align*} 
        \begin{split}
        &f (x,\kappa(x,z^\mathrm{s},v), \theta) = x, \qquad f(z^\mathrm{s},v,\bar{\theta})=z^\mathrm{s}.
        \end{split}
    \end{align*}
    Furthermore, $h$ is Lipschitz continuous, i.e., there exists a constant $L_\mathrm{h}\geq 0$, such that 
    \begin{align}
    \label{assump_equilib-manifold-eq}
        \|h(v,\theta)-h(v',\theta')\| &\leq L_\mathrm{h} (\|v-v'\| + \|\theta-\theta'\|),
    \end{align}
    holds for any $v,v'\in\mathbb{Z}_u,\ \theta,\theta'\in\Theta$.
\end{assumption}
Note that the true parameters $\theta_k$ are not known and Assumption~\ref{assump_equi-manifold} requires that, for each parameter, at least one equilibrium exists.
Such mapping $h$ exists naturally due to the implicit function theorem if the feedback $\kappa$ ensures exponential stability, see~\cite[Remark~1]{limonNonlinearMPCTracking2018}. 
\section{Proposed MPC Design} \label{sec:proposed_design}
In this section, we present our proposed adaptive economic MPC scheme. First, we discuss the elements of the MPC formulation (Section~\ref{ssec:ae-mpc_approach}) and then show how to design the terminal cost (Section~\ref{ssec:offline_design}).

\subsection{MPC Design}\label{ssec:ae-mpc_approach}
Our proposed robust adaptive economic MPC approach consists of three elements: 
\begin{itemize}
    \item We use the parameter estimate $\hat{\theta}_k$ from the projected LMS~\eqref{LMS_update-eqs} to obtain a certainty-equivalent prediction of the system's behavior $\hat{x}_{\cdot|k}$. The MPC minimizes the finite-horizon economic cost of this trajectory, consisting of the economic costs $\ell$ and the terminal cost $\ell_\mathrm{f}$.
    \item  Additionally, the economic cost at an artificial equilibrium  $(\hat{x}^\mathrm{s}_k,\hat{u}^\mathrm{s}_k)$ is minimized with some weighting $\beta\geq0$.
    \item To capture the effects of the parametric uncertainty and disturbances, we employ the robust MPC formulation from Section~\ref{ssec:robust_MPC}.
\end{itemize}

The robust adaptive economic MPC problem is given by
\begin{subequations}\label{MPC_scheme}
\begin{alignat}{2} 
    \bquad\!\! \min_{\substack{\hat{x}_{\cdot|k},\,z_{\cdot|k},\\v_{\cdot|k},\,v_k^{s},\\\mathbb{X}_{\cdot|k}}}  &
             \ \sum_{j=0}^{N-1}&& \ell(\hat{x}_{j|k},\,\hat{u}_{j|k})+\ell_\mathrm{f}(\hat{x}_{N|k},\hat{x}^\mathrm{s}_k,v^\mathrm{s}_k, \hat{\theta}_k) \nonumber\\[-15pt]
            & +\beta\, &&\ell(\hat{x}^\mathrm{s}_k,\hat{u}^\mathrm{s}_k)
            \label{MPC_scheme-a}\\[5pt]
     &\text{s.t.}&&\text{\eqref{robust-MPC_scheme-b}--\eqref{robust-MPC_scheme-g}}, \nonumber \\
    & && \hat{x}_{j+1|k} = f(\hat{x}_{j|k},\hat{u}_{j|k},\hat{\theta}_k), \label{MPC_scheme-d}\\
    & && \hat{u}_{j|k}=\kappa(\hat{x}_{j|k}, z_{j|k}, v_{j|k}),  \label{MPC_scheme-e}\\
    & && \hat{x}_{0|k} = x_k, \label{MPC_scheme-g}\\ 
    & && \hat{x}^\mathrm{s}_k = f(\hat{x}^\mathrm{s}_k, \hat{u}^\mathrm{s}_k,\hat{\theta}_k), \ \hat{u}^\mathrm{s}_k = \kappa(\hat{x}^\mathrm{s}_k,z_{N|k},v^\mathrm{s}_k),
    \label{MPC_scheme-h}\\
    & && \ell(\hat{x}^\mathrm{s}_k,\hat{u}^\mathrm{s}_k)\leq \lambda_k , \label{MPC_scheme-j}\\
    & && \qquad \qquad \forall j\in\mathbb{I}_{[0,N-1]}.\nonumber
\end{alignat}
\end{subequations}

The solution of~\eqref{MPC_scheme} is denoted by $\hat{x}_{\cdot|k}^\ast,\mathbb{X}_{\cdot|k}^\ast, z_{\cdot|k}^\ast, v_{\cdot|k}^\ast, v_k^{s\ast}, \hat{x}^\mathrm{s\ast}_k, \hat{u}^\mathrm{s\ast}_k$, and the corresponding value of the cost function~\eqref{MPC_scheme-a} by $\mathcal{J}^\ast_N(x_k,\hat{\theta}_k, \lambda_k)$. The closed-loop system is obtained by applying the control input 
\begin{align} \label{MPC-scheme_optimal-input}
    u_k = \kappa(x_k,z_{0|k}^\ast, v_{0|k}^\ast)
\end{align} to the system~\eqref{prob_setup:eq-dynamics}.

Similar to the robust MPC formulation~\eqref{robust-MPC-scheme}, we predict a nominal trajectory~\eqref{robust-MPC_scheme-d} with the nominal parameters $\bar{\theta}$ and a prediction tube~\eqref{robust-MPC_scheme-b} that satisfies the constraints for all parameters and disturbances~\eqref{robust-MPC_scheme-c}.
In contrast to the robust MPC scheme~\eqref{robust-MPC-scheme}, a certainty-equivalent economic cost is minimized.  
The corresponding trajectory $\hat{x}_{j|k}$ is the prediction of the state evolution for parameters $\hat{\theta}_k$ without disturbances. Furthermore, we  account for the estimated parameters by also minimizing the cost at the artificial reference $(\hat{x}^s_k,\hat{u}^\mathrm{s}_k)$ and for the long-term costs with a terminal cost $\ell_{\mathrm{f}}$.
The terminal cost $\ell_{\mathrm{f}}$ bounds the infinite-horizon cost of applying the feedback $\hat{u}_{j|k}=\kappa(\hat{x}_{j|k},z_{N|k},v^\mathrm{s}_k)$, $j\in\mathbb{I}_{\geq N}$ after the prediction horizon.  
The constant $\beta>0$ in the cost function~\eqref{MPC_scheme-a} weighs the artificial reference cost relative to the stage and terminal cost.

The function $h$ from Assumption~\ref{assump_equi-manifold} provides a mapping from the parameter estimate $\hat{\theta}_{k}$ and the \emph{previous} steady-input $v^\mathrm{s\ast}_{k-1}$ to a candidate steady-state $\hat{x}^\mathrm{s}_{k}=h(v^\mathrm{s\ast}_{k-1},\hat{\theta}_{k})$ with the corresponding input $\hat{u}^\mathrm{s}_{k} = \kappa(\hat{x}^\mathrm{s}_{k}, z^\mathrm{\ast}_{N|k-1}, v^\mathrm{s\ast}_{k-1})$.
We use this to update the value of $\lambda_k$ at each time step $k\in\mathbb{N}$ according to
\begin{align}
\label{eq:lambda_k}
    \lambda_k:=\ell(h(v^\mathrm{s\ast}_{k-1},\hat{\theta}_k),\hat{u}^\mathrm{s}).
\end{align} 
In combination with~\eqref{MPC_scheme-j}, this ensures that the cost at the artificial steady-state is not increased compared to the candidate solution.


\subsection{Offline Design} \label{ssec:offline_design}
In this section, we show how to design the terminal cost and then summarize the offline and online computations of the proposed MPC formulation. 
In contrast to standard economic MPC designs, we do not enforce a terminal set constraint that ensures that the prediction $\hat{x}_{N|k}$ is in a small neighborhood of the steady-state $\hat{x}_k^{\mathrm{s}}$~\cite[Assumption~6]{amritEconomicOptimizationUsing2011}. Instead, we design a terminal cost that is valid on the full constraint set.
We further denote $\kappa_{\mathrm{s}}(v,\theta) := \kappa(h(v,\theta), h(v,\bar{\theta}), v)$ and $\ell^\mathrm{s}(v,\theta) := \ell(h(v,\theta),\ \kappa_\mathrm{s}(v,\theta))$ to simplify the notation.

\begin{assumption}[Terminal cost] \label{assump_terminal-ingredients}
    There exists a uniform constant $L_\mathrm{f}\geq0$ such that for any $x,\tilde{x},x^\mathrm{s},\tilde{x}^\mathrm{s}\in\mathbb{Z}_x,\ v,\tilde{v}\in\mathbb{Z}_u$, and any $\theta, \tilde{\theta} \in \Theta$, it holds that
    \begin{subequations}
    \label{assumpt_terminal-cost-eqs}
    \begin{align}
        \label{assump_terminal-cost-decrease-eq}
        \begin{split}
           &\hspace{-2cm}\ell_\mathrm{f}(f(x,\kappa(x,z^\mathrm{s},v),\theta),\  \hat{x}^\mathrm{s},\ v,\ \theta) - \ell_\mathrm{f}(x,\ \hat{x}^\mathrm{s},\ v,\ \theta) \\
            &\bquad\bquad\, \leq -\ell(x,\, \kappa(x, z^\mathrm{s}, v))+ \ell^\mathrm{s}(v,\theta),
        \end{split}\\[5pt]
        \begin{split}
        \|\ell_{\mathrm{f}}(x,x^\mathrm{s},v,\theta) &- \ell_{\mathrm{f}}(\tilde{x}, \tilde{x}^\mathrm{s}, \tilde{v}, \tilde{\theta})\| \\
        &\hspace{-2cm}\leq L_\mathrm{f} \left( \!\|x- \tilde{x}\|  + \|x^\mathrm{s}-\tilde{x}^\mathrm{s}\| + \|v-\tilde{v}\| + \| \theta-\tilde{\theta} \|\!\right),
        \end{split} \label{assump_Lf}
    \end{align}
    \end{subequations}
    with $z^\mathrm{s} = h(v,\bar{\theta})$ and $\hat{x}^\mathrm{s} = h(v,\theta)$.
\end{assumption}

Inequality~\eqref{assump_terminal-cost-decrease-eq} corresponds to the standard inequality posed on terminal costs in economic MPC (cf.~\cite[Assumption~6]{amritEconomicOptimizationUsing2011}), but needs to be valid for all parameters $\theta\in\Theta$ and setpoints $x^\mathrm{s}$ to uncertain parameters.
Furthermore, $\ell_\mathrm{f}$ is Lipschitz-continuous~\eqref{assump_Lf}.

\subsubsection*{Constructing $\ell_\mathrm{f}$}
Next, we we show how to construct a terminal cost $\ell_\mathrm{f}$ that satisfies Assumption~\ref{assump_terminal-ingredients}.
In particular, we use a linear-quadratic terminal cost
\begin{align} \label{lin-quad_terminal-cost}
       \ell_\mathrm{f}(x,\hat{x}^\mathrm{s}, v^\mathrm{s}, \theta) &:= \|x-\hat{x}^\mathrm{s}\|^2_{P_\mathrm{f}} + p(\theta, \hat{x}^\mathrm{s}, v^\mathrm{s})^\top (x-\hat{x}^\mathrm{s}),
    \end{align}
with $P_\mathrm{f}\succ 0$. The matrix $P_{\mathrm{f}}$ is computed offline, whereas the linear term $p(\theta, \hat{x}^\mathrm{s}, v^\mathrm{s})$ has an analytical formula and can be included in Problem~\eqref{MPC_scheme}. 

In order to compute $P_\mathrm{f}$ and $p$ such that $\ell_{\mathrm{f}}$ satisfies Assumption~\ref{assump_terminal-ingredients}, we introduce four quantities:
First, we define the \emph{shifted cost} as 
\begin{align} \label{shifted_stage_cost}
    \bar{\ell}(x, v^\mathrm{s}, \theta) := \ell(x,\kappa(x,h(v^\mathrm{s},\bar{\theta}),v^\mathrm{s})) - \ell^\mathrm{s}(v^\mathrm{s},\theta).
\end{align}
Second, the Jacobian of the closed-loop system $f_\kappa (x,v,\theta) := f(x,\kappa(x,h(v,\bar{\theta}),v),\theta)$ is given by
    \begin{align} \begin{split}\label{offline_linearization_AK-def-eq}
        \bquad A_K&(x,v, {\theta}) := \left.\frac{\partial f_\kappa}{\partial x}\right|_{(x,v,\theta)}
    \end{split}
    \end{align}
Third, based on the shifted cost $\bar{\ell}$ and the Jacobian $A_K$, we can define the vector $p$ as
\begin{alignat}{2}
        \begin{split}
        &p(\theta, \hat{x}^s, v^\mathrm{s})  := \left[\left(I_{n_x} -  A_K(\hat{x}^s, v^\mathrm{s},{\theta})\right)^{-1}\right]^\top \cdot  \left.\frac{\partial \bar{\ell}}{\partial x}\right|^\top_{(\hat{x}^\mathrm{s}, v^\mathrm{s},\theta)}.\label{numex_p-vector-def}
        \end{split}
    \end{alignat}
Fourth, we use $H\geq0$ to denote an upper bound on the maximum eigenvalue of the Hessian of the closed-loop dynamics $\frac{\partial^2 [f_\kappa]_i}{\partial x^2}$, i.e., 
\begin{align}
    H &:= \max_{\substack{x\in\mathbb{Z}_x,\ v^\mathrm{s}\in\mathbb{Z}_u,\\ \theta \in \Theta,\ i\in\mathbb{I}_{[1,n_x]}}} \left\{
                \lambda_{\max} \left( \left.\frac{\partial^2 [f_\kappa]_i}{\partial x^2}\right|_{(x,v^\mathrm{s}, \theta)} \right)
            ;\ 0 \right\}.
        \label{term-cost_H_def-eq}
\end{align} 
The following proposition computes a matrix $P_\mathrm{f}$ such that the terminal cost~\eqref{lin-quad_terminal-cost} satisfies Assumption~\ref{assump_terminal-ingredients}.

\begin{proposition}[Terminal cost]          \label{prop_constr-term-ingred}
     Let Assumptions~\ref{assump_bounded-disturb-param}, \ref{assump_regularity}, and~\ref{assump_equi-manifold} hold and  
     suppose that the dynamics $f_\mathrm{w}$, the stage cost $\ell$, and the feedback $\kappa$ are twice continuously differentiable.     
    Consider any $Q\succeq 0$ and $\alpha\geq0$ satisfying
    \begin{subequations}\label{prop_term-cost-eqs}
         \begin{flalign} 
            Q &\succeq \left.\frac{\partial^2\bar{\ell}}{\partial x^2}\right|_{(x,v^\mathrm{s},\theta)} + \alpha I_{n_x},\quad \forall x\in\mathbb{Z}_x, v^\mathrm{s}\in \mathbb{Z}_u, \theta\in\Theta,  \label{Q_def-eq}\\
            \alpha &\geq  \sqrt{n_x}\cdot \frac{H}{2} \cdot \max_{\hat{x}^\mathrm{s},v^\mathrm{s}\in\mathbb{Z}_x, \theta\in\Theta} \|p(\theta, \hat{x}^s, v^\mathrm{s})\|, \label{alpha_def-eq}
            \\
            \text{and } P_\mathrm{f}&\succ 0 \text{ satisfying}  \nonumber\\
             A_K&(x,v^\mathrm{s},\theta)^\top P_\mathrm{f} A_K(x,v^\mathrm{s},\theta) - P_\mathrm{f} \preceq -Q,  \nonumber\\
             &\qquad \qquad \qquad \forall x\in\mathbb{Z}_x,\ v^\mathrm{s}\in\mathbb{Z}_u,\  \theta\in\Theta.  \label{numex_Lyapunov_inequality}
         \end{flalign}
    \end{subequations}
    Then, the terminal cost~\eqref{lin-quad_terminal-cost} satisfies Assumption~\ref{assump_terminal-ingredients}.
    
\end{proposition}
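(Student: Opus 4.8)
The plan is to verify the two requirements of Assumption~\ref{assump_terminal-ingredients} separately: the decrease inequality~\eqref{assump_terminal-cost-decrease-eq} is the substantive part, whereas the Lipschitz property~\eqref{assump_Lf} is a routine compactness argument. A preliminary observation that makes the whole computation go through is that every derivative with respect to $x$ below acts only on $f$, $\ell$, and $\kappa$---which are twice continuously differentiable by hypothesis---and never on $h$, since $h$ enters $f_\kappa$ only through the $x$-independent argument $h(v,\bar\theta)$; this is precisely why twice differentiability of $h$ is not needed.

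For the decrease~\eqref{assump_terminal-cost-decrease-eq}, fix $x\in\mathbb{Z}_x$, $v\in\mathbb{Z}_u$, $\theta\in\Theta$, put $z^\mathrm{s}=h(v,\bar\theta)$, $\hat x^\mathrm{s}=h(v,\theta)\in\mathbb{Z}_x$, $e:=x-\hat x^\mathrm{s}$, $e^+:=f_\kappa(x,v,\theta)-\hat x^\mathrm{s}$ and $p:=p(\theta,\hat x^\mathrm{s},v)$. By Assumption~\ref{assump_equi-manifold} one has $f_\kappa(\hat x^\mathrm{s},v,\theta)=\hat x^\mathrm{s}$, and from the definitions in~\eqref{shifted_stage_cost} one gets $\bar\ell(\hat x^\mathrm{s},v,\theta)=0$ and $-\ell(x,\kappa(x,z^\mathrm{s},v))+\ell^\mathrm{s}(v,\theta)=-\bar\ell(x,v,\theta)$. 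Substituting~\eqref{lin-quad_terminal-cost}, the claim reduces to showing
\[ \|e^+\|_{P_\mathrm{f}}^2-\|e\|_{P_\mathrm{f}}^2+p^\top(e^+-e)+\bar\ell(x,v,\theta)\le 0 . \]
I would bound the three groups of terms on the left as follows. \emph{Quadratic part:} by the fundamental theorem of calculus $e^+=f_\kappa(x,v,\theta)-f_\kappa(\hat x^\mathrm{s},v,\theta)=\mathcal{A}e$ with $\mathcal{A}:=\int_0^1 A_K(\hat x^\mathrm{s}+te,v,\theta)\,dt$ (the segment stays in $\mathbb{Z}_x$ by convexity), and since $A\mapsto A^\top P_\mathrm{f}A$ is operator convex for $P_\mathrm{f}\succeq0$, Jensen's inequality together with~\eqref{numex_Lyapunov_inequality} gives $\mathcal{A}^\top P_\mathrm{f}\mathcal{A}\preceq\int_0^1 A_K^\top P_\mathrm{f}A_K\,dt\preceq P_\mathrm{f}-Q$, hence $\|e^+\|_{P_\mathrm{f}}^2-\|e\|_{P_\mathrm{f}}^2\le -e^\top Qe$. \emph{Linear part:} a second-order Taylor expansion of $f_\kappa(\cdot,v,\theta)$ at $\hat x^\mathrm{s}$ gives $e^+=A_K(\hat x^\mathrm{s},v,\theta)e+r$ with $[r]_i=\tfrac12 e^\top\!\big(\left.\tfrac{\partial^2[f_\kappa]_i}{\partial x^2}\right|_{(\zeta_i,v,\theta)}\big)e$, $\zeta_i$ on the segment, and the defining relation~\eqref{numex_p-vector-def}---i.e.\ $p^\top\big(I_{n_x}-A_K(\hat x^\mathrm{s},v,\theta)\big)=\left.\tfrac{\partial\bar\ell}{\partial x}\right|_{(\hat x^\mathrm{s},v,\theta)}$---yields $p^\top(e^+-e)=-\left.\tfrac{\partial\bar\ell}{\partial x}\right|_{(\hat x^\mathrm{s},v,\theta)}e+p^\top r$. \emph{Shifted cost:} expanding $\bar\ell(\cdot,v,\theta)$ to second order at $\hat x^\mathrm{s}$, using $\bar\ell(\hat x^\mathrm{s},v,\theta)=0$ and~\eqref{Q_def-eq}, gives $\bar\ell(x,v,\theta)\le\left.\tfrac{\partial\bar\ell}{\partial x}\right|_{(\hat x^\mathrm{s},v,\theta)}e+\tfrac12 e^\top Qe-\tfrac{\alpha}{2}\|e\|^2$.

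Summing the three estimates, the first-order terms cancel, the $e^\top Qe$ contributions combine to $-\tfrac12 e^\top Qe\le0$ (using $Q\succeq0$), and one is left exactly with the requirement $p^\top r\le\tfrac{\alpha}{2}\|e\|^2$. Since $p^\top r=\tfrac12 e^\top\big(\sum_{i=1}^{n_x}[p]_i\left.\tfrac{\partial^2[f_\kappa]_i}{\partial x^2}\right|_{(\zeta_i,v,\theta)}\big)e$, this is where the Hessian bound~\eqref{term-cost_H_def-eq} enters: it gives $\|r\|\le\tfrac{\sqrt{n_x}}{2}H\|e\|^2$, hence $p^\top r\le\|p\|\,\|r\|\le\tfrac{\sqrt{n_x}}{2}H\|p\|\,\|e\|^2$, and the choice of $\alpha$ in~\eqref{alpha_def-eq} makes the right-hand side at most $\tfrac{\alpha}{2}\|e\|^2$. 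I expect this last step to be the main obstacle: it requires controlling the second-order remainder of $f_\kappa$ uniformly over the compact set $\mathbb{Z}_x\times\mathbb{Z}_u\times\Theta$ (which is what $H$ is for) while keeping all Taylor base points inside $\mathbb{Z}_x$, and it is here that the specific magnitudes of $\alpha$, $H$, and $\|p\|$ are pinned down; the rest is bookkeeping.

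For the Lipschitz bound~\eqref{assump_Lf}, I would argue on the compact domain $\mathbb{Z}_x\times\mathbb{Z}_x\times\mathbb{Z}_u\times\Theta$: the quadratic term $\|x-x^\mathrm{s}\|_{P_\mathrm{f}}^2$ is smooth, hence Lipschitz there; the coefficient $p(\theta,x^\mathrm{s},v)$ from~\eqref{numex_p-vector-def} is the product of $\big(I_{n_x}-A_K(x^\mathrm{s},v,\theta)\big)^{-1}$ and $\left.\tfrac{\partial\bar\ell}{\partial x}\right|^\top_{(x^\mathrm{s},v,\theta)}$, both of which are compositions of the $C^1$ maps $\partial f,\partial\kappa,\partial\ell$ with the Lipschitz map $h$ and are therefore Lipschitz on the compact domain, while $I_{n_x}-A_K$ is invertible (as implicit in~\eqref{numex_p-vector-def}) and uniformly so by continuity of $\det(I_{n_x}-A_K)$ on a compact set. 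Since a product of bounded Lipschitz functions is Lipschitz, $\ell_\mathrm{f}$ satisfies~\eqref{assump_Lf} with a suitable uniform $L_\mathrm{f}$; being routine, I would relegate this to an appendix.
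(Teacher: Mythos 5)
Your proposal is correct and follows essentially the same route as the paper's proof: a quadratic over-bound of the shifted cost $\bar{\ell}$ via a second-order expansion and~\eqref{Q_def-eq}, a Taylor expansion of $f_\kappa$ whose remainder is controlled by $H$ and absorbed through~\eqref{alpha_def-eq} after cancelling the first-order terms with the definition of $p$ in~\eqref{numex_p-vector-def}, the Lyapunov inequality~\eqref{numex_Lyapunov_inequality} for the quadratic part, and a compactness argument for~\eqref{assump_Lf}. The only difference is cosmetic: where the paper invokes contraction theory (together with convexity of $\mathbb{Z}_x$) to get $\|x^+-\hat{x}^\mathrm{s}\|^2_{P_\mathrm{f}}-\|x-\hat{x}^\mathrm{s}\|^2_{P_\mathrm{f}}\leq -\|x-\hat{x}^\mathrm{s}\|^2_Q$, you derive the same bound directly via the integral mean-value form and joint convexity of $A\mapsto A^\top P_\mathrm{f}A$, which is a valid, self-contained substitute.
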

\begin{proof}
    The proof is provided in Appendix~\ref{app:proofs}.
\end{proof}

Compared to the standard design procedure for the economic terminal cost $\ell_{\mathrm{f}}$ from~\cite{amritEconomicOptimizationUsing2011}, we provide a terminal cost that depends on $\theta$ and $\hat{x}^\mathrm{s}$. More importantly, we guarantee that $\ell_\mathrm{f}$ satisfies~\eqref{assumpt_terminal-cost-eqs} for all $x\in\mathbb{Z}_x$, not just in a small enough terminal set. 

To obtain the terminal cost, the constant $\alpha$ needs to be computed to obtain the matrix $Q$. Thereby, we compute a global upper bound on the Hessian of the shifted cost. Solving the linear matrix inequality~\eqref{numex_Lyapunov_inequality} yields the matrix $P_\mathrm{f}$. The vector $p$ must be evaluated online.

\begin{remark}[State-dependent matrices $P_\mathrm{f}(x)$]
    Proposition~\ref{prop_constr-term-ingred} considers a constant matrix $P_\mathrm{f}$ for simplicity of exposition. 
    We conjecture that a state-dependent matrix $P_\mathrm{f}(x)$ can also be used, similar to~\cite[Lemma~1]{kohlerNonlinearModelPredictive2020},~\cite[Corollary~2]{kohlerPeriodicOptimalControl2020}.
\end{remark}

\begin{remark}[Terminal cost] \label{rem_termin_ingred}
Appendix~\ref{ssec:app_SDP} shows how to constructively compute the terminal cost $\ell_\mathrm{f}$. The corresponding semi-automated code is provided online: \\\texttt{\url{https://doi.org/10.3929/ethz-c-000792397}}. \\Choosing a larger (conservative) terminal cost will yield a worse transient performance, similar to the effect of a larger terminal costs in any standard MPC implementation~\cite[Thm.~5.22]{grune2017book}.
Practically, the terminal cost $\ell_{\mathrm{f}}$ can be approximated by using an extended horizon where a fixed stabilizing feedback $\kappa$ is simulated~\cite{liuEconomicModelPredictive2016}.
More broadly, the relevance of the terminal ingredients for MPC in industrial implementations is discussed in~\cite{mayneApologiaStabilisingTerminal2013}.
\end{remark}

A summary of the offline design and the online operation is provided by Algorithm~\ref{MPC_algorithm}. Compared to a robust economic MPC scheme~\cite{bayerTubebasedRobustEconomic2014}, the online solution of the adaptive economic MPC problem~\eqref{MPC_scheme} additionally requires the LMS estimator~\eqref{LMS_update-eqs}, the prediction of the certainty-equivalent trajectory~\eqref{MPC_scheme-d}, and the optimization over the steady-state $\hat{x}^{\mathrm{s}}_k$ (as the optimal steady-state is not known beforehand). The added computational and design complexity tends to be negligible. 
\begin{algorithm}[ht] 
\caption{Robust Adaptive Economic MPC}\label{MPC_algorithm}
\begin{algorithmic}[0]
\State Compute $\mu$ according to~\eqref{probsetup_LMS_gain-definition}.
\State Choose a tube scheme (i.e., $\Phi$ and $\mathbb{X}_{\cdot|k}$) and corresponding auxiliary feedback $\kappa$ in agreement with Assumption~\ref{assump_Phi-function}.
\State Compute terminal cost $\ell_{\mathrm{f}}$ 
(e.g., Proposition~\ref{prop_constr-term-ingred}).
\State Set $\lambda_0=\infty$.
\For{$k\in\mathbb{N}$}
    \State Measure the state $x_k$.
    \State Update $\hat{\theta}_k \in \Theta$ using the LMS estimator~\eqref{LMS_update-eqs}.
    \State Update $\lambda_k$ using~\eqref{eq:lambda_k}.
    \State Solve the optimization problem~\eqref{MPC_scheme}.
    \State Apply the control input~\eqref{MPC-scheme_optimal-input}.
\EndFor
\end{algorithmic}
\end{algorithm}

\section{Theoretical Analysis} \label{sec:theo_analysis}
First, we present our main result, a general transient performance bound (Section~\ref{ssec:transient_perf}). 
Subsequently, we consider disturbances with finite energy and parameter variations with finite path length to derive intuitive bounds on the asymptotic average performance (Section~\ref{ssec:asymp_performance}), including a suboptimality bound (Section~\ref{ssec:subopt_guarantee}).
Lastly, we relate the obtained results to the state-of-the art (Section~\ref{ssec:discuss}).

\subsection{Transient Performance} \label{ssec:transient_perf}
We are now ready to state the main result, the transient performance guarantee for the closed-loop system.

\begin{theorem}[Transient performance bound] \label{thm_transient}
    Let Assumptions~\ref{assump_bounded-disturb-param}--\ref{assump_terminal-ingredients} hold and suppose that Problem~\eqref{MPC_scheme} is feasible at time $k=0$. Then, Problem~\eqref{MPC_scheme} is recursively feasible and the constraints~\eqref{prob_setup:constraints-eq} are robustly satisfied for the closed-loop system resulting from Algorithm~\ref{MPC_algorithm}. Furthermore, there exist uniform constants $C_\mathcal{J}, C_A, L_\mathrm{s}\geq 0$ such that, for all $T\in\mathbb{N}$, the closed-loop cost satisfies
    \begin{align}\label{thm_transient-eq}
        &\sum_{k=0}^{T-1}\left[\ell(x_k,u_k)-\ell(\hat{x}^\mathrm{s\ast}_k, \hat{u}^\mathrm{s\ast}_k)\right] \nonumber \\
        & \leq (C_A + \sqrt{\mu}\beta L_\mathrm{s})\cdot \left(2\sqrt{T} |w_{[0,T-1]}|_{\mathcal{L}_2} \vphantom{\sqrt{\frac{T}{\mu}}} \right.\left.+\sqrt{\frac{T}{\mu}}\|\hat{\theta}_0-\theta_0\| \right.\nonumber\\
        &\quad \left. + \sqrt{\frac{2Tc_\theta}{\mu}}\sum_{k=0}^{T-1}  \sqrt{ \|\Delta\theta_k\|}\right) + C_\mathcal{J}.
    \end{align}
\end{theorem}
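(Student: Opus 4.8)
\textbf{Proof plan for Theorem~\ref{thm_transient}.}

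The plan is to follow the standard economic MPC dissipativity-style argument, but with two twists: (i) the prediction uses the \emph{estimated} parameters $\hat\theta_k$ rather than the true ones, so the ``model mismatch'' must be accounted for via the LMS bounds of Proposition~\ref{probsetup_prop_LMS-bound}; and (ii) the artificial reference cost $\ell(\hat x^\mathrm{s}_k,\hat u^\mathrm{s}_k)$ is time-varying and must be handled with the candidate-solution/$\lambda_k$ mechanism. First I would establish recursive feasibility and robust constraint satisfaction: the constraints~\eqref{robust-MPC_scheme-b}--\eqref{robust-MPC_scheme-g} are exactly those of the robust MPC scheme~\eqref{robust-MPC-scheme}, so Lemma~1 applies verbatim once I exhibit a feasible candidate at time $k+1$. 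The candidate is the usual shifted sequence: keep $z_{\cdot|k+1}, v_{\cdot|k+1}, \mathbb{X}_{\cdot|k+1}$ as the tail of the time-$k$ solution appended with the terminal steady-state input $v^{\mathrm{s}\ast}_k$, set $v^\mathrm{s}_{k+1}=v^{\mathrm{s}\ast}_k$, take $\hat x^\mathrm{s}_{k+1}=h(v^{\mathrm{s}\ast}_k,\hat\theta_{k+1})$ and $\hat u^\mathrm{s}_{k+1}=\kappa(\hat x^\mathrm{s}_{k+1},z^\ast_{N|k},v^{\mathrm{s}\ast}_k)$ per Assumption~\ref{assump_equi-manifold}, and define $\hat x_{\cdot|k+1}$ by the certainty-equivalent dynamics~\eqref{MPC_scheme-d} started at $x_{k+1}$; constraint~\eqref{MPC_scheme-j} holds because $\lambda_{k+1}$ is defined in~\eqref{eq:lambda_k} precisely as the cost at this candidate steady-state.

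The core step is the value-function decrease inequality. Using the candidate above together with the terminal-cost decrease property~\eqref{assump_terminal-cost-decrease-eq}, I would show
\begin{align*}
\mathcal{J}^\ast_N(x_{k+1},\hat\theta_{k+1},\lambda_{k+1}) - \mathcal{J}^\ast_N(x_k,\hat\theta_k,\lambda_k)
&\leq -\ell(\hat x^\ast_{0|k},\hat u^\ast_{0|k}) + \ell^\mathrm{s}(v^{\mathrm{s}\ast}_k,\hat\theta_k) + \beta\big(\ell(\hat x^\mathrm{s}_{k+1},\hat u^\mathrm{s}_{k+1})-\ell(\hat x^{\mathrm{s}\ast}_k,\hat u^{\mathrm{s}\ast}_k)\big) + \delta_k,
\end{align*}
where $\delta_k$ collects all error terms arising because (a) the true successor state $x_{k+1}=f_\mathrm{w}(x_k,u_k,\theta_k,d_k)$ differs from the predicted $\hat x^\ast_{1|k}=f(x_k,u_k,\hat\theta_k)$ by $\tilde x_{1|k}+w_k$, and (b) the parameter estimate jumps from $\hat\theta_k$ to $\hat\theta_{k+1}$. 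Each piece of $\delta_k$ is bounded by invoking Lipschitz continuity of $f_\mathrm{w}$, $\ell$, $\kappa$, $\ell_\mathrm{f}$, and $h$ (Assumptions~\ref{assump_regularity}, \ref{assump_equi-manifold}, \ref{assump_terminal-ingredients}) plus a Gr\"onwall-type propagation of the one-step error $\|\tilde x_{1|k}+w_k\|$ through the $N$-step certainty-equivalent trajectory, and by using~\eqref{probsetup_prop_LMS-theta-diff-eq} to bound $\|\hat\theta_{k+1}-\hat\theta_k\|\le\sqrt\mu\,\|\tilde x_{1|k}+w_k\|$. This yields $\delta_k \leq (C_A + \sqrt\mu\beta L_\mathrm{s})\,\|\tilde x_{1|k}+w_k\|$ for suitable constants absorbing $N$, the Lipschitz constants, and the constraint-set diameters; the constant $L_\mathrm{s}$ captures the sensitivity of $\ell(\hat x^\mathrm{s}_k,\hat u^\mathrm{s}_k)$ to $\hat\theta$-changes. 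I would also argue (using boundedness of $\ell$ on the compact set $\mathbb{Z}$ and of $\ell_\mathrm{f}$) that $\mathcal{J}^\ast_N$ is uniformly bounded, giving the additive constant $C_\mathcal{J}$.

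Next I would telescope the decrease inequality from $k=0$ to $T-1$. The $\ell^\mathrm{s}(v^{\mathrm{s}\ast}_k,\hat\theta_k)$ term combines with $\ell(\hat x^\ast_{0|k},\hat u^\ast_{0|k})=\ell(x_k,u_k)$ (since $\hat x_{0|k}=x_k$ and $\hat u^\ast_{0|k}=\kappa(x_k,z^\ast_{0|k},v^\ast_{0|k})=u_k$) and with the artificial-reference stage cost; crucially $\ell^\mathrm{s}(v^{\mathrm{s}\ast}_k,\hat\theta_k)=\ell(\hat x^{\mathrm{s}\ast}_k,\hat u^{\mathrm{s}\ast}_k)$ is exactly the quantity subtracted on the left-hand side of~\eqref{thm_transient-eq}, and the $\beta$-weighted telescoping term $\sum_k[\ell(\hat x^\mathrm{s}_{k+1},\hat u^\mathrm{s}_{k+1})-\ell(\hat x^{\mathrm{s}\ast}_k,\hat u^{\mathrm{s}\ast}_k)]$ collapses up to a bounded remainder (optimality of $\hat x^{\mathrm{s}\ast}_k$ over the candidate $\hat x^\mathrm{s}_k$, by~\eqref{MPC_scheme-j}). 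What remains is $\sum_{k=0}^{T-1}[\ell(x_k,u_k)-\ell(\hat x^{\mathrm{s}\ast}_k,\hat u^{\mathrm{s}\ast}_k)] \leq (C_A+\sqrt\mu\beta L_\mathrm{s})\sum_{k=0}^{T-1}\|\tilde x_{1|k}+w_k\| + C_\mathcal{J}$. Finally, I bound $\sum_{k=0}^{T-1}\|\tilde x_{1|k}+w_k\|$ by Cauchy--Schwarz, $\sum_k \|\tilde x_{1|k}+w_k\| \le \sqrt{T}\,\big(\sum_k\|\tilde x_{1|k}+w_k\|^2\big)^{1/2} \le \sqrt{T}\,(\sqrt{2\sum_k\|\tilde x_{1|k}\|^2}+\sqrt{2\sum_k\|w_k\|^2})$, then substitute the LMS bound~\eqref{probsetup_prop_LMS-bound-eq} and use $\sqrt{a+b+c}\le\sqrt a+\sqrt b+\sqrt c$ together with $\sqrt{\sum_k\|\Delta\theta_k\|}\le\sum_k\sqrt{\|\Delta\theta_k\|}$ to arrive at the stated form with the $2\sqrt T|w_{[0,T-1]}|_{\mathcal{L}_2}$, $\sqrt{T/\mu}\|\hat\theta_0-\theta_0\|$, and $\sqrt{2Tc_\theta/\mu}\sum_k\sqrt{\|\Delta\theta_k\|}$ terms; the various $\sqrt2$ and rescaling factors are absorbed into $C_A$.

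\textbf{Main obstacle.} The delicate part is the bookkeeping in $\delta_k$: propagating the one-step prediction/disturbance error $\|\tilde x_{1|k}+w_k\|$ through the $N$-step certainty-equivalent prediction and through the terminal cost \emph{and} through the $h(\cdot,\hat\theta_{k+1})$-dependence of the candidate artificial reference, while keeping all constants uniform (independent of $k$ and $T$) and correctly isolating the $\beta L_\mathrm{s}\sqrt\mu$ coefficient. This requires carefully chaining the Lipschitz constants $L_\mathrm{dyn}, L_\kappa, L_\mathrm{h}, L_\ell, L_\mathrm{f}$ with a geometric factor like $\sum_{j=0}^{N-1}(L_\mathrm{dyn}(1+L_\kappa))^j$ and verifying that the predicted trajectories stay within $\mathbb{Z}_x\times\mathbb{Z}_u$ so the Lipschitz bounds are applicable — which is where robust constraint satisfaction and the convex over-approximations $\mathbb{Z}_x,\mathbb{Z}_u$ enter.
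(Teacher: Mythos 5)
Your plan follows essentially the same route as the paper's proof: shifted candidate with $v^\mathrm{s}_{k+1}=v^{\mathrm{s}\ast}_k$ and $\hat{x}^\mathrm{s}_{k+1}=h(v^{\mathrm{s}\ast}_k,\hat{\theta}_{k+1})$, terminal-cost decrease plus Lipschitz/geometric propagation of the one-step error yielding a per-step bound $(C_A+\sqrt{\mu}\beta L_\mathrm{s})\|\tilde{x}_{1|k}+w_k\|$, telescoping, and Cauchy--Schwarz combined with the LMS bounds. The only cosmetic deviation is the final step: split $\|\tilde{x}_{1|k}+w_k\|\leq\|\tilde{x}_{1|k}\|+\|w_k\|$ \emph{before} applying Cauchy--Schwarz (as the paper does) so the stated constants come out exactly, since the $\sqrt{2}$ factors from your ordering multiply the whole bracket and hence cannot be absorbed into $C_A$ alone without also inflating the $\sqrt{\mu}\beta L_\mathrm{s}$ part.
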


\begin{proof}
    The proof is provided in Appendix~\ref{app:proofs}.
\end{proof}
Theorem~\ref{thm_transient} relates the transient performance of the closed-loop system from Algorithm~\ref{MPC_algorithm} to the performance at a certainty-equivalent, artificial steady-state $(\hat{x}^\mathrm{s}_k,\hat{u}^\mathrm{s}_k)$.

Inequality~\eqref{thm_transient-eq} is a qualitative performance bound, which scales affinely with the $\mathcal{L}_2$-norm of the realized disturbances $|w_{[0,T-1]}|_{\mathcal{L}_2}$, affinely with the initial parameter estimation error $\|\hat{\theta}_0-\theta_0\|$, and with the variation of the true parameters $\Delta \theta_k$. 
The constant $C_\mathcal{J}$ bounds the optimal cost $\mathcal{J}_M^\ast(x_0,\theta_0,\lambda_0)$ at initialization, see~\eqref{eq:C_J-definition}. 
The constants $L_\mathrm{s}$ and $C_A$ depend on the continuity of the steady-state cost $\ell^\mathrm{s}$, cf.~\eqref{transient-proof_eq-c}, and the continuity of the dynamics (see Prop.~\ref{prop_consecutive-distances}), respectively.
While large weights $\beta$ deteriorate the transient performance, they improve the asymptotic average cost (cf. Cor.~\ref{corr_subopt}). Thus, choosing $\beta$ is a trade-off between the long-term asymptotic and the short-term transient performance.
Similarly, if the terminal cost $\ell_\mathrm{f}$ satisfies~\eqref{assump_Lf} only up to a tolerance $\varepsilon_\mathrm{f}>0$ (cf. Remark~\ref{rem_termin_ingred}), the transient performance also scales with this tolerance.

Assumption~\ref{assump_bounded-disturb-param} implies that $|w_{[0,T-1]}|_{\mathcal{L}_2}\leq \hat{w}\sqrt{T}$ with $\hat{w}=\max_{d\in\mathbb{D},(x,u)\in\mathbb{Z}} \|E(x,u)d\|$. Thus, the asymptotic average performance scales linearly with $\hat{w}$.

\subsection{Asymptotic Average Performance} \label{ssec:asymp_performance}
In the following, we derive a more intuitive asymptotic performance bound for the special case of finite-energy disturbances $w_k$ and finite path length variations of the parameters $\theta_k$. 
\begin{assumption}[Finite-energy disturbances, finite parameter path length]\label{assump_finite-energy}
    There exist constants $S_{\mathrm{w}},\,S_\theta <\infty$ such that
    \begin{subequations}
    \begin{align} 
        \lim_{T\rightarrow\infty} |w_{[0,T]}|_{\mathcal{L}_2}^2 = \lim_{T\rightarrow\infty} \sum_{k=0}^T \|w_k\|^2 &\leq S_{\mathrm{w}}, \label{assump_finite-energy_parameter-eq-a}\\
        \lim_{T\rightarrow \infty}\sum_{k=0}^T \|\theta_{k+1}-\theta_k\| &\leq S_\theta. \label{assump_finite-energy_parameter-eq-b}
    \end{align} \end{subequations}
\end{assumption}
This assumption is stronger than assuming bounded disturbances in Assumption~\ref{assump_bounded-disturb-param} but it allows us to relate our asymptotic average performance to existing results that assume zero model mismatch. In particular, Assumption~\ref{assump_finite-energy} implies that the averaged prediction error tends to zero and can be used to study the convergence of the artificial steady-state.
\begin{lemma}[Asymptotic average of one-step prediction error]\label{lemma_asymp-lin-bound}
    Let Assumptions~\ref{assump_bounded-disturb-param} and~\ref{assump_finite-energy} hold and let $(x_k,u_k)\in \mathbb{Z}$ for all $k\in\mathbb{N}$. Then, the asymptotic average prediction error is zero, i.e.,
    \begin{align} \label{lemma_asymp-lin-bound_equation}
         \lim_{T \rightarrow \infty} \sum_{k=0}^{T-1} \frac{ \|\tilde{x}_{1|k}\| + \|w_k\|}{T} =0.
    \end{align}
\end{lemma}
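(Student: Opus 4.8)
The plan is to combine Proposition~\ref{probsetup_prop_LMS-bound} with the finite-energy and finite-path-length assumptions, and then invoke a Cauchy--Schwarz / Kronecker-type argument to pass from summability of the squares (and of $\|\Delta\theta_k\|$) to the vanishing of the Cesàro average of $\|\tilde{x}_{1|k}\| + \|w_k\|$. First I would note that, since $(x_k,u_k)\in\mathbb{Z}$ for all $k$, Proposition~\ref{probsetup_prop_LMS-bound} applies and gives, for every $T$,
\begin{align*}
    \sum_{k=0}^{T} \|\tilde{x}_{1|k}\|^2 \leq \frac{1}{\mu}\|\hat\theta_0-\theta_0\|^2 + \sum_{k=0}^{T}\|w_k\|^2 + \frac{2c_\theta}{\mu}\sum_{k=0}^{T}\|\Delta\theta_k\|.
\end{align*}
Under Assumption~\ref{assump_finite-energy}, the right-hand side is bounded uniformly in $T$ by a finite constant $S := \tfrac{1}{\mu}\|\hat\theta_0-\theta_0\|^2 + S_\mathrm{w} + \tfrac{2c_\theta}{\mu}S_\theta$. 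Hence $\sum_{k=0}^{\infty}\|\tilde{x}_{1|k}\|^2 \leq S < \infty$, i.e.\ the sequence $\|\tilde{x}_{1|k}\|$ is square-summable, and likewise $\sum_{k=0}^{\infty}\|w_k\|^2 \leq S_\mathrm{w} < \infty$ directly from~\eqref{assump_finite-energy_parameter-eq-a}.

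Next I would convert square-summability into a vanishing Cesàro average. For any square-summable nonnegative sequence $(a_k)$, Cauchy--Schwarz gives $\sum_{k=0}^{T-1} a_k \leq \sqrt{T}\,\big(\sum_{k=0}^{T-1} a_k^2\big)^{1/2} \leq \sqrt{T}\,\big(\sum_{k=0}^{\infty} a_k^2\big)^{1/2}$, so $\tfrac{1}{T}\sum_{k=0}^{T-1} a_k \leq \tfrac{1}{\sqrt T}\big(\sum_{k=0}^{\infty} a_k^2\big)^{1/2}\to 0$. (Alternatively, one can use that the tail of a convergent series vanishes together with Kronecker's lemma; the Cauchy--Schwarz route is cleaner here because $S$ is already an explicit uniform bound.) Applying this with $a_k=\|\tilde{x}_{1|k}\|$ and then with $a_k=\|w_k\|$, and adding the two bounds, yields
\begin{align*}
    \frac{1}{T}\sum_{k=0}^{T-1}\big(\|\tilde{x}_{1|k}\| + \|w_k\|\big) \leq \frac{\sqrt{S}+\sqrt{S_\mathrm{w}}}{\sqrt{T}} \xrightarrow[T\to\infty]{} 0,
\end{align*}
which is exactly~\eqref{lemma_asymp-lin-bound_equation}.

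I do not expect any serious obstacle here: the only mild subtlety is making sure the bound from Proposition~\ref{probsetup_prop_LMS-bound} is used with the limiting (uniform in $T$) constants supplied by Assumption~\ref{assump_finite-energy}, so that the partial sums $\sum_{k=0}^{T}\|\tilde{x}_{1|k}\|^2$ are bounded \emph{uniformly} in $T$ rather than merely finite for each fixed $T$ — this is what legitimizes replacing $\sum_{k=0}^{T-1}a_k^2$ by the convergent infinite sum inside the Cauchy--Schwarz step. Everything else is a one-line application of Cauchy--Schwarz (or Kronecker's lemma), so the argument is short.
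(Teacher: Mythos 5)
Your proof is correct, and its first half coincides with the paper's: both use Proposition~\ref{probsetup_prop_LMS-bound} together with Assumption~\ref{assump_finite-energy} and compactness of $\Theta$ (so that $\tfrac{1}{\mu}\|\hat\theta_0-\theta_0\|^2<\infty$) to get a bound on $\sum_{k}\|\tilde{x}_{1|k}\|^2$ that is uniform in $T$, plus square-summability of $\|w_k\|$ directly from~\eqref{assump_finite-energy_parameter-eq-a}. Where you differ is the second step: the paper converts square-summability into a vanishing Ces\`aro average by invoking the general series lemma (Lemma~\ref{lemma_infinite-series}, borrowed from~\cite{solopertoGuaranteedClosedLoopLearning2023}) with $a_k=\|w_k\|^2$ resp.\ $a_k=\|\tilde{x}_{1|k}\|^2$ and $\alpha(s)=\sqrt{s}$, whereas you argue directly via Cauchy--Schwarz, $\sum_{k=0}^{T-1}a_k\leq\sqrt{T}\,(\sum_{k=0}^{\infty}a_k^2)^{1/2}$. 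Your route is more elementary and yields an explicit $O(1/\sqrt{T})$ decay rate of the average (it is in fact the same $\|\cdot\|_1\leq\sqrt{T}\|\cdot\|_2$ trick the paper already uses in Lemma~\ref{lemma_accum_x_tilde}); it also avoids the uniform bound $a_k\leq a_{\max}$ that Lemma~\ref{lemma_infinite-series} requires (which the paper gets implicitly from compactness of $\mathbb{Z}$, $\Theta$, $\mathbb{D}$). The paper's route buys generality — the cited lemma handles any $\mathcal{K}_\infty$ transformation of a summable sequence, not just the square root — and reuses an existing result, but for this particular statement both arguments are equally valid.
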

\begin{proof}
    The proof is provided in Appendix~\ref{appendix_aux-results}.
\end{proof}
Using Lemma~\ref{lemma_asymp-lin-bound}, we can bound the asymptotic average performance of the closed-loop system.
\begin{corollary}[Asymptotic average performance]\label{corr_asymptotic_performance}
    Suppose the conditions in Theorem~\ref{thm_transient} are fulfilled and also Assumption~\ref{assump_finite-energy} holds. Then, the closed-loop performance satisfies
    \begin{align} \label{corr_asymptotic_performance-eq}
        &\limsup_{T\rightarrow\infty} \sum_{k=0}^{T-1} \frac{  \ell(\hat{x}_k,\hat{u}_k)  }{T} \leq \lambda_\infty,
    \end{align}
    where $\displaystyle \lambda_\infty:=\limsup_{k\rightarrow\infty}\lambda_k = \limsup_{k\rightarrow\infty}\ell(\hat{x}^\mathrm{s\ast}_k,\hat{u}^\mathrm{s\ast}_k)$.
\end{corollary}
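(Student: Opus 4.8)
The plan is to divide the transient bound from Theorem~\ref{thm_transient} by $T$ and take the limit superior, establishing that every term on the right-hand side of~\eqref{thm_transient-eq} is $o(T)$, hence vanishes in the time-average. First I would rewrite the left-hand side of~\eqref{thm_transient-eq}: note that $\ell(x_k,u_k)$ is the \emph{realized} closed-loop stage cost with the true disturbance, whereas the corollary's left-hand side involves $\ell(\hat{x}_k,\hat{u}_k)$, the certainty-equivalent prediction $\hat{x}_k := \hat{x}_{0|k}^\ast = x_k$ with input $\hat{u}_k := \hat{u}_{0|k}^\ast = \kappa(x_k,z_{0|k}^\ast,v_{0|k}^\ast) = u_k$. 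So in fact $\ell(\hat{x}_k,\hat{u}_k) = \ell(x_k,u_k)$ by~\eqref{MPC_scheme-g} and~\eqref{MPC-scheme_optimal-input}, and the two left-hand sides coincide. It then suffices to show
\begin{align*}
    \limsup_{T\to\infty}\frac{1}{T}\sum_{k=0}^{T-1}\Big[\ell(x_k,u_k)-\ell(\hat{x}^\mathrm{s\ast}_k,\hat{u}^\mathrm{s\ast}_k)\Big]\leq 0
    \quad\text{and}\quad
    \limsup_{T\to\infty}\frac{1}{T}\sum_{k=0}^{T-1}\ell(\hat{x}^\mathrm{s\ast}_k,\hat{u}^\mathrm{s\ast}_k)\leq\lambda_\infty,
\end{align*}
and combine the two via $\limsup(a_T+b_T)\leq\limsup a_T+\limsup b_T$.

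For the first inequality I would bound each term of~\eqref{thm_transient-eq} after division by $T$. The additive constant $C_\mathcal{J}/T\to 0$ trivially. For the disturbance term, $\tfrac{1}{T}\cdot 2\sqrt{T}\,|w_{[0,T-1]}|_{\mathcal{L}_2} = 2\,|w_{[0,T-1]}|_{\mathcal{L}_2}/\sqrt{T}$, and by Assumption~\ref{assump_finite-energy} the numerator is bounded by $\sqrt{S_\mathrm{w}}<\infty$, so this $\to 0$. For the initial-error term, $\tfrac{1}{T}\sqrt{T/\mu}\,\|\hat\theta_0-\theta_0\| = \|\hat\theta_0-\theta_0\|/\sqrt{\mu T}\to 0$. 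The parameter-variation term $\tfrac{1}{T}\sqrt{2Tc_\theta/\mu}\sum_{k=0}^{T-1}\sqrt{\|\Delta\theta_k\|}$ is the one requiring care: I would invoke Lemma~\ref{lemma_asymp-lin-bound}, or argue directly using $\sum_{k=0}^{T-1}\sqrt{\|\Delta\theta_k\|}\leq \sqrt{T}\sqrt{\sum_{k=0}^{T-1}\|\Delta\theta_k\|}\leq\sqrt{T}\sqrt{S_\theta}$ by Cauchy--Schwarz and Assumption~\ref{assump_finite-energy}\eqref{assump_finite-energy_parameter-eq-b}, so the whole term is $\leq \sqrt{2c_\theta S_\theta/\mu}/\sqrt{T}\to 0$. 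Since $C_A+\sqrt{\mu}\beta L_\mathrm{s}$ is a finite constant, the entire right-hand side divided by $T$ tends to zero.

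For the second inequality, I would use $\lambda_\infty=\limsup_{k\to\infty}\ell(\hat{x}^\mathrm{s\ast}_k,\hat{u}^\mathrm{s\ast}_k)$: by definition of $\limsup$, for every $\varepsilon>0$ there is $K_\varepsilon$ with $\ell(\hat{x}^\mathrm{s\ast}_k,\hat{u}^\mathrm{s\ast}_k)\leq\lambda_\infty+\varepsilon$ for all $k\geq K_\varepsilon$; splitting the Cesàro sum at $K_\varepsilon$, the finite head contributes $O(1/T)\to 0$ (the stage cost is bounded on the compact $\mathbb{Z}$ via Lipschitz continuity), and the tail is $\leq\lambda_\infty+\varepsilon$ on average, so the $\limsup$ is $\leq\lambda_\infty+\varepsilon$; let $\varepsilon\to 0$. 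The main obstacle is the bookkeeping around the parameter-variation term — making sure the Cauchy--Schwarz step genuinely yields the needed $1/\sqrt{T}$ decay uniformly in $T$ and that one is allowed to pull the finite constant $C_A+\sqrt\mu\beta L_\mathrm{s}$ out of the $\limsup$ — but this is routine given Assumption~\ref{assump_finite-energy}; everything else is elementary manipulation of the already-established Theorem~\ref{thm_transient}.
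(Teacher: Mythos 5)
Your overall strategy---dividing \eqref{thm_transient-eq} by $T$ and splitting off the Ces\`aro average of $\ell(\hat{x}^{\mathrm{s}\ast}_k,\hat{u}^{\mathrm{s}\ast}_k)$---is genuinely different from the paper, which instead returns to the per-step inequality \eqref{thm_transient-proof_intermediate} inside the proof of Theorem~\ref{thm_transient} and combines it with Lemma~\ref{lemma_asymp-lin-bound} and a Stolz--Ces\`aro argument on $\lambda_k$. However, the step you yourself flagged as the delicate one contains a genuine error. After dividing by $T$, the parameter-variation term reads $(C_A+\sqrt{\mu}\beta L_\mathrm{s})\sqrt{2c_\theta/(\mu T)}\sum_{k=0}^{T-1}\sqrt{\|\Delta\theta_k\|}$. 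Your Cauchy--Schwarz bound $\sum_{k=0}^{T-1}\sqrt{\|\Delta\theta_k\|}\leq\sqrt{T}\sqrt{S_\theta}$ therefore yields the \emph{constant} $(C_A+\sqrt{\mu}\beta L_\mathrm{s})\sqrt{2c_\theta S_\theta/\mu}$, not the claimed $\sqrt{2c_\theta S_\theta/\mu}/\sqrt{T}$: the factor $\sqrt{T}$ produced by Cauchy--Schwarz exactly cancels the available $1/\sqrt{T}$, so with this estimate the term does not vanish and your first inequality ($\limsup\leq 0$) is not established. The fallbacks you mention do not close the gap either: Lemma~\ref{lemma_asymp-lin-bound} concerns $\tilde{x}_{1|k}$ and $w_k$, not $\Delta\theta_k$, and Lemma~\ref{lemma_infinite-series} only gives $\sum_{k=0}^{T-1}\sqrt{\|\Delta\theta_k\|}=o(T)$, whereas your argument needs $o(\sqrt{T})$.

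The gap is repairable by a tail-splitting refinement: for any fixed $K$, Cauchy--Schwarz applied only to the tail gives $\sum_{k=0}^{T-1}\sqrt{\|\Delta\theta_k\|}\leq \sum_{k=0}^{K-1}\sqrt{\|\Delta\theta_k\|}+\sqrt{T}\bigl(\sum_{k\geq K}\|\Delta\theta_k\|\bigr)^{1/2}$, hence $\limsup_{T\to\infty}T^{-1/2}\sum_{k=0}^{T-1}\sqrt{\|\Delta\theta_k\|}\leq\bigl(\sum_{k\geq K}\|\Delta\theta_k\|\bigr)^{1/2}$, which tends to zero as $K\to\infty$ because the series converges by \eqref{assump_finite-energy_parameter-eq-b}. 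With this correction, the disturbance, initial-error, and constant terms vanish as you describe, your identification $\ell(\hat{x}_k,\hat{u}_k)=\ell(x_k,u_k)$ is consistent with \eqref{MPC_scheme-g} and \eqref{MPC-scheme_optimal-input}, and your second inequality (Ces\`aro limsup bounded by the sequence limsup for the bounded sequence $\ell(\hat{x}^{\mathrm{s}\ast}_k,\hat{u}^{\mathrm{s}\ast}_k)$) is sound, so the corollary follows. Note that the paper's route avoids the issue altogether by never introducing the $\sqrt{T}$-scaled bounds: it telescopes \eqref{thm_transient-proof_intermediate} directly, uses Lemma~\ref{lemma_asymp-lin-bound} for the average of $\|\tilde{x}_{1|k}+w_k\|$, and controls $\ell^{\mathrm{s}}(v^{\mathrm{s}\ast}_k,\hat{\theta}_k)-\lambda_\infty$ through $\lambda_{k+1}\leq\lambda_k+L_\mathrm{s}\|\Delta\hat{\theta}_k\|$ and Stolz--Ces\`aro, which also handles the dependence of the setpoint cost on the time-varying estimate $\hat{\theta}_k$ explicitly.
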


\begin{proof}
The proof is provided in Appendix~\ref{app:proofs}.
\end{proof}
This corollary shows that, on average, the closed-loop performance is not worse than the performance at the optimized steady-state $\hat{x}^{\mathrm{s\ast}}$.

\subsection{Suboptimality Guarantee} \label{ssec:subopt_guarantee}
In the following, we strengthen Corollary~\ref{corr_asymptotic_performance} by studying the convergence of the artificial references and thereby providing an asymptotic performance guarantee with respect to the best achievable equilibrium. 
The optimal achievable equilibrium cost $\underline{\ell}$ is defined as
\begin{align}
\begin{split}
     \underline{\ell}(x,\theta, \lambda)  & := \min_{v \in \mathbb{V}_\mathrm{s}(x,\theta, \lambda)} \ell^\mathrm{s}(v,\theta), \label{def_opt_achiev_cost}\\[3pt]
     \mathbb{V}_\mathrm{s}(x,\theta,\lambda) &= \{ v^\mathrm{s}:\ v^\mathrm{s} \text{ feasible in~\eqref{MPC_scheme}} \\
     & \qquad \text{with } x_k=x,\hat{\theta}_k=\theta, \lambda_k=\lambda \}.
\end{split} 
\end{align}
Condition~\eqref{def_opt_achiev_cost} yields the smallest cost that the artificial setpoint 
$(\hat{x}^{\mathrm{s}},\hat{u}^{\mathrm{s}})$ can attain for any feasible solution of Problem~\eqref{MPC_scheme}.  
The optimal achievable equilibrium cost $\underline{\ell}$ exists due to the compact sets $\Theta,\, \mathbb{Z}$ and continuity of $\ell^{\mathrm{s}}$. However, we do not assume that the corresponding optimal steady-state is unique. 
In~\cite{fagianoGeneralizedTerminalState2013}, it was shown that we can approximately achieve this level of performance by increasing the weighting $\beta$ of the artificial setpoint term in~\eqref{MPC_scheme-a}. 
The following corollary extends this result to the considered adaptive economic MPC formulation. 
\begin{corollary}[Approximate optimality]
\label{corr_subopt}
    Suppose the conditions in Corollary~\ref{corr_asymptotic_performance} are fulfilled. 
    Then, there exists a function $\underline{\beta}(\varepsilon):\,\mathbb{R}_{\geq 0}\rightarrow\mathbb{R}_{\geq0}$, such that for any $\varepsilon> 0$, choosing $\beta\geq \underline{\beta}(\varepsilon)$ ensures 
\begin{align}
    \begin{split}
        \limsup_{T\rightarrow\infty} \sum_{k=0}^{T-1} \frac{  \ell(x_k,u_k)  }{T} 
        &\leq 
        \limsup_{k\rightarrow\infty} \ell(\hat{x}^\mathrm{s}_k, \hat{u}^\mathrm{s}_k) \label{corr_subopt-eq-b}\\
        &\leq \limsup_{k\rightarrow\infty} \underline{\ell}(x_k,\hat{\theta}_k,\lambda_k) +\varepsilon. 
    \end{split}
\end{align}
\end{corollary}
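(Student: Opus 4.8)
The plan is to combine the asymptotic average performance bound already established in Corollary~\ref{corr_asymptotic_performance} with an argument — adapted from~\cite{fagianoGeneralizedTerminalState2013} — showing that, as $\beta\to\infty$, the artificial setpoint cost $\ell(\hat{x}^\mathrm{s\ast}_k,\hat{u}^\mathrm{s\ast}_k)$ is forced arbitrarily close to the optimal achievable equilibrium cost $\underline{\ell}(x_k,\hat\theta_k,\lambda_k)$. The first inequality in~\eqref{corr_subopt-eq-b} is essentially immediate: Corollary~\ref{corr_asymptotic_performance} gives $\limsup_{T\to\infty}\frac1T\sum_{k=0}^{T-1}\ell(\hat x_k,\hat u_k)\le\lambda_\infty=\limsup_k\ell(\hat x^\mathrm{s\ast}_k,\hat u^\mathrm{s\ast}_k)$, and by the Lipschitz bound~\eqref{assump_regularity-eq-b} together with Lemma~\ref{lemma_asymp-lin-bound} (the averaged deviation $\|x_k-\hat x_k\|+\|u_k-\hat u_k\|$ vanishes, since this deviation is controlled by the accumulated prediction error and tube size), the averaged closed-loop cost $\frac1T\sum\ell(x_k,u_k)$ has the same $\limsup$ as $\frac1T\sum\ell(\hat x_k,\hat u_k)$. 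So the real content is the second inequality.

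For the second inequality I would argue by contradiction / via an explicit suboptimality estimate on the MPC cost. Fix $\varepsilon>0$. At time $k$, let $v^\varepsilon_k\in\mathbb{V}_\mathrm{s}(x_k,\hat\theta_k,\lambda_k)$ attain (within $\varepsilon/2$, say) the minimum defining $\underline\ell(x_k,\hat\theta_k,\lambda_k)$, and complete it to a full feasible candidate for Problem~\eqref{MPC_scheme} — using the steady-state trajectory $z_{j|k}=z^\mathrm{s}$, $\hat x_{j|k}$ generated by the closed-loop map toward $h(v^\varepsilon_k,\hat\theta_k)$, and the associated tube (feasibility of the tube/terminal ingredients for this candidate follows from the robust MPC construction and Assumption~\ref{assump_terminal-ingredients}). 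Evaluating the cost~\eqref{MPC_scheme-a} on this candidate, the stage-plus-terminal part is bounded by a constant $\bar C$ independent of $\beta$ (it is a finite-horizon economic cost over a bounded constraint set, bounded using~\eqref{assump_regularity-eq-b} and the Lipschitz bound~\eqref{assump_Lf} on $\ell_\mathrm{f}$), while the artificial term contributes $\beta\,\ell^\mathrm{s}(v^\varepsilon_k,\hat\theta_k)\le\beta(\underline\ell(x_k,\hat\theta_k,\lambda_k)+\varepsilon/2)$. Optimality of $(\hat x^\mathrm{s\ast}_k,\hat u^\mathrm{s\ast}_k)$ then gives
\begin{align*}
    \beta\,\ell(\hat x^\mathrm{s\ast}_k,\hat u^\mathrm{s\ast}_k) &\le \mathcal{J}^\ast_N(x_k,\hat\theta_k,\lambda_k) \\
    &\le \bar C + \beta\bigl(\underline\ell(x_k,\hat\theta_k,\lambda_k)+\varepsilon/2\bigr) - \underline{\mathcal{J}},
\end{align*}
where $\underline{\mathcal{J}}$ is a uniform lower bound on the stage-plus-terminal cost over the feasible set (finite, again by compactness and continuity). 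Dividing by $\beta$ yields $\ell(\hat x^\mathrm{s\ast}_k,\hat u^\mathrm{s\ast}_k)\le\underline\ell(x_k,\hat\theta_k,\lambda_k)+\varepsilon/2+(\bar C-\underline{\mathcal J})/\beta$, so choosing $\underline\beta(\varepsilon):=2(\bar C-\underline{\mathcal J})/\varepsilon$ makes the last term $\le\varepsilon/2$. Taking $\limsup_{k\to\infty}$ and chaining with the first inequality gives~\eqref{corr_subopt-eq-b}.

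The main obstacle is making the "constant $\bar C$ independent of $\beta$" step airtight: one must exhibit, for every $k$, a genuinely feasible candidate for~\eqref{MPC_scheme} whose non-$\beta$ cost is uniformly bounded, which requires that the robust tube and terminal ingredients remain feasible when the nominal trajectory is the one converging to an arbitrary reachable steady state $v^\varepsilon_k$ — this is where Assumptions~\ref{assump_Phi-function}, \ref{assump_equi-manifold}, and~\ref{assump_terminal-ingredients} must all be invoked carefully, and where the construction parallels the candidate used in the proof of Theorem~\ref{thm_transient}. A secondary subtlety is that $\underline\ell(x_k,\hat\theta_k,\lambda_k)$ depends on the time-varying $\lambda_k$ and $\hat\theta_k$, so the bound must hold uniformly in these — handled by the uniformity of all Lipschitz/eigenvalue constants over the compact sets $\mathbb{Z}_x,\mathbb{Z}_u,\Theta$ — and one should note that $\ell^\mathrm{s}(v^\varepsilon_k,\hat\theta_k)=\ell(\hat x^\mathrm{s}_k,\hat u^\mathrm{s}_k)$ for the candidate, so the comparison is consistent with constraint~\eqref{MPC_scheme-j} via the definition~\eqref{eq:lambda_k} of $\lambda_k$.
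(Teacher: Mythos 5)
Your route is essentially the paper's (both adapt the argument of Fagiano et al.): the non\mbox{-}$\beta$ part of the cost of any feasible solution of \eqref{MPC_scheme} is uniformly bounded thanks to compactness of $\mathbb{Z}$, $\Theta$ and the Lipschitz bounds on $\ell$ and $\ell_\mathrm{f}$, so comparing the optimal solution with a feasible solution whose setpoint cost equals $\underline{\ell}(x_k,\hat{\theta}_k,\lambda_k)$ and choosing $\beta\geq \mathrm{const}/\varepsilon$ forces $\ell(\hat{x}^\mathrm{s\ast}_k,\hat{u}^\mathrm{s\ast}_k)\leq \underline{\ell}(x_k,\hat{\theta}_k,\lambda_k)+\varepsilon$; the paper packages this as a contradiction with one constant $\eta$ bounding $|\tilde{\mathcal{J}}-\mathcal{J}^\ast|$, you state it directly with $\bar C$ and $\underline{\mathcal{J}}$, which is equivalent. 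Also, your extra limiting argument for the first inequality is unnecessary: by \eqref{MPC_scheme-g} and \eqref{MPC_scheme-e} we have $\hat{x}_{0|k}=x_k$ and $\hat{u}_{0|k}=u_k$, so the quantity bounded in Corollary~\ref{corr_asymptotic_performance} is exactly the closed-loop cost and the first inequality in \eqref{corr_subopt-eq-b} is Corollary~\ref{corr_asymptotic_performance} verbatim.

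Two points in your write-up need repair. First, the displayed inequality $\beta\,\ell(\hat{x}^\mathrm{s\ast}_k,\hat{u}^\mathrm{s\ast}_k)\leq \mathcal{J}^\ast_N(x_k,\hat{\theta}_k,\lambda_k)$ is false in general, because the economic stage and terminal costs need not be nonnegative (in the paper's own example $\ell(x,u)=-[x]_2<0$); the correct statement is the two-sided bound $\underline{\mathcal{J}}+\beta\ell^\ast\leq \mathcal{J}^\ast_N\leq \bar{C}+\beta(\underline{\ell}+\varepsilon/2)$, which is what your final line actually uses, so this is a slip in the display rather than in the idea. Second, and more substantively, the explicit candidate you build (nominal trajectory frozen at $z^\mathrm{s}$, with ``feasibility \ldots follows from the robust MPC construction'') is not justified: by \eqref{robust-MPC_scheme-e} the initial tube must contain both $x_k$ and $z_{0|k}$, so for $x_k$ far from $z^\mathrm{s}$ constraint \eqref{robust-MPC_scheme-d} can fail and this candidate need not be feasible. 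The obstacle you flag as ``the main obstacle'' is in fact dissolved by the definition \eqref{def_opt_achiev_cost}: $\mathbb{V}_\mathrm{s}(x_k,\hat{\theta}_k,\lambda_k)$ consists precisely of those $v^\mathrm{s}$ for which Problem \eqref{MPC_scheme} admits a complete feasible solution, so one simply takes that solution as the comparator (this is what the paper does), and its non\mbox{-}$\beta$ cost is uniformly bounded since all predicted state--input pairs lie in the compact set $\mathbb{Z}$. With these two repairs your argument coincides with the paper's proof.
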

\begin{proof}
    The proof is provided in Appendix~\ref{app:proofs}.
\end{proof}

Corollary~\ref{corr_subopt} guarantees that the average performance of the closed-loop system to be at least $\varepsilon$-close to the best achievable equilibrium cost. The proof shows that $\lambda_{k+1} \leq  \lambda_k+L_\mathrm{s}\|\Delta \hat{\theta}_k\|$ (cf.~\eqref{lambda_decrease_eq}), i.e., the economic cost at the artificial setpoints $(\hat{x}^\mathrm{s}_k,\hat{u}^\mathrm{s}_k)$ monotonically decreases modulo changes in the parameter estimate, which are asymptotically vanishing due to Lemma~\ref{lemma_asymp-lin-bound}. By choosing $\beta$ sufficiently large, the guaranteed asymptotic average performance is \emph{arbitrarily close} to the best achievable steady-state performance.

\subsection{Discussion} \label{ssec:discuss}
In the following, we discuss the derived performance bounds and relate our results to the state-of-the-art for economic and adaptive MPC formulations. 
\paragraph{Performance bounds}
Theorem~\ref{thm_transient} provides a general transient bound on the closed-loop performance when applying the proposed adaptive economic MPC scheme to uncertain nonlinear systems. 
In particular, it bounds the suboptimality of the closed-loop operation relative to operating the system at the optimized setpoint $\hat{x}^{\mathrm{s\ast}}$. This bound depends on the initial error in the parameter estimate, the energy of the disturbances, and the variation of the unknown parameters. 
Notably, our analysis does not assume persistently exciting system behavior or convergence of the LMS estimates to the true parameters, which would limit the applicability. 
If we consider long-term operation $T\rightarrow\infty$ and bounded variations in the unknown parameters, then the bound on the average scales linearly with $\lim_{T\rightarrow\infty}\dfrac{|w_{[0,T-1]}|_{\mathcal{L}_2}}{\sqrt{T}}\leq \max_{k\in\mathbb{I}_{[0,T]}}\|w_k\|$. 
Thus, the asymptotic average closed-loop cost is smaller or equal to the cost at the optimized steady-state if the disturbances decay, independent of the initial error in the parameter estimate. 
This is formalized in Corollaries~\ref{corr_asymptotic_performance}-\ref{corr_subopt}, ensuring that the average cost is at least $\varepsilon$-close to the best achievable setpoint if we use a large penalty $\beta$ and the disturbance energy and parameter variations are finite (Assumption~\ref{assump_finite-energy}). 
Notably, in general the closed-loop system may also outperform operation at the (unknown) optimal steady-state, see also the numerical example in Section~\ref{sec:num_ex}. 
According to~\eqref{thm_transient-eq}, a large penalty $\beta$ may also increase the impact of disturbances and parameter variations and hence, a trade-off should be considered when choosing $\beta$ in practical applications. 
\paragraph{Adaptive MPC}
Adaptive MPC schemes using LMS adaptation are investigated in~\cite{lorenzenRobustMPCRecursive2019,kohlerLinearRobustAdaptive2019,kohlerRobustAdaptiveModel2021}. 
These formulations all address the problem of stabilizing a fixed steady-state and ensure finite-gain $\mathcal{L}_2$-stability, i.e., a linear bound between the tracking error and the disturbance energy. Notably, all of these designs assume that some optimal setpoint satisfies the dynamics for all possible parameters $\theta\in\Theta$. This can be very restrictive. 
The proposed design removes this limitation by using artificial references and a generalized terminal constraint in the design. 
\paragraph{Online optimized setpoints in Economic MPC}
The proposed approach optimizes an artificial reference to determine the optimal operating regime online, while also accounting for the updated model parameters. 
Existing results for economic MPC with artificial references can be found in~\cite{fagianoGeneralizedTerminalState2013,mullerEconomicModelPredictive2013,mullerPerformanceEconomicModel2014,kohlerPeriodicOptimalControl2020}. 
These results all study the nominal case, i.e., there are no disturbances and the parameters $\theta$ are known and constant. 
In~\cite[Theorem~3]{mullerPerformanceEconomicModel2014}, it is proposed to increase the weighting $\beta$ during online operation, ensuring that the asymptotic average performance is no worse than operation at the optimal achievable steady-state. 
Closer to the proposed approach, \cite{fagianoGeneralizedTerminalState2013} shows that the asymptotic average closed-loop performance is almost as good as the performance at the optimal achievable steady-state by choosing a large constant weight $\beta$. 
In particular, Corollary~\ref{corr_subopt} generalizes~\cite[Theorem~2]{fagianoGeneralizedTerminalState2013}, ensuring the same performance bound but for uncertain systems subject to disturbances and time-varying uncertain parameters. This is made possible by including a robust tube and online model adaptation in the design.
\paragraph{Terminal ingredients}
Another important contribution is the design of the terminal cost in Proposition~\ref{prop_constr-term-ingred}. 
While designs of economic terminal costs are available~\cite{amritEconomicOptimizationUsing2011}, the application of these designs to online optimized setpoints is non-trivial~\cite[Sec.~4.1]{mullerPerformanceEconomicModel2014}, let alone to uncertain system with online adaptation. 
In particular, existing designs use a local Taylor expansion around a fixed setpoint to ensure Inequality~\eqref{assumpt_terminal-cost-eqs} holds in a sufficiently small neighborhood. 
However, due to the uncertainty in the system dynamics~\eqref{prob_setup:eq-dynamics}, it is not feasible to force all uncertain predictions to end in a small terminal set around a fixed equilibrium. 
Proposition~\ref{prop_constr-term-ingred} addresses this issue by providing a design that is valid for any setpoint $x^{\mathrm{s}}$, any model parameters $\hat{\theta}$, and for all states $x\in\mathbb{Z}_x$, without requiring the explicit restriction to a small terminal set.
In Appendix~\ref{ssec:app_SDP}, we show how to construct a terminal cost by solving a semidefinite program (cf. Remark~\ref{rem_termin_ingred}). 
\paragraph{Prior work}
In our conference paper~\cite{degnerAdaptiveEconomicModel2024}, we derived similar performance guarantees for an adaptive economic MPC scheme. 
This approach was restricted to open-loop stable linear systems with a linear-quadratic economic cost function and soft-constraints. 
In contrast, Algorithm~\ref{MPC_algorithm} ensures robust constraint satisfaction, is applicable to time-varying parameters, and a broad class of nonlinear systems and nonlinear economic costs $\ell$. 
More importantly, the proposed method explicitly deals with the fact that the optimal steady-states $\hat{x}^s_k$ depend on the parameter estimates $\hat{\theta}_k$, while~\cite{degnerAdaptiveEconomicModel2024} assumes that the optimal steady-state is independent of the parameter estimates $\hat{\theta}_k$.

\section{Numerical Example} \label{sec:num_ex}
In this section, we demonstrate improved performance of the proposed adaptive economic MPC (AE-MPC) scheme compared to economic MPC (E-MPC) with perfect parameter knowledge and highlight reliable performance under time-varying parameters.  

The simulations are performed using MATLAB: The offline computations use YALMIP~\cite{YALMIP} and MOSEK~\cite{mosek}, and the online computations use CasADi~\cite{casadi} and IPOPT~\cite{IPOPT}. The code is available online.\footnote{Git: \href{https://github.com/maximiliandegner/AEMPC}{https://github.com/maximiliandegner/AEMPC.git}\\
DOI: \href{https://doi.org/10.3929/ethz-c-000792397}{https://doi.org/10.3929/ethz-c-000792397}}

\subsection{System description}\label{ssec:num_system}
We consider a continuous stirred-tank reactor (CSTR) model from~\cite{baileyCyclicOperationReaction1971}, which is a standard benchmark for economic MPC~\cite{faulwasserEconomicNonlinearModel2018,kohlerPeriodicOptimalControl2020}.
The continuous-time dynamics are given by
\begin{align}\label{numex_sys-desc-dynamics}
    &\frac{\mathrm{d}x}{\mathrm{d}t} = \begin{bmatrix}
        1-[x]_1\\
        -[x]_2 \\
        -[x]_3 + u 
    \end{bmatrix} \\
    &+\begin{bmatrix}
        -[x]_1^2 \exp(-\frac{1}{[x]_3})    & - [x]_1 \exp(-\frac{\delta}{[x]_3}) \\
        [x]_1^2 \exp(-\frac{1}{[x]_3})   & 0\\
        0                                                       & 0
    \end{bmatrix} \!\cdot\! \begin{bmatrix}
        [\theta]_1\cdot 10^5\\
        [\theta]_2\cdot 400
    \end{bmatrix}
    \nonumber
\end{align}
where $\delta=0.55$. The states $x\in\mathbb{R}^3$ are given by the concentration of reactant, the concentration of the desired product, and the temperature of the mixture. The input $u\in\mathbb{R}$ represents the heat flux through the cooling jacket of the reactor. The parameter set is $\Theta = [0.985, 1.015]\cdot [\bar{\theta}]_1 \times [0.985, 1.015]\cdot [\bar{\theta}]_2$ with the nominal parameters $\bar{\theta} = [0.0995, 1.0050]^\top$.
The state and input constraints are given by 
\begin{align} \label{numex_constraints}
    [x]_i &\in [0.03, 0.25] \quad \forall i\in\mathbb{I}_{[1,3]},& u\in [0.049, 0.449].
\end{align}

We use the Euler discretization with a sampling step of $T_s = 0.025$. We consider  additive disturbances $d$, i.e., $E=I_{n_x}$, that are sampled uniformly with $d_k\in \mathbb{D}=5\cdot 10^{-4}\cdot [-1, 1]^3$.
The economic cost function is $\ell(x,u)=-[x]_2$, which corresponds to the maximization of the yields in the desired product through maximizing its concentration $[x]_2$. 
Notably, the optimal operation of this CSTR includes large (quasi-periodic) oscillations, while operation at the optimal steady-state is suboptimal~\cite[Section~3.4]{faulwasserEconomicNonlinearModel2018}. 

\subsection{Offline design} \label{ssec:tube_imple}
The tube propagation (Assumption~\ref{assump_Phi-function}) is implemented with the homothetic tube formulation from~\cite{kohlerComputationallyEfficientRobust2021}, i.e., we parametrize the tube as $\mathbb{X}_{j|k}=\{x\in\mathbb{R}^{n_x}:~\|x-z_{j|k}\|_P\leq s_{j|k}\}$
where $s_{j|k}$ is an online propagated scaling of the tube. 
The auxiliary feedback law takes the form $\kappa(x,z,v) = K(x-z)+v$.
We compute the matrices $P,\ K$, and a contraction rate $\rho$ offline using linear matrix inequalities, see Appendix~\ref{ssec:app_SDP} for details.
The propagation of the tube size takes the parametric uncertainty into account using
\begin{align}
    \label{numex_tube-propagation}
    s_{j+1|k} &= (\rho+L_\mathrm{w}) \cdot s_{j|k} \\
    &\quad +  \max_{\substack{\theta\in\Theta,\\d\in\mathbb{D}}}  \|G(z_{j|k})\cdot\theta +E d\|_P,  
\end{align}
with a Lipschitz bound $L_\mathrm{w}$ given by
\begin{align}
    \bquad L_\mathrm{w} := \max_{\substack{(x,u)\in \mathbb{Z},\\ \theta\in\Theta}} \left\|P^{\frac{1}{2}} \sum_{i=0}^{n_\theta} \left.\frac{\partial [G]_{:,i}}{\partial x}\right|_{x} \cdot [\theta-\bar{\theta}]_i\cdot P^{-\frac{1}{2}}\right\|.
    \label{numex_Lw-definition}
\end{align}
The invariance condition~\eqref{robust-MPC_scheme-f} reduces to $(1-\rho-L_{\mathrm{w}})s_{N|k}\geq  \max_{\substack{\theta\in\Theta,d\in\mathbb{D}}}  \|G(z_{N|k})\cdot\theta +E(z_{N|k})\cdot d\|_P$. The set inclusion~\eqref{robust-MPC_scheme-c} reduces to tightening the constraints on the nominal trajectory $z_{j|k}$ proportional to the scaling $s_{j|k}$. 
Additional details on the formulation of the offline computation of this homothetic tube MPC can be found in Appendix~\ref{ssec:app_SDP} and in~\cite{kohlerComputationallyEfficientRobust2021}.

The terminal cost is designed according to Proposition~\ref{prop_constr-term-ingred} and all MPC schemes are implemented with horizon $N=25$.
The offline computations were performed on a ThinkPad T14 Gen2i (Intel Core i5-1135G7, 16GB RAM) running Windows~11 and took $0.9$ seconds in total.

\subsection{Simulation results} \label{ssec:num_results}
First, we consider the case of \emph{time-invariant} parameters. Then, we show that for \emph{time-varying} parameters, the AE-MPC's performance approaches that of economic MPC with perfect model knowledge.

\paragraph*{Time-invariant parameters} We consider $\theta_k = [0.1, 1]^\top\ \forall k\geq 0$ and the initial parameter estimate $\hat{\theta}_0 = [0.0980, 1.0200]^\top$. We compare three different controllers.
\begin{itemize}
    \item The proposed AE-MPC scheme, with the LMS switched on at time step $200$,
    \item an E-MPC scheme with $\hat{\theta}_k = \theta_k$ (known parameters), and 
    \item an E-MPC scheme with $\hat{\theta}_k = \hat{\theta}_0,\ \forall k\geq 0$ (no adaptation). 
\end{itemize}
In Figure~\ref{fig:results_performance}, we depict the simulation results.
At the beginning, the AE-MPC and the E-MPC (no adaptation) show the same behavior until the LMS is switched on (vertical grey dashed line). Afterwards, the AE-MPC controlled system achieves a higher average concentration $[x]_2$ and thus, a smaller cost. 
AE-MPC first performs similar to E-MPC (no adaptation) and then the performance improves through adaptation and converges to the performance of E-MPC (known parameters).
The parameter estimates $\hat{\theta}_k$ gradually approaches the true parameters, and after $2\cdot 10^3$ time steps, the estimate remains close to the true parameters. 
Furthermore, Theorem~\ref{thm_transient} provides a bound on the transient performance compared to the optimal setpoint, whereas Figure~\ref{fig:results_performance} and Table~\ref{tab:average-perf_comparison} shows that the proposed AE-MPC scheme increases the production yield by $1.038\,\%$ compared to the optimal steady-state.

\begin{figure}
    \centering
    \includegraphics[width=\linewidth]{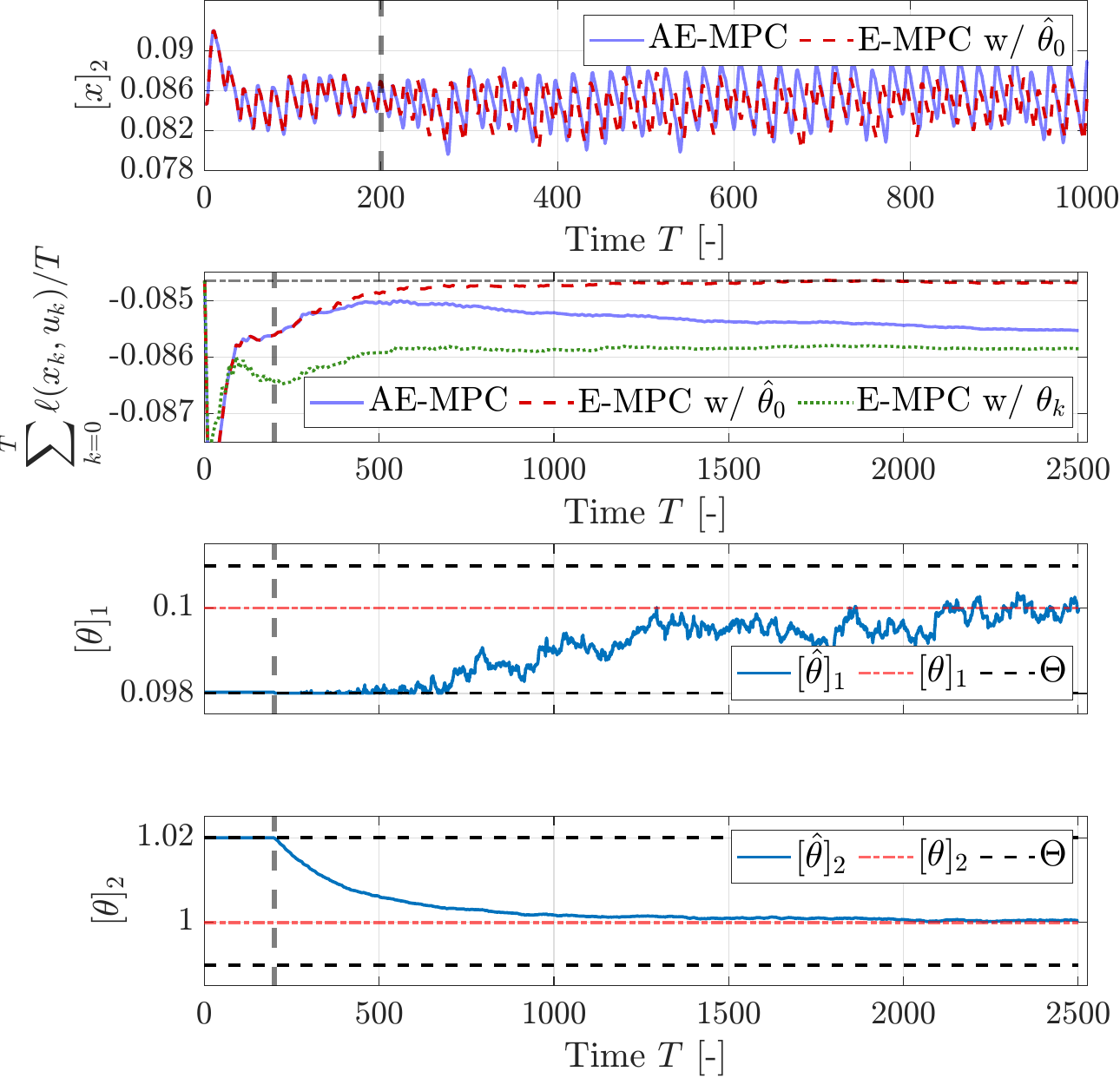}
    \caption{Performance comparison of AE-MPC and E-MPC (no adaptation). First plot: Closed-loop evolution of the state $[x]_2$. Second plot: Averaged transient performance for AE-MPC (blue, solid), E-MPC (no adaptation) with constant parameter estimate $\hat{\theta}_0$ (red, dashed), E-MPC (known parameters) with knowledge of $\theta_k$ (green, dotted), and optimal steady-state cost for $\theta_k$ (grey, dash-dotted). Third and fourth plot: Evolution of the parameter estimates of AE-MPC. The LMS is activated at time step $200$ (vertical grey dashed line).}
    \label{fig:results_performance}
\end{figure}

\begin{figure}
    \centering
    \includegraphics[width=\linewidth]{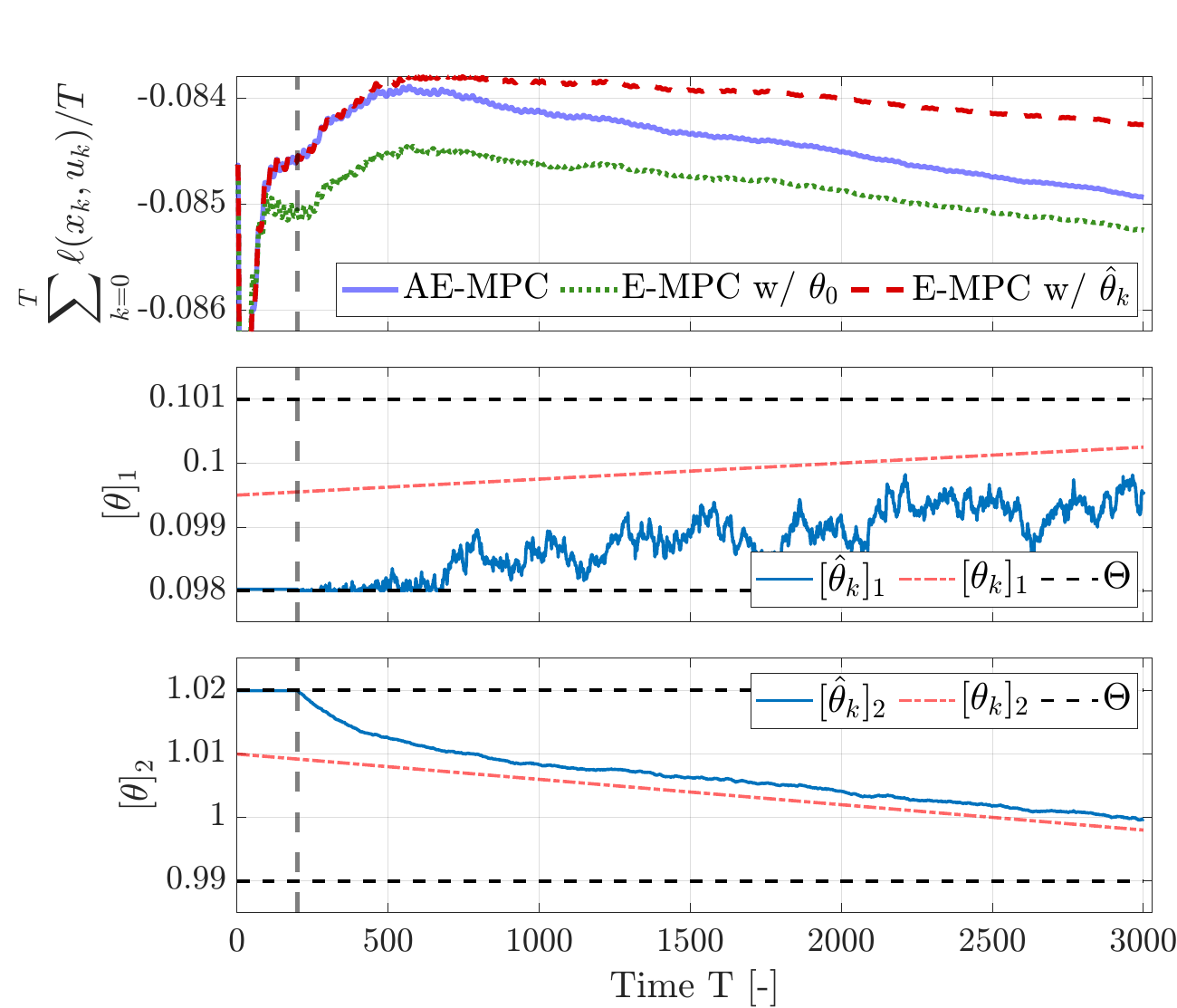}
    \caption{Performance comparison of AE-MPC and E-MPC for time-varying parameters $\theta_k$. First plot: Averaged transient performance for AE-MPC (blue, solid), E-MPC (no adaptation) with constant $\hat{\theta}_0$ (red, dashed), and E-MPC (known parameters) with knowledge of $\theta_k$ (green, dotted). Second and third plot: Evolution of the parameter estimates of AE-MPC. The LMS is activated at time step $200$ (vertical grey dashed line).}
    \label{fig:results_time-varying}
\end{figure}

\begin{table}[h]
\renewcommand{\arraystretch}{1.2}
    \centering
    \begin{tabular}{c c c }
    \toprule
AE-MPC & E-MPC with $\hat{\theta}_0$ & E-MPC with $\theta_k$ \\
     \midrule
        $1.038\%$ &  $0.041\%$ & $1.417\%$ \\
     \bottomrule
    \end{tabular} \vspace{2pt}
    \caption{Average increase in production $(x_2)$ w.r.t. the performance at the (unknown) optimal steady-state for the true parameters.}
    \label{tab:average-perf_comparison}
\end{table}

\paragraph*{Time-varying parameters} In the second simulation, we use a ramp to mimic a drift of the system parameters over $3000$ time steps. The evolution of the LMS parameter estimates in this case are shown in Figure~\ref{fig:results_time-varying} together with the averaged transient performance. The results indicate that the LMS parameter estimator can follow drifting parameters which in turn leads to an improvement of the average performance. After $2.5\cdot10^3$ time steps, the performance of AE-MPC and E-MPC (known parameters) are almost identical and significantly better than the performance of E-MPC without adaptation. 

In summary, the numerical comparison demonstrated that the adaptive economic MPC scheme yields a performance that is significantly better than that of E-MPC (no adaptation) and in fact closely approximates the performance of E-MPC (known parameters). 
\section{Conclusion} \label{sec:concl}
We provided a Model Predictive Control (MPC) scheme for directly minimizing the economic transient cost of uncertain nonlinear systems. 
The proposed MPC formulation utilizes least-mean squares parameter adaptation, minimizes a certainty-equivalent economic cost, employs artificial references to optimize the unknown optimal steady-states online, and uses a robust formulation to ensure constraint satisfaction for all possible parameters and disturbances.
Our theoretical results show that the transient closed-loop cost is bounded by the cost at the optimized setpoint and a penalty that scales linearly with the disturbances and the initial parameter mismatch and increases with the parameter variations. Furthermore, for finite-energy disturbances and finite variations of the parameters, the asymptotic average cost is (approximately) not worse than the cost at the best reachable
steady-state.
Applying the proposed MPC scheme to a continuous stirred-tank reactor in simulation showed that the controller recovered the performance of economic MPC with knowledge of the true parameters for both time-invariant and time-varying parameters.

\section*{Acknowledgments}
The authors are grateful for helpful comments by Matthias A. M\"uller on the proof of Corollary~\ref{corr_asymptotic_performance}.
\appendix
First, we provide the proofs of the main results, i.e., Propositions~\ref{probsetup_prop_LMS-bound} and~\ref{prop_constr-term-ingred}, Theorem~\ref{thm_transient}, and Corollaries~\ref{corr_asymptotic_performance} and~\ref{corr_subopt} (Appendix~\ref{app:proofs}). Second, we provide the proofs of the auxiliary lemmas (Appendix~\ref{appendix_aux-results}). Lastly, we describe the offline computations for the robust MPC used in the numerical example (Appendix~\ref{ssec:app_SDP}).

\subsection{Proofs of the Main Results} \label{app:proofs}

\subsubsection{Proof of Proposition~\ref{probsetup_prop_LMS-bound}}
The proof follows the arguments in~\cite[Lemma 5]{lorenzenRobustMPCRecursive2019} and extends them to time-varying parameters. We abbreviate $G_k := G(x_k,u_k)$. 
Non-expansiveness of the projection~\eqref{LMS_update_b} on the convex set $\Theta$ implies 
\begin{align} 
    \bquad \|\hat{\theta}_{k+1} - \hat{\theta}_k\|  
        \leq \|\tilde{\theta}_{k+1}-\hat{\theta}_k\| 
        \stackrel{\eqref{LMS_update_a}}{\leq}\mu \|G_k\| \cdot \|\tilde{x}_{1|k}+w_k\|.
        \label{probsetup_LMS-bound_proof_aux-result}
    \end{align}
    Applying~\eqref{probsetup_LMS_gain-definition} yields~\eqref{probsetup_prop_LMS-theta-diff-eq}. \\
To prove~\eqref{probsetup_prop_LMS-bound-eq}, we complete the square and obtain
    \begin{align}
        &\|\hat{\theta}_{k+1} - \theta_{k+1}\|^2 
        \nonumber\\
        &\quad = \|\hat{\theta}_{k+1} - \theta_k\|^2 + 2\Delta \theta_k^\top (\theta_{k+1} -\Delta\theta_k - \hat{\theta}_{k+1}) +\|\Delta \theta_k\|^2\nonumber\\
        &\quad = \|\hat{\theta}_{k+1} - \theta_k\|^2 + 2\Delta \theta_k^\top (\theta_{k+1} - \hat{\theta}_{k+1})-\|\Delta \theta_k\|^2\nonumber\\
        &\quad \leq \|\hat{\theta}_{k+1} - \theta_k\|^2 + 2c_\theta \|\Delta \theta_k\|,\label{proof_LMS_intermediate-a}
    \end{align}
 where $c_\theta \geq \| \theta_{k+1}-\hat{\theta}_{k+1}\|$.
 By using the system dynamics~\eqref{prob_setup:eq-dynamics} and the definition of $\hat{x}_{1|k}$, we can rewrite the update equation~\eqref{LMS_update_a} as
 \begin{align*}
     \tilde{\theta}_{k+1} 
    \stackrel{\eqref{prob_setup:eq-dynamics}, \eqref{LMS_one-step_prediction-eq},\eqref{LMS_update_a}}{=}\hat{\theta}_k + \mu G_k^\top \cdot \left(G_k\cdot(\theta_k - \hat{\theta}_k) + w_k\right).
 \end{align*}
 Following the proof of~\cite[Lemma~5]{lorenzenRobustMPCRecursive2019}, we have
    \begin{align}
        &\frac{1}{\mu}\|\tilde{\theta}_{k+1} - \theta_{k}\|^2 - \frac{1}{\mu}\|\hat{\theta}_k - \theta_k\|^2\nonumber\\
        &\stackrel{\eqref{LMS_update_a}}{\leq} \frac{1}{\mu} \left( \|\mu G_k^\top(\tilde{x}_{1|k}+w_k)\|^2 +2(\tilde{x}_{1|k}+w_k)^\top G_k(\hat{\theta}_k-\theta_k) \right)\nonumber\\
        &\ \stackrel{}{=} (\mu\|G_k\|^2-1)\|\tilde{x}_{1|k}+w_k\|^2 - 2(\tilde{x}_{1|k}+w_k)^\top \tilde{x}_{1|k} \nonumber\\
        &\, \stackrel{\eqref{probsetup_LMS_gain-definition}}{\leq} -\|\tilde{x}_{1|k}\|^2 + \|w_k\|^2, \label{proof_LMS_intermediate-b}
    \end{align}
    where the last inequality holds thanks to completing the square. 
 Combining~\eqref{proof_LMS_intermediate-a} and~\eqref{proof_LMS_intermediate-b} yields
    \begin{align}
        &\frac{1}{\mu} \|\hat{\theta}_{k+1}-\theta_{k+1}\|^2 - \frac{1}{\mu}\|\hat{\theta}_k-\theta_k\|^2\nonumber\\
        \stackrel{}{\leq} & -\|\tilde{x}_{1|k}\|^2 + \|w_k\|^2 + \frac{2c_\theta}{\mu}\|\Delta \theta_k\|.
    \end{align}
    Taking the sum from $k=0$ to $k=T$ on both sides, rearranging the terms, and using $0\leq\|\hat{\theta}_{T}-\theta_{T}\|^2$ yields~\eqref{probsetup_prop_LMS-bound-eq}.
    \qed

\subsubsection{Proof of Proposition~\ref{prop_constr-term-ingred}}
This proof consists of three parts: First, we construct a linear-quadratic bound on the stage cost. Then, we show that the terminal cost~\eqref{lin-quad_terminal-cost} satisfies~\eqref{assump_terminal-cost-decrease-eq} using~\eqref{numex_Lyapunov_inequality}. Lastly, we show that the terminal cost~\eqref{lin-quad_terminal-cost} is Lipschitz continuous, i.e., satisfies~\eqref{assump_Lf}. To improve the readability, we use the shorthand notation
    \begin{align*}
        q(x,v^\mathrm{s},\theta) &:= \left.\frac{\partial\bar{\ell}}{\partial x}\right|^\top_{(x,v^\mathrm{s},\theta)}.
    \end{align*}
Recall from~\eqref{shifted_stage_cost} that the shifted cost is defined as 
\begin{align*}
    \bar{\ell}(x, v^\mathrm{s}, \theta) := \ell(x,\kappa(x,h(v^\mathrm{s},\bar{\theta}),v^\mathrm{s})) - \ell^\mathrm{s}(v^\mathrm{s},\theta).
\end{align*}

Furthermore, recall from~\eqref{def_Zu_Zx-eq} that $\mathbb{Z}_x$ and $\mathbb{Z}_u$ are convex and satisfy $(x,u)\in\mathbb{Z} \implies x\in\mathbb{Z}_x,\ u\in\mathbb{Z}_u$.

\subsubsection*{Linear-quadratic stage cost bound}
First, we construct a linear-quadratic over-approximation $\ell_q$ of $\bar{\ell}$, similar to~\cite[Lemma~3]{kohlerPeriodicOptimalControl2020}. In particular, we show that
    \begin{align} 
    \begin{split}
        \ell_q (x,v^\mathrm{s},\theta) &:=
        q(x,v^\mathrm{s},\theta)\cdot (x-\hat{x}^s) + \frac{1}{2}\|x-\hat{x}^s\|^2_{Q} \\
        &\geq \bar{\ell}(x,v^\mathrm{s},\theta) + \frac{\alpha}{2} \|x-\hat{x}^s\|^2,\label{term-cost-ell-q}
    \end{split}
    \end{align}
    with $Q$ from~\eqref{Q_def-eq} and $\hat{x}^\mathrm{s}=h(v^\mathrm{s},\theta)$.
    Following the idea of~\cite[Lemma~23]{amritEconomicOptimizationUsing2011}, we use the mean value theorem for vector functions~\cite[Proposition A.11]{rawlingsModelPredictiveControl2020} to prove~\eqref{term-cost-ell-q}: Concretely, we apply it to the functional $\ell_q(x,v^\mathrm{s},\theta)-\bar{\ell}(x,v^\mathrm{s},\theta)$,  
    which is zero for $x=\hat{x}^{\mathrm{s}}$ by definition, 
    to relate the shifted cost at some point $x$ to that at the equilibrium $\hat{x}^s$. 
    This gives
    \begin{align*}
        &\ell_q(x,v^\mathrm{s},\theta) - \bar{\ell}(x,v^\mathrm{s},\theta) \nonumber \\
        &\quad = 
        \left[ q(x,v^\mathrm{s},\theta)-  \left.\frac{\partial \bar{\ell}}{\partial x}\right|_{(x,v^\mathrm{s},\theta)} \right]^\top\Delta x + \int_0^1 (1-s) \nonumber\\   
        &\qquad \cdot \Delta x^\top \left(Q-\left.\frac{\partial^2 \bar{\ell}}{\partial x^2}\right|_{(\hat{x}^s + s\Delta x,v^\mathrm{s},\theta)}\right) \Delta x\ ds \nonumber\\
        &\stackrel{\eqref{Q_def-eq},\eqref{term-cost-ell-q}}{\geq} \int_0^1 (1-s) \Delta x^\top \alpha I_{n_x} \Delta x\ ds \nonumber\\
        & \quad = \frac{\alpha}{2} \|x-\hat{x}^s\|^2, 
    \end{align*}
    where $\Delta x:= x-\hat{x}^s,\ x\in\mathbb{Z}_x$.
    The inequality leveraged the fact that $\hat{x}^\mathrm{s} + s\Delta x\in\mathbb{Z}_x$ for $s\in[0,1]$, since $\mathbb{Z}_x$ is convex.   
    Consequently, the condition~\eqref{term-cost-ell-q} holds for all $x\in\mathbb{Z}_x$. 

\subsubsection*{Terminal cost decrease}
    Next, we show that the cost decrease~\eqref{assump_terminal-cost-decrease-eq} holds. 
    
    Consider the first-order Taylor approximation of the dynamics $f_\kappa(x,v^\mathrm{s},\theta)=f(x,\kappa(x,h(v^\mathrm{s},\bar{\theta}),v^\mathrm{s}),\theta)$ with the parameters $\theta$ around the steady-state $\hat{x}^s=h(v^\mathrm{s},\theta)$, which yields
    \begin{align}
    \begin{split}\label{proof_term-cost_Taylorexp-eq}
        f_\kappa(x,&\,v^\mathrm{s}, \hat{\theta}) - f_\kappa(\hat{x}^s,v^\mathrm{s}, \hat{\theta}) \\
        &= A_K(\hat{x}^\mathrm{s},v^\mathrm{s},\hat{\theta}) \cdot (x-\hat{x}^s) + r(x,\hat{x}^\mathrm{s},v^\mathrm{s},\hat{\theta}), 
        \end{split}
    \end{align}
    where $r(x,\hat{x}^\mathrm{s},v^\mathrm{s},\hat{\theta})$ is the remainder term and $A_K$ defined in~\eqref{offline_linearization_AK-def-eq}. By using the Lagrange form of the approximation error, we can upper bound its components quadratically
    \begin{align}
    \label{remainder_H-bound}
        \left|[r(x,\hat{x}^s,v^\mathrm{s},\theta)]_i \right| \leq \frac{H}{2} \|x-\hat{x}^s\|^2,
    \end{align}
    where $H$ according to~\eqref{term-cost_H_def-eq} is an upper bound on the maximum eigenvalue of the Hessian of the closed-loop dynamics $\frac{\partial^2 [f_\kappa]_i}{\partial x^2}$. 

    In the following, we abbreviate $x^+:=f_\kappa(x,v^\mathrm{s},\theta)$, $\tilde{A}_K:=A_K(\hat{x}^s,v^\mathrm{s},\theta)$, and $\tilde{q}:=q(\hat{x}^\mathrm{s},v^\mathrm{s},\theta)$. For the linear terms of the cost function, we  obtain
    {\allowdisplaybreaks
    \begin{align}
        & p(\theta, \hat{x}^s, v^\mathrm{s})^\top \left[ (x^+-\hat{x}^\mathrm{s}) - (x-\hat{x}^\mathrm{s}) \right]\nonumber\\
        & \!\! = p(\theta, \hat{x}^s, v^\mathrm{s})^\top \left[ (x^+ -\hat{x}^s) - (\tilde{A}_K\cdot (x - \hat{x}^s))
          \right.\nonumber\\
        &\quad \left. + (\tilde{A}_K\cdot (x - \hat{x}^s)) - (x-\hat{x}^\mathrm{s})\right]
        \nonumber \\
        &\bquad\, \stackrel{\eqref{proof_term-cost_Taylorexp-eq}}{=} p(\theta, \hat{x}^s, v^\mathrm{s})^\top \left(\!
        r(x, \hat{x}^\mathrm{s}, v^\mathrm{s}, \theta) + (\tilde{A}_K-I_{n_x}) \cdot (x-\hat{x}^\mathrm{s})\!\right) \nonumber \\
        &\!\!\! \stackrel{\eqref{numex_p-vector-def}}{=}
        p(\theta, \hat{x}^s, v^\mathrm{s})^\top 
        r(x, \hat{x}^\mathrm{s}, v^\mathrm{s}, \theta) - \tilde{q}^\top(x-\hat{x}^\mathrm{s}) \nonumber\\
        &\!\!\!\stackrel{\eqref{remainder_H-bound}}{\leq} \frac{\sqrt{n_x}H}{2} \|p(\theta, \hat{x}^s, v^\mathrm{s})\|\cdot \|x-\hat{x}^\mathrm{s}\|^2 - \tilde{q}^\top (x-\hat{x}^\mathrm{s}), \label{numex_prop_proof-b}
     \end{align}
     }
     where the last inequality also used $\|r\| \leq \sqrt{n_x}\|r\|_\infty$.
Furthermore, the quadratic terms of the terminal cost satisfies
\begin{align}
    \|x^+-\hat{x}^\mathrm{s}\|^2_{P_\mathrm{f}} &- \|x-\hat{x}^\mathrm{s}\|^2_{P_\mathrm{f}} \nonumber\\
        &= \|f_\kappa(x,v^\mathrm{s},\theta)-\hat{x}^\mathrm{s}\|^2_{P_\mathrm{f}} - \|x-\hat{x}^\mathrm{s}\|^2_{P_\mathrm{f}} \nonumber\\
    &\!\!\stackrel{\eqref{numex_Lyapunov_inequality}}{\leq} -\|x-\hat{x}^\mathrm{s}\|^2_{Q}, \label{numex_prop_proof-c}
\end{align}
    where the inequality holds by using contraction theory~\cite[Theorem~2.8]{tsukamotoContractionTheoryNonlinear2021} and that $\mathbb{Z}_x$ is convex. 
    Combining the results from~\eqref{numex_prop_proof-b} and~\eqref{numex_prop_proof-c}, yields
\begin{align*}
    &\ell_\mathrm{f}(x^+,\hat{x}^\mathrm{s},v^\mathrm{s},\theta) - \ell_\mathrm{f}(x,\hat{x}^\mathrm{s},v^\mathrm{s},\theta)\nonumber\\
    &\ \stackrel{\eqref{numex_prop_proof-b},\eqref{numex_prop_proof-c}}{\leq } -\|x-\hat{x}^\mathrm{s}\|^2_{Q} + \frac{\sqrt{n_x}H}{2} \|p(\theta, \hat{x}^s, v^\mathrm{s})\| \cdot \|x-\hat{x}^\mathrm{s}\|^2 \nonumber\\
    & \qquad - \tilde{q}^\top (x-\hat{x}^\mathrm{s})\nonumber\\
    &\,\stackrel{\eqref{term-cost-ell-q},\eqref{alpha_def-eq}}{\leq} -\bar{\ell}(x,v^\mathrm{s},\theta) \\
    &\quad\stackrel{\eqref{shifted_stage_cost}}{=} -\ell(x,\kappa(x,h(v^\mathrm{s},\bar{\theta}),v^\mathrm{s})) + \ell^\mathrm{s}(v^\mathrm{s},\theta).
\end{align*}
Hence, the cost decrease property~\eqref{assump_terminal-cost-decrease-eq} holds.

\subsubsection*{Lipschitz-continuity of $\ell_\mathrm{f}$}
The vector $p$ is uniformly bounded since $\tilde{A}_K$ is Schur stable with a uniform bound and the gradient of $\bar{\ell}$ is bounded thanks to the Lipschitz continuity of $f_\mathrm{w}$, $\kappa$, and $\ell$. Together with the compact parameter set $\Theta$ and the compact constraints $\mathbb{Z}$, this implies that the terminal cost~\eqref{lin-quad_terminal-cost} is Lipschitz-continuous, i.e.,~\eqref{assump_Lf} holds. 

Consequently, Assumption~\ref{assump_terminal-ingredients} holds.
\qed

\subsubsection{Proof of Theorem~\ref{thm_transient}} 
This proof consists of three parts: First, we prove that Problem~\eqref{MPC_scheme} is recursively feasible. Second, we relate the change of the cost function to the change of parameter estimates. Lastly, we make use of the LMS parameter estimator's properties (Lemma~\ref{lemma_accum_x_tilde}) to obtain our performance bound.     
    \paragraph*{Recursive feasibility} 
    We show recursive feasibility by constructing a feasible candidate solution to Problem~\eqref{MPC_scheme} at time $k+1$, given a feasible solution at time $k$. 
    Let us denote
    \begin{subequations}
        \begin{align*}
           z_{N+1|k}^\ast &= z_{N|k}^\ast, \qquad\quad\ \ v_{N|k}^\ast :=  v^{\mathrm{s}\ast}_k,\\
        \mathbb{X}_{N+1|k}^\ast &:= \mathbb{X}_{N|k}^\ast,\qquad\ \, \hat{x}_{0|k+1} = x_{k+1}\\ 
        \hat{x}_{N+1|k}^\ast &= f(\hat{x}_{N|k}^\ast, \kappa(\hat{x}_{N|k}^\ast, z_{N|k}^\ast, v^\ast_{N|k}),\hat{\theta}_k).
        \end{align*}
    \end{subequations}
    The considered candidate solution is given by
    \begin{subequations}\label{transient_candidate}\begin{alignat}{2} 
        v_{j|k+1} &= v^\ast_{j+1|k},   && \ j \in\mathbb{I}_{[0,N-1]}\label{transient_candidate-a}\\
        v^\mathrm{s}_{k+1} &= v^{\mathrm{s}\ast}_k,\quad\ \, v_{N|k+1} = v^s_{k+1}, && \label{transient_candidate-g}\\
        \mathbb{X}_{j|k+1} &= \mathbb{X}^\ast_{j+1|k},\  z_{j|k+1} = z_{j+1|k}^\ast,  && \ j \in\mathbb{I}_{[1,N]}\label{transient_candidate-b}\\
        \hat{x}_{j+1|k+1} &= f(\hat{x}_{j|k+1}, \hat{u}_{j|k+1}, \hat{\theta}_{k+1}),      &&\  j \in\mathbb{I}_{[0,N]} \label{transient_candidate-e}\\
        \hat{u}_{j|k+1} &= \kappa(\hat{x}_{j|k+1},z_{j|k+1},v_{j|k+1}),\ \                 &&\ j \in\mathbb{I}_{[0,N]}.\label{transient_candidate-f}
    \end{alignat}\end{subequations}
    We shift the nominal trajectory  $(z^\ast_{\cdot|k},v^\ast_{\cdot|k})$ and the prediction tubes $\mathbb{X}_{\cdot|k}^\ast$ in~\eqref{transient_candidate-a}--\eqref{transient_candidate-b} and choose $\mathbb{X}_{N|k+1} = \mathbb{X}^\ast_{N|k}$, which guarantees that the nominal candidate trajectory satisfies the constraints~\eqref{robust-MPC_scheme-b}, \eqref{robust-MPC_scheme-c}, and~\eqref{robust-MPC_scheme-d} at time $k+1$. The satisfaction of~\eqref{MPC_scheme-g} is guaranteed by shifting the tube with~\eqref{transient_candidate-b} and thanks to the tube propagation property~\eqref{assump_Phi-function-eq}, i.e., $x_{k+1},z_{1|k}^\ast\in\mathbb{X}^\ast_{1|k}$ due to $\theta_k,\bar{\theta}\in\Theta$.

    The constraints~\eqref{MPC_scheme-d} and~\eqref{MPC_scheme-e} are satisfied by definition~\eqref{transient_candidate-e}--\eqref{transient_candidate-f}. As the last nominal state is $z_{N|k+1}=z^\ast_{N|k}$ and the steady-input is $v^\mathrm{s}_{k+1}=v^\mathrm{s\ast}_k$, constraint~\eqref{robust-MPC_scheme-f2} holds. 
    The constraint~\eqref{robust-MPC_scheme-f} ensures that the last predicted tube is robust positive invariant and choosing $\mathbb{X}_{N|k+1} = \mathbb{X}^\ast_{N|k}$ guarantees satisfaction of~\eqref{robust-MPC_scheme-f} and~\eqref{robust-MPC_scheme-f2} at the next time step. 
    Inequality~\eqref{MPC_scheme-j} holds with equality for the candidate solution due to~\eqref{eq:lambda_k}.

    In the following, we abbreviate
    \begin{alignat*}{1}
        \ell_{j|k+1} &:= \ell(\hat{x}_{j|k+1}, \hat{u}_{j|k+1}), \quad  \ell^\ast_{j|k} \,:=\, \ell(\hat{x}^\ast_{j|k},\hat{u}^\ast_{j|k}), \\
        \Delta\hat{\theta}_k &:= \hat{\theta}_{k+1} - \hat{\theta}_k,  
    \end{alignat*}
for $j \in\mathbb{I}_{[0,N]}$, and note that $\ell_{0|k}^\ast = \ell(x_k,u_k)$. Recall from Section~\ref{ssec:LMS} that $\tilde{x}_{1|k} +w_k= x_{k+1}-\hat{x}_{1|k}^\ast$ and $\Delta\theta_k = \theta_{k+1} - \theta_k$.
    \paragraph*{Bounding the change of the cost function}
Based on the feasible candidate~\eqref{transient_candidate}, the optimal cost $\mathcal{J}^\ast_N$ satisfies
\begin{align} \label{transient-proof_eq-a}
\begin{split}
    \mathcal{J}^\ast_N(&x_{k+1},\hat{\theta}_{k+1}, \lambda_{k+1})-\mathcal{J}^\ast_N(x_k,\hat{\theta}_k, \lambda_k) +\ell^\ast_{0|k} \\
     &\leq \sum_{j=1}^{N-1} \left(\ell_{j-1|k+1} - \ell_{j|k}^\ast\right) - \ell_\mathrm{f}(\hat{x}^\ast_{N|k}, \hat{x}^\mathrm{s\ast}_k, v^\mathrm{s\ast}_k, \hat{\theta}_k) \\
    &\quad +\ell_{N-1|k+1} + \ell_\mathrm{f}(\hat{x}_{N|k+1}, \hat{x}^\mathrm{s}_{k+1}, v^\mathrm{s}_{k+1}, \hat{\theta}_{k+1})  \\
    & \quad + \beta \ell^\mathrm{s}(v^\mathrm{s}_{k+1},\hat{\theta}_{k+1}) -\beta \ell^\mathrm{s}(v^{\mathrm{s}\ast}_k,\hat{\theta}_k). 
\end{split}
\end{align}
By using the terminal cost's property~\eqref{assump_terminal-cost-decrease-eq} and the Lipschitz continuity of $\ell_\mathrm{f}$ and $h$, we can bound the terminal cost terms as follows 
{\allowdisplaybreaks
\begin{align}
    & \ell_\mathrm{f}(\hat{x}_{N|k+1}, \hat{x}^\mathrm{s}_{k+1}, v^\mathrm{s}_{k+1}, \hat{\theta}_{k+1})-\ell_\mathrm{f}(\hat{x}^\ast_{N|k}, \hat{x}^\mathrm{s\ast}_k, v^\mathrm{s\ast}_k, \hat{\theta}_k)   \nonumber \\
    & \ \ \stackrel{\eqref{assump_terminal-cost-decrease-eq}}{\leq}  -\ell^\ast_{N|k} -\ell_\mathrm{f}(\hat{x}^\ast_{N+1|k},\hat{x}^\mathrm{s\ast}_k, v^\mathrm{s\ast}_k, \hat{\theta}_k)\nonumber\\
        &\qquad + \ell_\mathrm{f}(\hat{x}_{N|k+1}, \hat{x}^\mathrm{s}_{k+1}, v^\mathrm{s}_{k+1}, \hat{\theta}_{k+1}) + \ell^\mathrm{s}(v^{\mathrm{s}\ast}_k,\hat{\theta}_k) \nonumber\\
    & \ \ \stackrel{\eqref{assump_Lf}}{\leq} -\ell^\ast_{N|k}  +  L_\mathrm{f} \left(  \|\hat{x}_{N|k+1} - \hat{x}^\ast_{N+1|k}\| + \|v^{\mathrm{s}}_{k+1}-v^\mathrm{s\ast}_k\| \right. \nonumber \\
        & \qquad \left.  +\| h(v^{\mathrm{s}}_{k+1}, \hat{\theta}_{k+1}) - h(v^{\mathrm{s}\ast}_k, \hat{\theta}_k) \| + \|\Delta\hat{\theta}_k\|\right) + \ell^\mathrm{s}(v^{\mathrm{s}\ast}_k,\hat{\theta}_k) \nonumber\\
    & \stackrel{\eqref{assump_equilib-manifold-eq},\eqref{transient_candidate-g}}{\leq} -\ell(\hat{x}^\ast_{N|k},\kappa(\hat{x}^\ast_{N|k}, z^\ast_{N|k}, v^{\mathrm{s}\ast}_k)) + \ell^\mathrm{s}(v^{\mathrm{s}\ast}_k,\hat{\theta}_k) \nonumber\\  
    &\qquad +L_\mathrm{f} (1+ L_\mathrm{h}) \|\Delta\hat{\theta}_k\|  +  L_\mathrm{f} \|\hat{x}_{N|k+1} - \hat{x}^\ast_{N+1|k}\|.
    \label{transient-proof_eq-b}
\end{align}
}
\paragraph*{Combination with LMS properties}
Recall that $\ell^\mathrm{s}(v,\theta) = \ell(h(v,\theta),\ \kappa_\mathrm{s}(v,\theta))$.
For any $v\in\mathbb{Z}_u$ and any $\theta,\theta'\in\Theta$, it holds that
\begin{align}
        \ell^\mathrm{s}(v,\theta) - \ell^\mathrm{s}(v,\theta') 
        &\stackrel{\eqref{assump_regularity-eq-b}}{\leq} L_{\ell} \|h(v,\theta) - h(v,\theta')\| \nonumber\\
        &\stackrel{\eqref{assump_equilib-manifold-eq}}{\leq} L_\mathrm{h} L_{\ell} \|\theta-\theta'\|=: L_\mathrm{s} \|\theta-\theta'\|,\label{transient-proof_eq-c}
    \end{align}
    where $L_\mathrm{s}\geq0$. 
The inequality 
 \begin{equation} \label{transient-proof_eq-d}
            \|\hat{x}_{j-1|k+1} - \hat{x}^\ast_{j|k}\|   \leq   C_{\mathrm{dyn},\kappa} \cdot \|\tilde{x}_{1|k} + w_k\|~~\forall j\in\mathbb{I}_{[1,N+1]},
        \end{equation}
holds with a uniform constant $C_{\mathrm{dyn},\kappa}\geq 0$ thanks to the Lipschitz continuity of the dynamics~\eqref{assump_regularity-eq-a} and the feedback $\kappa$~\eqref{assump_regularity-eq-c}, see Proposition~\ref{prop_consecutive-distances} in Appendix~\ref{appendix_aux-results}. 
Combining the intermediate results~\eqref{transient-proof_eq-a}--\eqref{transient-proof_eq-d} yields
\begin{align}
     \mathcal{J}_N^\ast(&x_{k+1},\hat{\theta}_{k+1}, \lambda_{k+1})-\mathcal{J}^\ast_N(x_k,\hat{\theta}_k, \lambda_k) +\ell^\ast_{0|k} \nonumber\\
     & \bquad \bquad \stackrel{\eqref{transient-proof_eq-a}-\eqref{transient-proof_eq-c}}{\leq}
     \sum_{j=1}^{N}\left(\ell_{j-1|k+1} - \ell_{j|k}^\ast\right) 
     + L_\mathrm{f} (1+ L_\mathrm{h}) \|\Delta\hat{\theta}_k\| \nonumber\\
     &  \qquad +  L_\mathrm{f} \|\hat{x}_{N|k+1} - \hat{x}^\ast_{N+1|k}\| 
     + (1- \beta) \ell^\mathrm{s}(v^{\mathrm{s}\ast}_k,\hat{\theta}_k) \nonumber\\
     &\qquad + \beta \cdot \left( L_\mathrm{s} \| \Delta\hat{\theta}_k\| + \ell^\mathrm{s}(v^{\mathrm{s}\ast}_k,\hat{\theta}_k)\right) \nonumber \\ 
     &\bquad\bquad\stackrel{\eqref{transient-proof_eq-d},\eqref{assump_regularity-eq-b}}{\leq} 
     C_{\mathrm{dyn},\kappa}(L_\ell(N-1)+L_\mathrm{f})\|\tilde{x}_{1|k}+w_k\| 
     \nonumber\\
     &\qquad  + (L_\mathrm{f} (1+ L_\mathrm{h})+\beta L_\mathrm{s}) \|\Delta\hat{\theta}_k\| +  \ell^\mathrm{s}(v^{\mathrm{s}\ast}_k,\hat{\theta}_k)  
     \nonumber \\
    &\bquad \!\stackrel{\eqref{probsetup_LMS_gain-definition},\eqref{probsetup_LMS-bound_proof_aux-result}}{\leq} (C_A + \sqrt{\mu}\beta L_\mathrm{s}) \|\tilde{x}_{1|k}+w_k\|  +  \ell^\mathrm{s}(v^{\mathrm{s}\ast}_k,\hat{\theta}_k),
    \label{thm_transient-proof_intermediate}
\end{align}
with
\begin{align*}
    C_A &:= C_{\mathrm{dyn},\kappa}(L_\ell(N-1)+L_\mathrm{f}) + \sqrt{\mu}L_\mathrm{f} (1+ L_\mathrm{h}). 
\end{align*}
    Thus, taking a telescopic sum of~\eqref{thm_transient-proof_intermediate} and using the bounds on the cumulative prediction error and disturbances from Lemma~\ref{lemma_accum_x_tilde} yields
    \begin{align*}
        &\mathcal{J}^\ast_N(x_T,\hat{\theta}_T, \lambda_T) - \mathcal{J}^\ast_N(x_0,\hat{\theta}_0, \lambda_0) + \sum_{k=0}^{T-1} \ell(x_k,u_{k})\\
        & \ \stackrel{\eqref{thm_transient-proof_intermediate}}{\leq}
        \sum_{k=0}^{T-1}\left[\ell^\mathrm{s}(v^\mathrm{s}_k,\hat{\theta}_k) + (C_A + \sqrt{\mu}\beta L_\mathrm{s}) \|\tilde{x}_{1|k}+w_k\|\right] \nonumber\\
        & \!\!\stackrel{\eqref{proof_w-bound-eq},\eqref{lemma_accum_x_tilde-eq}}{\leq}
        \! \sum_{k=0}^{T-1}\left[\ell^\mathrm{s}(v^\mathrm{s}_k,\hat{\theta}_k)\right] + \sqrt{T}(C_A + \sqrt{\mu}\beta L_\mathrm{s})\!\cdot\! \left[2|w_{[0,T-1]}|_{\mathcal{L}_2}\vphantom{\sqrt{\frac{1}{\mu}}}\right. \nonumber\\
        & \qquad +\left.\sqrt{\frac{1}{\mu}}\|\hat{\theta}_0-\theta_0\| + \sqrt{\frac{2c_\theta}{\mu}}\sum_{k=0}^{T-1}  \sqrt{ \|\Delta\theta_k\|}\right].\nonumber
    \end{align*}
    A uniform constant 
    \begin{align} \label{eq:C_J-definition}
        -C_\mathcal{J} \leq \mathcal{J}^\ast_N(x_T,\hat{\theta}_T, \lambda_T) - \mathcal{J}^\ast_N(x_0,\hat{\theta}_0, \lambda_0)
    \end{align}
    \!exists due to the boundedness of $\ell(x,u)$ and $\ell_\mathrm{f}(x,z^\mathrm{s}_k,\theta)$ on the compact set~$\mathbb{Z}$ and due to $\beta < \infty$.
    Reordering the terms in the last inequality and using the constant $C_{\mathcal{J}}$ results in~\eqref{thm_transient-eq}.
    \qed

\subsubsection{Proof of Corollary~\ref{corr_asymptotic_performance}}
    Note that $\lambda_\infty=\limsup_{k\rightarrow\infty}\lambda_k$ is finite because $\lambda_k$ is bounded due to compact $\mathbb{Z},\Theta$. Similar to~\cite[Theorem~1]{mullerEconomicModelPredictive2013}, we define the auxiliary variable $\zeta_k:= -\lambda_\infty +  \ell^\mathrm{s}(v^{\mathrm{s}\ast}_k,\hat{\theta}_k)$.  
    Using  Conditions~\eqref{MPC_scheme-j} and \eqref{eq:lambda_k} and Lipschitz continuity~\eqref{transient-proof_eq-c} gives 
    \begin{align}\label{lambda_decrease_eq}
    \lambda_{k+1}\leq \lambda_k+L_{\mathrm{s}}\|\Delta\hat{\theta}_k\|.
    \end{align}
    Hence, 
    \begin{align}
        &\limsup_{T\rightarrow\infty} \dfrac{1}{T}\sum_{k=0}^{T}\zeta_k \nonumber\\
        &\bquad\!\stackrel{\eqref{eq:lambda_k},\eqref{transient-proof_eq-c}}{\leq} \limsup_{T\rightarrow\infty} \dfrac{1}{T}\sum_{k=0}^{T}\lambda_{k+1}-\lambda_\infty + L_\mathrm{s}\|\Delta\hat{\theta}_k\|\label{Stolz-Cesaro_bound}\\
        &\leq \limsup_{k\rightarrow\infty} \lambda_{k+1}-\lambda_\infty + L_\mathrm{s}\|\Delta\hat{\theta}_k\|=\lim_{k\rightarrow\infty}L_\mathrm{s}\|\Delta\hat{\theta}_k\|=0, \nonumber
    \end{align}
    where the second inequality holds using the Stolz-Cesàro-Theorem~\cite[Section~3.4]{tothRationalRealExponentiation2021} and $\|\Delta \hat{\theta}_k\|\rightarrow0$ for $k\rightarrow \infty$ follows from~\eqref{probsetup_prop_LMS-bound-eq}--\eqref{probsetup_prop_LMS-theta-diff-eq} and Assumption~\ref{assump_finite-energy}.   
    
    Thus, the intermediate result~\eqref{thm_transient-proof_intermediate} yields
    \begin{align}\label{asympt_proof_final}
        &\limsup_{T\rightarrow\infty} \left\{ \frac{\mathcal{J}^\ast_N(x_{T}, \hat{\theta}_{T}, \lambda_T)\! -\! \mathcal{J}_N^\ast(x_{0}, \hat{\theta}_{0}, \lambda_0)}{T}+\frac{\!\sum_{k=0}^{T-1} \ell_{0|k}^\ast}{T}\right\}\nonumber \\
        \!&\stackrel{\eqref{thm_transient-proof_intermediate}}{\leq} \limsup_{T\rightarrow\infty} \frac{1}{T}\sum_{k=0}^{T-1} \left[   (C_A\!+\!\sqrt{\mu}\beta L_s) \|\Tilde{x}_{1|k}+w_k\| \!+\!\ell^s(v^{\mathrm{s}\ast}_k,\hat{\theta}_k) \right]\nonumber \\ 
        &\stackrel{\eqref{lemma_asymp-lin-bound_equation}}{=} \limsup_{T\rightarrow\infty} \frac{1}{T}\left[ T\lambda_\infty + \sum_{k=0}^{T-1} \zeta_k \right]  
     \stackrel{\eqref{Stolz-Cesaro_bound}}{\leq}\lambda_\infty. 
    \end{align} 
The cost function~\eqref{MPC_scheme-a} is uniformly bounded on $\mathbb{Z}$. Thus, $\mathcal{J}^\ast_N(x_k,\hat{\theta}_k, \lambda_k)$ is also bounded and 
\begin{equation} \label{proof_limitJ}
    \limsup_{T\rightarrow\infty} \frac{\mathcal{J}^\ast_N(x_0,\hat{\theta}_0, \lambda_0) - \mathcal{J}^\ast_N(x_T,\hat{\theta}_T, \lambda_T) }{T} =0.    
\end{equation}
By reordering the terms in~\eqref{asympt_proof_final} and using~\eqref{proof_limitJ}, we obtain the desired performance bound~\eqref{corr_asymptotic_performance-eq}.
\qed

\subsubsection{Proof of Corollary~\ref{corr_subopt}}
    The first inequality in~\eqref{corr_subopt-eq-b} is equivalent to~\eqref{corr_asymptotic_performance-eq} in Corollary~\ref{corr_asymptotic_performance}. 
    In the following, we prove the second inequality by adapting the proof  from~\cite{fagianoGeneralizedTerminalState2013}. 
    
    Denote the optimal solution to Problem~\eqref{MPC_scheme} at time $k$ by ${v}^\ast_{\cdot|k}$ and $\ell^\ast:=\ell(\hat{x}^\mathrm{s\ast}_k,\hat{u}^\mathrm{s\ast}_k)$, which exists due to recursive feasibility. By definition~\eqref{def_opt_achiev_cost}, there also exists a feasible sequence $\tilde{v}_{\cdot|k}$ with a setpoint $\ell(\hat{\tilde{x}}^\mathrm{s}_k, \hat{\tilde{u}}^{\mathrm{s}}_k) = \underline{\ell}(x_k,\hat{\theta}_k, \lambda_k)=:\underline{\ell}$. 
     Assume for contradiction that 
     \begin{align}\label{subopt_proof_contradict-eq}
         \ell^\ast > \underline{\ell}+\varepsilon.
     \end{align} 
     Let us denote the cost of the feasible candidate solution and the optimal solution by $\tilde{\mathcal{J}}+\beta\underline{\ell}$ and $\mathcal{J}^\ast+\beta \ell^\ast$, respectively.  
     Given $\mathbb{Z}$ compact and Lipschitz-continuity of the dynamics, cost, and feedback (Assumptions~\ref{assump_regularity} and~\ref{assump_terminal-ingredients}), the difference of the feasible open-loop costs is bounded by some uniform constant $\eta>0$, i.e., $|\mathcal{\tilde{J}}-\mathcal{J}^\ast|\leq \eta$. 
     Thus, we arrive at the following contradiction
     \begin{align*}
       \mathcal{J}^\ast + \beta \ell^\ast\leq &\tilde{\mathcal{J}}+\beta \underline{\ell}\leq \eta+\mathcal{J}^\ast + \beta \underline{\ell}
     \stackrel{\eqref{subopt_proof_contradict-eq}}{<} \eta+\mathcal{J}^\ast + \beta (\ell^\ast - \varepsilon)\\
      \leq&  \mathcal{J}^\ast + \beta \ell^\ast.
     \end{align*}
    The first inequality holds because $v^\ast_{\cdot|k}$ is the optimal solution of the MPC problem and thus cannot have a larger cost than the feasible candidate $\tilde{v}_{\cdot|k}$. 
    The last inequality holds by setting $\beta\geq \underline{\beta}(\epsilon):=\eta/\epsilon$. 
     Given this contradiction,~\eqref{subopt_proof_contradict-eq} cannot hold. Thus
    $\ell^\ast\leq \underline{\ell}+\varepsilon$, i.e., Inequality~\eqref{corr_subopt-eq-b} holds.    
\qed

\subsection{Auxiliary Lemmas} \label{appendix_aux-results}

\subsubsection*{Proof of Lemma~\ref{lemma_asymp-lin-bound}}
    We use Lemma~\ref{lemma_infinite-series} from Appendix~\ref{appendix_aux-results} with $a_k = \|w_k\|^2$ and $\alpha(s) = \sqrt{s}, \alpha\in \mathcal{K}_\infty$ to get
    \begin{align}
        \lim_{T\rightarrow\infty}\sum_{k=0}^{T-1} \frac{\sqrt{\|w_k\|^2}}{T} = \lim_{T\rightarrow\infty}\sum_{k=0}^{T-1} \frac{\|w_k\|}{T}= 0  \label{lemma_bound_proof_a}.
    \end{align}
    Furthermore, $\Theta$ compact implies $\frac{1}{\mu}\|\hat{\theta}_0-\theta_0\|^2<\infty$. Together with Assumption~\ref{assump_finite-energy} and Proposition~\ref{probsetup_prop_LMS-bound} this implies
    \begin{align*}
       \sum_{k=0}^{T-1} \|\tilde{x}_{1|k}\|^2 \stackrel{\eqref{probsetup_prop_LMS-bound-eq}}{\leq} S_{\mathrm{w}} + \frac{1}{\mu}\|\hat{\theta}_0-\theta_0\|^2 + \frac{2c_\theta}{\mu} S_\theta < \infty.
        \end{align*}
    Applying  Lemma~\ref{lemma_infinite-series} 
    with $a_k = \|\tilde{x}_{1|k}\|^2$ yields   
    \begin{align}
         \lim_{T\rightarrow\infty}\sum_{k=0}^{T-1} \frac{\sqrt{\|\tilde{x}_{1|k}\|^2}}{T}
        =\! \lim_{T\rightarrow\infty}\sum_{k=0}^{T-1} \frac{\|\tilde{x}_{1|k}\|}{T} = 0  .\label{lemma_bound_proof_b}
    \end{align}
    Adding~\eqref{lemma_bound_proof_a} and~\eqref{lemma_bound_proof_b} yields the desired inequality. 
    \qed


\begin{lemma}[Accumulated disturbance and prediction error]\label{lemma_accum_x_tilde}
   Let Assumption~\ref{assump_bounded-disturb-param} hold. Suppose $(x_k,u_k)\in\mathbb{Z}$ for all $k\in\mathbb{N}$. Then, for all $T\in \mathbb{N}$, it holds that
    \begin{align}
    \begin{split}
        \sum_{k=0}^{T-1} \|w_k\|
            &\leq \sqrt{T}
            |w_{[0,T-1]}|_{\mathcal{L}_2}, \label{proof_w-bound-eq}
    \end{split}\\
    \begin{split}
        \sum_{k=0}^{T-1} \|\tilde{x}_{1|k}\| 
        & \leq \sqrt{T}|w_{[0,T-1]}|_{\mathcal{L}_2} + \sqrt{\frac{T}{\mu}} \|\hat{\theta}_0-\theta_0\| \\
        &\qquad + \sqrt{\frac{2T c_\theta}{\mu}} \sum_{k=0}^{T-1}  \sqrt{ \|\Delta\theta_k\|}. \label{lemma_accum_x_tilde-eq}
        \end{split}
    \end{align}
\end{lemma}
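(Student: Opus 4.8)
The plan is to obtain both inequalities from the Cauchy--Schwarz inequality, combined with the cumulative prediction-error bound of Proposition~\ref{probsetup_prop_LMS-bound} and elementary subadditivity of the square root. No smallness, regularity, or feasibility argument is needed here; the lemma is purely an estimate on the sequences $w_k$ and $\tilde x_{1|k}$ under the standing assumption $(x_k,u_k)\in\mathbb{Z}$ (which is what makes Proposition~\ref{probsetup_prop_LMS-bound} applicable). For $T=0$ both claims are trivial since all sums are empty, so one may assume $T\geq 1$.

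For~\eqref{proof_w-bound-eq}, I would write $\sum_{k=0}^{T-1}\|w_k\| = \sum_{k=0}^{T-1} 1\cdot\|w_k\|$ and apply Cauchy--Schwarz to the sequences $(1)_{k\in\mathbb{I}_{[0,T-1]}}$ and $(\|w_k\|)_{k\in\mathbb{I}_{[0,T-1]}}$, which yields $\sum_{k=0}^{T-1}\|w_k\| \le \sqrt{T}\,\bigl(\sum_{k=0}^{T-1}\|w_k\|^2\bigr)^{1/2} = \sqrt{T}\,|w_{[0,T-1]}|_{\mathcal{L}_2}$, exactly~\eqref{proof_w-bound-eq}.

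For~\eqref{lemma_accum_x_tilde-eq}, I would first apply Cauchy--Schwarz in the same way to get $\sum_{k=0}^{T-1}\|\tilde x_{1|k}\| \le \sqrt{T}\,\bigl(\sum_{k=0}^{T-1}\|\tilde x_{1|k}\|^2\bigr)^{1/2}$, then invoke Proposition~\ref{probsetup_prop_LMS-bound} \emph{with horizon $T-1$ in place of $T$} (so that the index ranges line up) to bound $\sum_{k=0}^{T-1}\|\tilde x_{1|k}\|^2 \le \tfrac{1}{\mu}\|\hat\theta_0-\theta_0\|^2 + \sum_{k=0}^{T-1}\|w_k\|^2 + \tfrac{2c_\theta}{\mu}\sum_{k=0}^{T-1}\|\Delta\theta_k\|$. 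Using $\sqrt{a+b+c}\le\sqrt a+\sqrt b+\sqrt c$ for nonnegative $a,b,c$, together with the (iterated) inequality $\bigl(\sum_k \|\Delta\theta_k\|\bigr)^{1/2}\le \sum_k \|\Delta\theta_k\|^{1/2}$, gives $\bigl(\sum_{k=0}^{T-1}\|\tilde x_{1|k}\|^2\bigr)^{1/2} \le \tfrac{1}{\sqrt\mu}\|\hat\theta_0-\theta_0\| + |w_{[0,T-1]}|_{\mathcal{L}_2} + \sqrt{\tfrac{2c_\theta}{\mu}}\sum_{k=0}^{T-1}\sqrt{\|\Delta\theta_k\|}$. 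Multiplying through by $\sqrt T$ and using $\sqrt{T}\cdot\sqrt{2c_\theta/\mu}=\sqrt{2Tc_\theta/\mu}$ produces precisely~\eqref{lemma_accum_x_tilde-eq}.

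The only points requiring care are the off-by-one index shift when applying Proposition~\ref{probsetup_prop_LMS-bound} and bookkeeping of the square-root subadditivity so that the constants $\sqrt{T/\mu}$ and $\sqrt{2Tc_\theta/\mu}$ come out exactly as stated; there is no substantive obstacle. This lemma is then consumed in the telescoping-sum step at the end of the proof of Theorem~\ref{thm_transient}.
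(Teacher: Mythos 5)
Your proposal is correct and follows essentially the same route as the paper's own proof: Cauchy--Schwarz (equivalently, $\|W\|_1\le\sqrt{T}\|W\|$) for both sums, the bound of Proposition~\ref{probsetup_prop_LMS-bound} for the prediction-error term, and subadditivity of the square root (including $\sqrt{\sum_k\|\Delta\theta_k\|}\le\sum_k\sqrt{\|\Delta\theta_k\|}$) to split the constants. Your explicit handling of the $T-1$ index shift and the $\sqrt{T}$ bookkeeping matches what the paper does implicitly, so there is nothing to add.
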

\begin{proof}
Using the equivalence of norms, we have
    \begin{align*}
            \sum_{k=0}^{T-1}\|w_k\|
            &\leq \sqrt{T}\sqrt{\sum_{k=0}^{T-1} \|w_k\|^2}\nonumber\\
            &=
            \sqrt{T}\sqrt{|w_{[0,T-1]}|_{\mathcal{L}_2}^2} = \sqrt{T}|w_{[0,T-1]}|_{\mathcal{L}_2},
    \end{align*}  
    where we used that $\|W\|_1\leq \sqrt{T}\|W\|$ with $[W]_i = \|w_i\|$.
    Similarly, we obtain 
    {\allowdisplaybreaks
    \begin{align*}
        &\sum_{k=0}^{T-1} \|\tilde{x}_{1|k}\| 
        \leq \sqrt{T}\sqrt{\sum_{k=0}^{T-1}\|\tilde{x}_{1|k}\|^2} \\
        \stackrel{\eqref{probsetup_prop_LMS-bound-eq}}{\leq}& \sqrt{T}\sqrt{\frac{1}{\mu}  \|\hat{\theta}_0 - \theta_0\|^2 
        +\sum_{k=0}^{T-1} \left[\|w_k\|^2 + \frac{2c_\theta}{\mu} \|\Delta\theta_k\| \right]} \\
         \stackrel{\eqref{proof_w-bound-eq}}{\leq}& \sqrt{T}|w_{[0,T-1]}|_{\mathcal{L}_2}+\sqrt{\frac{T}{\mu}} \left(\|\hat{\theta}_0-\theta_0\| \!+\!  \sqrt{2c_\theta\sum_{k=0}^{T-1}   \|\Delta\theta_k\|}\right) \\
         \leq &\sqrt{T}|w_{[0,T-1]}|_{\mathcal{L}_2} \!+\! \sqrt{\frac{T}{\mu}}\left(\! \|\hat{\theta}_0-\theta_0\| \!+\! \sqrt{2c_\theta}\sum_{k=0}^{T-1}  \sqrt{ \|\Delta\theta_k\|} \!\right)\!, 
    \end{align*}    
    }
where we used ${\sqrt{a+b} \leq \sqrt{a}+\sqrt{b}}$ in the last inequality.
\end{proof} 

\begin{proposition}[Distance of consecutive state predictions]
\label{prop_consecutive-distances} 
    Suppose the conditions in Theorem~\ref{thm_transient} hold. Then, for any time $k\in\mathbb{I}_{\geq 0}$, the candidate~\eqref{transient_candidate} solution satisfies for all $j\in\mathbb{I}_{[1,N+1]}$
        \begin{equation} \label{prop_prelim-result-ineq}
            \|\hat{x}_{j-1|k+1} - \hat{x}_{j|k}\| \leq 
            C_{\mathrm{dyn},\kappa} \cdot \|\tilde{x}_{1|k} + w_k\|,
        \end{equation}
    with $C_{\mathrm{dyn},\kappa}:= \sum_{i=0}^{N} \left[L_\mathrm{dyn}(1+L_\kappa)\right]^i>0$.

\end{proposition}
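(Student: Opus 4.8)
The assertion is a discrete sensitivity (Gr\"onwall-type) estimate: the candidate open-loop prediction at time $k+1$ is just the shifted optimal prediction at time $k$, so the two rollouts are coupled only through the one-step discrepancy of the initial state and through the change of the parameter estimate. I would prove~\eqref{prop_prelim-result-ineq} by forward induction on the prediction index. Write $e_i := \hat{x}_{i|k+1} - \hat{x}^\ast_{i+1|k}$ for $i\in\mathbb{I}_{[0,N]}$ (so $i=j-1$), where $\hat{x}^\ast_{\cdot|k}$ is the optimizer of~\eqref{MPC_scheme} at time $k$, extended to index $N+1$ as in the recursive-feasibility construction. For the base case, $\hat{x}^\ast_{0|k}=x_k$, the applied input equals $u_k=\kappa(x_k,z^\ast_{0|k},v^\ast_{0|k})=\hat{u}^\ast_{0|k}$, hence $\hat{x}^\ast_{1|k}=f(x_k,u_k,\hat{\theta}_k)$, which by~\eqref{nonlin_discrete_system-eq}--\eqref{LMS_one-step_prediction-eq} is precisely the LMS one-step prediction $\hat{x}_{1|k}$; with $\hat{x}_{0|k+1}=x_{k+1}$ and the identity $\tilde{x}_{1|k}+w_k=x_{k+1}-\hat{x}_{1|k}$ this gives $\|e_0\|=\|\tilde{x}_{1|k}+w_k\|\le C_{\mathrm{dyn},\kappa}\|\tilde{x}_{1|k}+w_k\|$ (the $i=0$ summand of $C_{\mathrm{dyn},\kappa}$ equals $1$).

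For the inductive step, the key structural point is that the candidate~\eqref{transient_candidate} uses exactly the shifted optimal nominal data, $z_{i|k+1}=z^\ast_{i+1|k}$, $v_{i|k+1}=v^\ast_{i+1|k}$ (with the conventions $z^\ast_{N+1|k}:=z^\ast_{N|k}$, $v^\ast_{N|k}:=v^{\mathrm{s}\ast}_k$ from the recursive-feasibility proof), so that in $\hat{u}_{i|k+1}=\kappa(\hat{x}_{i|k+1},z_{i|k+1},v_{i|k+1})$ versus $\hat{u}^\ast_{i+1|k}=\kappa(\hat{x}^\ast_{i+1|k},z^\ast_{i+1|k},v^\ast_{i+1|k})$ only the state argument differs. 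Feasibility of the candidate and of the optimal solution keeps all these states and inputs in $\mathbb{Z}_x\times\mathbb{Z}_u$ (each certainty-equivalent prediction lies in the corresponding tube by~\eqref{assump_Phi-function-eq}, applied with a disturbance realizing $Ed=0$ and with $\hat{\theta}_k\in\Theta$, and the tube satisfies~\eqref{robust-MPC_scheme-d}), so the Lipschitz bounds apply along the whole horizon. Chaining~\eqref{assump_regularity-eq-c} for $\kappa$ and then~\eqref{assump_regularity-eq-a} for $f$ (which also controls the parameter mismatch between $\hat{x}_{i+1|k+1}=f(\hat{x}_{i|k+1},\hat{u}_{i|k+1},\hat{\theta}_{k+1})$ and $\hat{x}^\ast_{i+2|k}=f(\hat{x}^\ast_{i+1|k},\hat{u}^\ast_{i+1|k},\hat{\theta}_k)$) gives, for $i\in\mathbb{I}_{[0,N-1]}$,
\[
 \|e_{i+1}\| \;\le\; L_\mathrm{dyn}(1+L_\kappa)\,\|e_i\| \;+\; L_\mathrm{dyn}\,\|\hat{\theta}_{k+1}-\hat{\theta}_k\| .
\]

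To close the induction, the parameter increment is controlled by the LMS property~\eqref{probsetup_prop_LMS-theta-diff-eq}: $\|\hat{\theta}_{k+1}-\hat{\theta}_k\|\le\sqrt{\mu}\,\|\tilde{x}_{1|k}+w_k\|=\sqrt{\mu}\,\|e_0\|$, with $\sqrt{\mu}\max_{(x,u)\in\mathbb{Z}}\|G(x,u)\|\le 1$ by~\eqref{probsetup_LMS_gain-definition}. Thus each step adds a term proportional to $\|e_0\|$, and iterating the recursion with base $\|e_0\|=\|\tilde{x}_{1|k}+w_k\|$ over the $N$ steps $i=0,\dots,N-1$ collapses the estimate to $\|e_i\|\le\big(\sum_{m=0}^{N}[L_\mathrm{dyn}(1+L_\kappa)]^m\big)\|\tilde{x}_{1|k}+w_k\|=C_{\mathrm{dyn},\kappa}\,\|\tilde{x}_{1|k}+w_k\|$ for all $i\in\mathbb{I}_{[0,N]}$, i.e.~\eqref{prop_prelim-result-ineq}. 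The running hypothesis $(x_k,u_k)\in\mathbb{Z}$ needed for the LMS estimates is automatic here, since the conditions of Theorem~\ref{thm_transient} already entail recursive feasibility.

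The only genuinely delicate point — and the main obstacle — is the bookkeeping of the parameter change $\Delta\hat{\theta}_k=\hat{\theta}_{k+1}-\hat{\theta}_k$: the candidate at $k+1$ must satisfy the prediction dynamics with the \emph{updated} estimate $\hat{\theta}_{k+1}$ while the reference rollout at $k$ uses $\hat{\theta}_k$, so each recursion step injects a fresh $L_\mathrm{dyn}\|\Delta\hat{\theta}_k\|$ contribution; it is this additive injection (rather than a purely multiplicative propagation, which would only give $[L_\mathrm{dyn}(1+L_\kappa)]^N$) that produces the geometric-sum constant. Obtaining precisely $C_{\mathrm{dyn},\kappa}=\sum_{i=0}^{N}[L_\mathrm{dyn}(1+L_\kappa)]^i$ relies on $L_\mathrm{dyn}\sqrt{\mu}$ being suitably dominated; alternatively one states~\eqref{prop_prelim-result-ineq} with an extra $\|\Delta\hat{\theta}_k\|$ term on the right and absorbs it, via~\eqref{probsetup_prop_LMS-theta-diff-eq}, in the telescoping step of the proof of Theorem~\ref{thm_transient}. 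Verifying the alignment of the shifted nominal/tube arguments and that all predictions stay in $\mathbb{Z}_x\times\mathbb{Z}_u$ is routine but should be made explicit.
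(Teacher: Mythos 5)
Your overall strategy is the same as the paper's: induction along the prediction horizon, chaining the Lipschitz bounds \eqref{assump_regularity-eq-c} and \eqref{assump_regularity-eq-a} to get a contraction/growth factor $L_\mathrm{dyn}(1+L_\kappa)$ per step, and using the LMS increment bound \eqref{probsetup_prop_LMS-theta-diff-eq} to turn the parameter change into a multiple of $\|\tilde{x}_{1|k}+w_k\|$. Your base case is correct and in fact spelled out more carefully than in the paper ($\hat{x}^\ast_{1|k}$ coincides with the LMS one-step prediction $\hat{x}_{1|k}$, so $\|e_0\|=\|\tilde{x}_{1|k}+w_k\|$ exactly).

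However, there is a genuine gap in how you obtain the stated constant $C_{\mathrm{dyn},\kappa}=\sum_{i=0}^{N}[L_\mathrm{dyn}(1+L_\kappa)]^i$, and you have half-noticed it yourself. Your recursion injects $L_\mathrm{dyn}\|\Delta\hat{\theta}_k\|\leq L_\mathrm{dyn}\sqrt{\mu}\,\|\tilde{x}_{1|k}+w_k\|$ at every step, so the geometric sum only collapses to the claimed constant if $L_\mathrm{dyn}\sqrt{\mu}\leq 1$, which is neither assumed nor implied by the setup; your fallback of restating the bound with an extra $\|\Delta\hat{\theta}_k\|$ term proves a different (weaker) lemma, not the proposition as stated. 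The missing idea is to exploit the affine parametrization of the dynamics rather than the blanket Lipschitz constant: since $f(x,u,\theta)=f_0(x,u)+G(x,u)\theta$, the parameter-mismatch term in the induction step is exactly $G(\hat{x}_{\lambda-1|k+1},\hat{u}_{\lambda-1|k+1})\Delta\hat{\theta}_k$, and combining \eqref{probsetup_prop_LMS-theta-diff-eq} with the gain choice \eqref{probsetup_LMS_gain-definition} gives
\begin{equation*}
\|G(\hat{x}_{\lambda-1|k+1},\hat{u}_{\lambda-1|k+1})\Delta\hat{\theta}_k\| \;\leq\; \|G\|\,\sqrt{\mu}\,\|\tilde{x}_{1|k}+w_k\| \;\leq\; \|\tilde{x}_{1|k}+w_k\|,
\end{equation*}
i.e.\ the additive injection per step is bounded by $\|\tilde{x}_{1|k}+w_k\|$ with constant $1$, independently of $L_\mathrm{dyn}$. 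With this replacement your recursion reads $\|e_{i+1}\|\leq L_\mathrm{dyn}(1+L_\kappa)\|e_i\|+\|\tilde{x}_{1|k}+w_k\|$ and iterating from $\|e_0\|=\|\tilde{x}_{1|k}+w_k\|$ yields exactly $\|e_i\|\leq\sum_{m=0}^{i}[L_\mathrm{dyn}(1+L_\kappa)]^m\,\|\tilde{x}_{1|k}+w_k\|$, hence \eqref{prop_prelim-result-ineq} with the stated $C_{\mathrm{dyn},\kappa}=C(N+1)$. This is precisely how the paper closes the argument; note also that the usability of \eqref{assump_regularity-eq-a} and \eqref{assump_regularity-eq-c} requires $\hat\theta_k,\hat\theta_{k+1}\in\Theta$ (guaranteed by the projection \eqref{LMS_update_b}) and the predictions remaining in $\mathbb{Z}_x\times\mathbb{Z}_u$, which you correctly tie to feasibility.
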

\begin{proof}
    We use induction to show 
    \begin{align*}
        \|\hat{x}_{j-1|k+1} - \hat{x}^\ast_{j|k}\| \leq C(j)\cdot  \|\tilde{x}_{1|k} + w_k\|,~j\in\mathbb{I}_{[1,N+1]}
    \end{align*}
 with $C(j):= \sum_{i=0}^{j-1} \left[L_\mathrm{dyn}(1+L_\kappa)\right]^i$. 
    
    \emph{Base case ($j=1$): } The base case holds by definition with equality for $C(1)=1$ since
    \begin{equation*}
        \|\hat{x}_{0|k+1}-\hat{x}^\ast_{1|k}\| = \|x_{k+1}-\hat{x}^\ast_{1|k}\| \stackrel{\eqref{prop_prelim-result-ineq}}{=} 1\cdot \|\tilde{x}_{1|k}+w_k\|.
    \end{equation*}
    
    \emph{Induction step ($j=\lambda \rightarrow j=\lambda+1$): } Suppose that for some $j=\lambda \in \mathbb{I}_{\geq 1}$ the inequality
    \begin{equation} \label{prop_prelim_induction-step}
        \|\hat{x}_{\lambda-1|k+1}-\hat{x}^\ast_{\lambda|k}\| \leq C(\lambda)
\|\tilde{x}_{1|k}+w_k\|
    \end{equation}
    holds. Recall from~\eqref{probsetup_prop_LMS-theta-diff-eq} that 
    \begin{align*}
     \|\Delta\hat{\theta}_k \|:=  \|\hat{\theta}_{k+1} - \hat{\theta}_k\|  
        \leq \sqrt{\mu} \cdot \|\tilde{x}_{1|k}+w_k\|. 
    \end{align*}
    Then, for $j=\lambda+1$, inequality~\eqref{prop_prelim_induction-step} holds with
    {\allowdisplaybreaks
    \begin{align*}
        &\|\hat{x}_{\lambda|k+1} -\hat{x}^\ast_{\lambda+1|k}\| \nonumber\\
        &\ \, \stackrel{\eqref{MPC_scheme-d}}{\leq}\! \|f_{\hat{\theta}_k}(\hat{x}_{\lambda-1|k+1},\hat{u}_{\lambda-1|k+1}) - f_{\hat{\theta}_k}(\hat{x}^\ast_{\lambda|k},\hat{u}_{\lambda|k})\| \\
        &\qquad\ + \|G(\hat{x}_{\lambda-1|k+1},\hat{u}_{\lambda-1|k+1})\Delta \hat{\theta}_k\|  \\
        &\stackrel{\eqref{assump_regularity-eq-a},\eqref{assump_regularity-eq-c}}{\leq} L_\mathrm{dyn}(1+L_\kappa) \|\hat{x}_{\lambda-1|k+1}-\hat{x}^\ast_{\lambda|k}\|\\
        &\qquad\ +\|G(\hat{x}_{\lambda-1|k+1},\hat{u}_{\lambda-1|k+1}) \Delta \hat{\theta}_k\|\\
        & \stackrel{\eqref{probsetup_prop_LMS-theta-diff-eq},\eqref{prop_prelim_induction-step}}{\leq} L_\mathrm{dyn}(1+L_\kappa) \left(\sum_{i=0}^{\lambda-1}\left[L_\mathrm{dyn}(1+L_\kappa)\right]^i \|\tilde{x}_{1|k}+w_k\|\right) \\
        &\qquad +\|G(\hat{x}_{\lambda-1|k+1},\hat{u}_{\lambda-1|k+1})\| \cdot \sqrt{\mu} \|\tilde{x}_{1|k} + w_k\|\\
        &\ \, \stackrel{\eqref{probsetup_LMS_gain-definition}}{\leq} \sum_{i=1}^{\lambda}\left[L_\mathrm{dyn}(1+L_\kappa)\right]^i \|\tilde{x}_{1|k}+w_k\|+ \|\tilde{x}_{1|k} + w_k\|\\
        &\,\ =\ \sum_{i=0}^{\lambda} \left[L_\mathrm{dyn}(1+L_\kappa)\right]^i\cdot \|\tilde{x}_{1|k}+w_k\|.
    \end{align*}}
    \emph{Uniform constant: } 
    Lastly, $C_{dyn,\kappa}=C(N+1)\geq C(j),\ j\in\mathbb{I}_{[1,N+1]}$ yields~\eqref{prop_prelim-result-ineq}.
\end{proof}

\begin{lemma}[Limits of series~{\cite[Lemma~1]{solopertoGuaranteedClosedLoopLearning2023}}]\label{lemma_infinite-series}
    For any sequence $(a_k)$ that satisfies $0\leq a_k\leq a_{\max} < \infty\ \forall k\in\mathbb{N}$ and any function $\alpha \in \mathcal{K}_\infty$, the following implication holds: 
    \begin{align} \label{lemma_infinite-series_equation_appendix}
        \lim_{T\rightarrow\infty}\sum_{k=0}^{T-1} a_k \leq S < \infty \Rightarrow \lim_{T\rightarrow\infty}\sum_{k=0}^{T-1} \frac{\alpha(a_k)}{T} = 0.
    \end{align}
\end{lemma}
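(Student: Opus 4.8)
The plan is to use the elementary fact that a convergent series of nonnegative terms has a general term tending to zero, and then to bound the Cesàro-type average by splitting it into a fixed-length head and a tail that is uniformly small thanks to the monotonicity and continuity of $\alpha$.

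First I would note that, since $a_k \geq 0$, the partial sums $\sum_{k=0}^{T-1} a_k$ are nondecreasing in $T$ and bounded above by $S$; hence the series converges, and therefore $a_k \to 0$ as $k \to \infty$. Consequently, for any fixed $\varepsilon > 0$ there is $K \in \mathbb{N}$ with $a_k \leq \varepsilon$ for all $k \geq K$, and since $\alpha$ is nondecreasing this gives $\alpha(a_k) \leq \alpha(\varepsilon)$ for $k \geq K$, while $\alpha(a_k) \leq \alpha(a_{\max})$ always.

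Next I would split the average as
\[
\sum_{k=0}^{T-1}\frac{\alpha(a_k)}{T}
= \frac{1}{T}\sum_{k=0}^{K-1}\alpha(a_k) + \frac{1}{T}\sum_{k=K}^{T-1}\alpha(a_k)
\leq \frac{K\,\alpha(a_{\max})}{T} + \frac{T-K}{T}\,\alpha(\varepsilon)
\leq \frac{K\,\alpha(a_{\max})}{T} + \alpha(\varepsilon),
\]
valid for all $T > K$. Taking $\limsup_{T\to\infty}$ kills the first term (as $K$ and $\alpha(a_{\max})$ are fixed and finite), leaving $\limsup_{T\to\infty}\sum_{k=0}^{T-1}\alpha(a_k)/T \leq \alpha(\varepsilon)$. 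Since $\varepsilon > 0$ is arbitrary and $\alpha$ is continuous with $\alpha(0)=0$, sending $\varepsilon \to 0^+$ gives $\limsup_{T\to\infty}\sum_{k=0}^{T-1}\alpha(a_k)/T \leq 0$; as each $\alpha(a_k)\geq 0$ forces the $\liminf$ to be nonnegative, the limit exists and equals $0$.

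There is no genuine obstacle here. The only points that need a little care are: concluding the \emph{limit} (not merely the $\limsup$) is zero, which follows from nonnegativity of the averaged terms; and controlling the head of the sum using the uniform bound $a_{\max}$ together with monotonicity of $\alpha$, rather than relying on any growth property of $\alpha$ at infinity (indeed the $\mathcal{K}_\infty$ hypothesis is stronger than needed here -- monotonicity, continuity, and $\alpha(0)=0$ suffice).
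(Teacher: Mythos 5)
Your proof is correct: convergence of the nonnegative series forces $a_k\to 0$, the head/tail split together with monotonicity of $\alpha$ and $\alpha(0)=0$ with continuity bounds the $\limsup$ by $\alpha(\varepsilon)$ for every $\varepsilon>0$, and nonnegativity of the averaged terms upgrades this to an actual limit equal to zero. The paper does not prove this lemma itself but imports it from the cited reference, and your self-contained argument is the standard one establishing it (and, as you correctly observe, it only uses continuity, monotonicity, and $\alpha(0)=0$, not the full $\mathcal{K}_\infty$ property).
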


\subsection{Offline design} \label{ssec:app_SDP}
In the following, we provide details on the offline computations in Algorithm~\ref{MPC_algorithm}.
First, we discuss an optimization problem to compute a tube formulation with auxiliary feedback $\kappa$ such that Assumption~\ref{assump_Phi-function} is satisfied. Then, we show how $P_\mathrm{f}$ from Proposition~\ref{prop_constr-term-ingred} can be computed with a semidefinite program, for which we provide semi-automatic code online.

For simplicity, we consider input-independent matrices $G(x), E(x)$ in the following. This is in accordance with the numerical example, see also~\cite{sasfiRobustAdaptiveMPC2023} for the more general case.

\paragraph*{Homothetic tube and feedback $\kappa$}
We use the homothetic tubes from~\cite{kohlerComputationallyEfficientRobust2021} with ellipsoidal tubes and use ideas from~\cite{sasfiRobustAdaptiveMPC2023} for the propagation of the tube size in~\eqref{numex_tube-propagation}. The tube shape and the auxiliary feedback $\kappa(x,z,v) = K(x-z)+v$ are obtained offline by computing the matrices $P$ and $K$. Suppose that the constraint set consists of $n_c$ constraints and given as $\mathbb{Z}=\{(x,u)\in\mathbb{R}^{n_x}\times\mathbb{R}^{n_u}:\ g_i(x,u)\leq 0, i\in\mathbb{I}_{[1,n_c]}\}$.  In the following, we show how to design $P,K$ using linear matrix inequalities.  
\begin{subequations} \label{app_SDP_tubes-eq}
    \begin{alignat}{3}
    \bquad \min_{\substack{c, X, Y,\\ L_\mathrm{w}, \bar{w}, \rho}}\ & \sum_{i=0}^{n_c} \left(\frac{[c]_i\bar{w}}{(1-\rho-L_\mathrm{w})}\cdot \frac{1}{-[g(z^\mathrm{s},v^\mathrm{s})]_i}\right)^2 \label{eq:SDP_tube-a}\\
    &\text{s.t.} \nonumber\\
    & \begin{bmatrix}
                    \rho^2X&     (A(z,v,\theta)X+B(z,v,\theta)Y)^\top\\
                    \ast&        X
                    \end{bmatrix} 
                \succeq 0 
                \label{eq:SDP_tube-b}\\
              &  \begin{bmatrix}
                    X&    (G(z)[\theta-\bar{\theta}]+E(z)d)\\
                    \ast & \bar{w}^2
                    \end{bmatrix} 
                \succeq 0 
                \label{eq:SDP_tube-c}\\
              &  \begin{bmatrix}
                   L_\mathrm{w}^2 &    G_x(z,\theta)+E_x(z,d)\\
                    \ast & X
                   \end{bmatrix} 
                \succeq 0 
                \label{eq:SDP_tube-d}\\
              &  \begin{bmatrix}
                    [c]_i^2           &  \left(\left.\frac{\partial g_j}{\partial x}\right|_{(z,v)} X+ \left.\frac{\partial g_j}{\partial u}\right|_{(z,v)}Y\right)\\
                    \ast &  X
                    \end{bmatrix} \succeq 0 
                \label{eq:SDP_tube-e}\\
              &  \frac{[c]_i\bar{w}}{1-\rho-L_\mathrm{w}}\cdot \frac{1}{-[g(z^\mathrm{s}, v^\mathrm{s})]_j} \leq 1,
                \label{eq:SDP_tube-f}\\[5pt]
              &   (z,v)\in\mathbb{Z},\ \theta\in\Theta,\ d\in\mathbb{D},\ i\in \mathbb{I}_{1,n_c}, \nonumber
\end{alignat}
\end{subequations}
where $P:=X^{-1}$, $K:=Y\cdot P$, $\rho\in[0,1)$ as a contraction rate, $z^\mathrm{s}=h(v^\mathrm{s},\bar{\theta})$ is an equilibrium for the nominal parameters $\bar{\theta}$, and
\begin{align*}
G_x(z,\theta)&:= \sum_{j=1}^{n_\theta} \left.\frac{\partial [G]_{:,j}}{\partial x}\right|_{z}\cdot [\theta-\bar{\theta}]_j,\\
E_x(z,d) &:= \sum_{j=1}^{n_d} \left.\frac{\partial [E]_{:,j}}{\partial x}\right|_{z} \cdot [d]_j,
\end{align*}
The Jacobians of the dynamics are given by
\begin{align*}
    A(x,u, \theta) &:= \left.\frac{\partial f}{\partial x}\right|_{(x,u, {\theta})}, \quad 
    B(x,u, \theta) := \left.\frac{\partial f}{\partial u}\right|_{(x,u, {\theta})}.
 \end{align*}

 The linear matrix inequality (LMI)~\eqref{eq:SDP_tube-b} is an equivalent reformation of~\eqref{numex_Lyapunov_inequality}, the LMI~\eqref{eq:SDP_tube-c} ensures that $\bar{w}\geq {\max_{\theta^i+\bar{\theta}\in\Theta, d^i\in\mathbb{D}}} \|G(z)\cdot[\theta^i-\bar{\theta}] + d^i\|_P$ $\forall (z,v)\in\mathcal{Z}$ (see~\eqref{numex_tube-propagation}), and similarly~\eqref{eq:SDP_tube-d} ensures~\eqref{numex_Lw-definition}. 
 The LMI~\eqref{eq:SDP_tube-e} defines the tightening $[c]_j$ of each constraint $g_j$. 
 LMI~\eqref{eq:SDP_tube-f} checks that the tube around around $(z^\mathrm{s},v^\mathrm{s})$ lies inside the constraint set for any size $s_{j|k}\leq \frac{\bar{w}}{1-\rho-L_\mathrm{w}}$, which is needed for initial feasibility of the terminal conditions~\eqref{robust-MPC_scheme-f}--\eqref{robust-MPC_scheme-f2} in the MPC. 
The constraint~\eqref{eq:SDP_tube-f} only holds with admissible values of $\rho,L_\mathrm{w},\bar{w}$. To find suitable values, we can check the feasibility of the Problem~\eqref{app_SDP_tubes-eq} for a range of values of $\bar{w},\ L_\mathrm{w}$, and $\rho$ with standard solvers for semidefinite programs (SDP), like MOSEK~\cite{mosek}.
 The objective~\eqref{eq:SDP_tube-a} minimizes the constraint tightening at the steady-state $(z^{\mathrm{s}},v^{\mathrm{s}})$ directly. 
\begin{remark}[Polytopic constraints]
    For polytopic constraints, i.e., $\mathbb{Z}=\{H_x x+ H_u u \leq h_{xu}\}$, the LMI~\eqref{eq:SDP_tube-e} simplifies to
    \begin{align*}
        \begin{bmatrix}
        [c]_j^2           &  [H_x]_j X + [H_u]_j Y\\
        \ast &  X
        \end{bmatrix} \succeq 0.
    \end{align*}
\end{remark}
In summary, this SDP and the propagation formula~\eqref{numex_tube-propagation} ensure that the tube satisfies Assumption~\ref{assump_Phi-function}.
Furthermore, the LMIs~\eqref{eq:SDP_tube-e} and~\eqref{eq:SDP_tube-f} guarantee that the largest possible tube (with size $\frac{\bar{w}}{1-\rho-L_\mathrm{w}}$) centered around the steady-state $(z^{\mathrm{s}},v^{\mathrm{s}})$ also satisfies constraints~$\mathbb{Z}$. 

\paragraph*{Terminal cost}
To obtain the terminal cost $\ell_\mathrm{f}$ from Proposition~\ref{prop_constr-term-ingred}, and thus satisfying Assumption~\eqref{assump_terminal-ingredients}, we first solve Problem~\eqref{app_SDP_tubes-eq} to determine the feedback $K$. Second, we compute the constant $\alpha$ and the matrix $Q$ according to~\eqref{alpha_def-eq} and~\eqref{Q_def-eq}, respectively. Then, with the fixed matrix $K$, we obtain $P_\mathrm{f}$ from the following SDP 
\begin{alignat}{3}
\label{app_SDP_termcost-eq}
\min_{P_\mathrm{f}}\ & &&\!\!\mathrm{tr}(P_\mathrm{f}) \nonumber\\[-5pt]
&\text{s.t. } && A_K(x,v^\mathrm{s},\theta)^\top P_\mathrm{f} A_K(x,v^\mathrm{s},\theta) - P_\mathrm{f} \nonumber\\
            & && \qquad \qquad \qquad \preceq -Q, \nonumber\\
            & &&  \forall x\in\mathbb{Z}_x,\ v^\mathrm{s}\in\mathbb{R}^{n_u},\  \theta\in\Theta,
\end{alignat}
where $\mathrm{tr}(P_\mathrm{f})$ denotes the trace of $P_\mathrm{f}$, $A_K$ is defined in~\eqref{offline_linearization_AK-def-eq}, $\mathbb{O}_{n_x}\in\mathbb{R}^{n_x \times n_x}$ is the zero matrix.

\bibliography{Bibliography_abbrev} 
\bibliographystyle{IEEEtran} 
 \begin{IEEEbiography}[{\includegraphics[trim=0cm 5cm 0cm 0cm, width=1in,height=1.25in,clip,keepaspectratio]{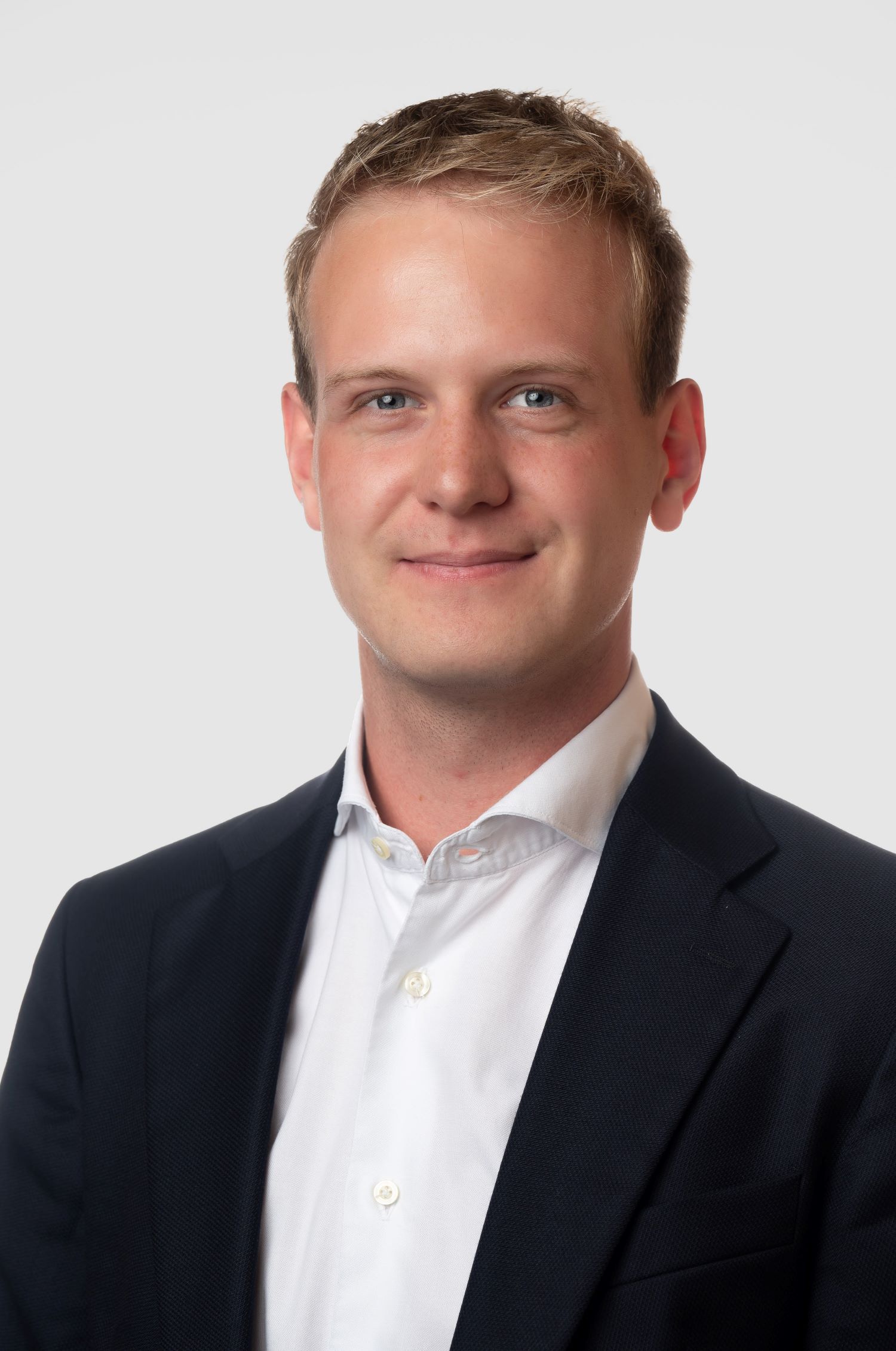}}]{Maximilian Degner}
received his B.Sc. and his M.Sc. with distinction in mechanical engineering from ETH Zürich, Switzerland, in 2022 and 2024, respectively. Since 2025, he is a PhD student at the Institute for Systems Theory and Automatic Control, University of Stuttgart.
During his studies, he focused on optimization-based control methods.
His research interests lie at the intersection of online learning and control.
 \end{IEEEbiography}
\vskip -1\baselineskip plus -1fil

 \begin{IEEEbiography}[{\includegraphics[ width=1in,clip,keepaspectratio]{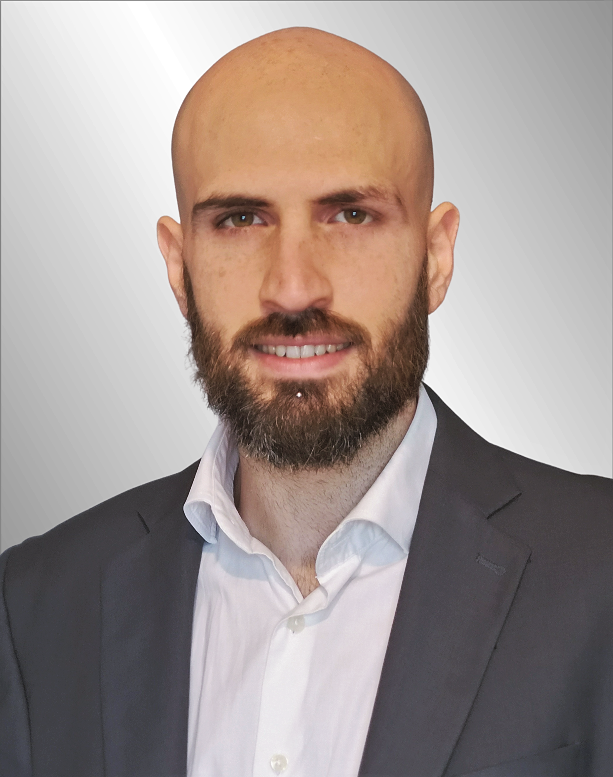}}]{Raffaele Soloperto}
received his Bachelor’s and Master’s degrees in Automation Engineering from the University of Bologna, Italy, in 2014 and 2016, respectively. He earned his Ph.D. in 2022 from the University of Stuttgart, Germany, in collaboration with the International Max Planck Research School (IMPRS).
From 2022 to 2025, he was a Postdoctoral Researcher at the Automatic Control Laboratory, ETH Zürich, Switzerland. Since 2025, he has been serving as Innovation Manager at Embotech AG. His research interests include model predictive control and game theory.
\end{IEEEbiography}
\vskip -1\baselineskip plus -1fil

\begin{IEEEbiography}[{\includegraphics[width=1in,height=1.25in,clip,keepaspectratio]{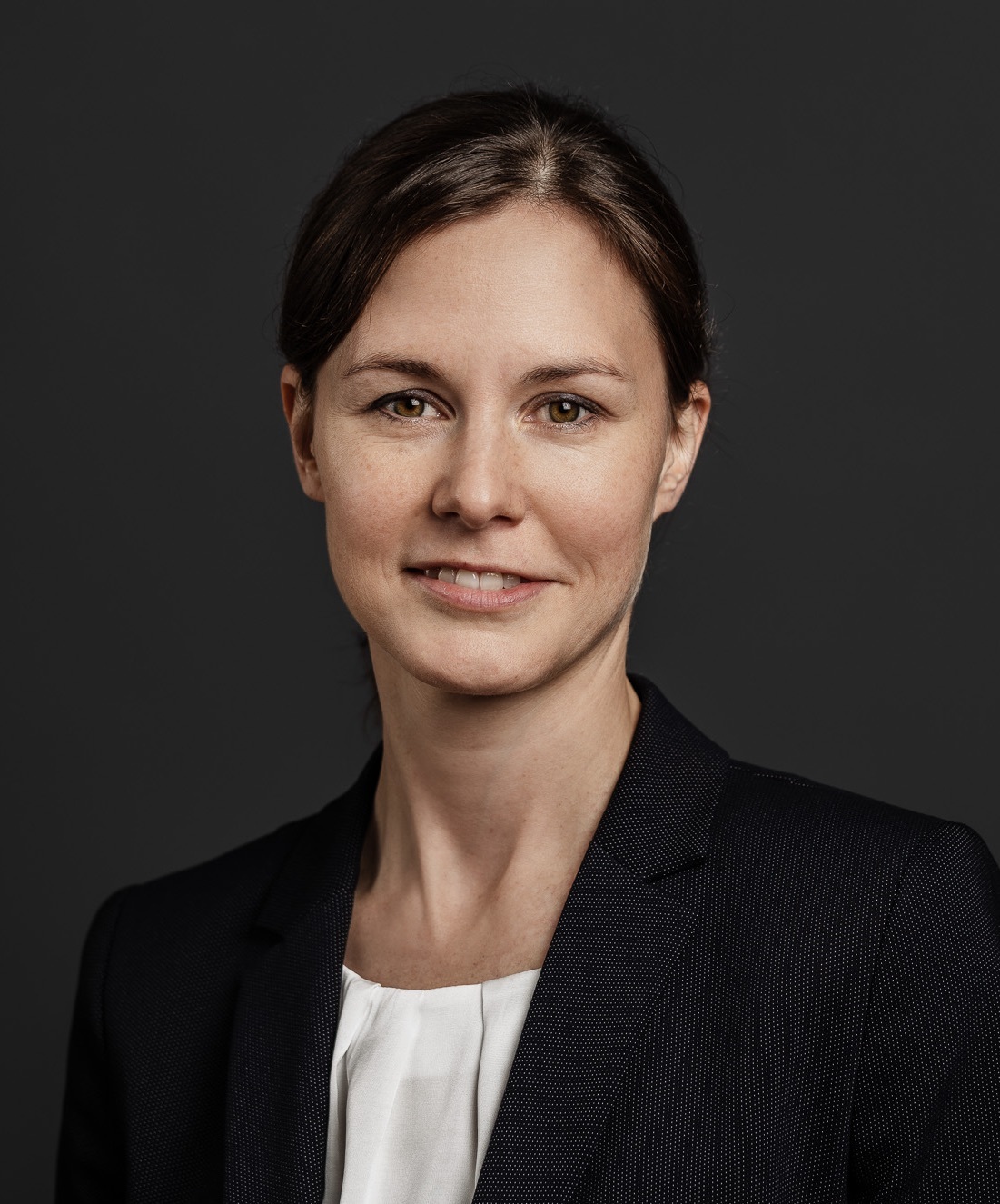}}]{Melanie N. Zeilinger}
 is an Associate Professor at ETH Zürich, Switzerland. 
She received the Diploma degree in engineering cybernetics from the University of Stuttgart, Germany, in 2006, and the Ph.D. degree with honors in electrical engineering from ETH Zürich, Switzerland, in 2011. 
From 2011 to 2012 she was a Postdoctoral Fellow with the École Polytechnique Fédérale de Lausanne (EPFL), Switzerland.
She was a Marie Curie Fellow and Postdoctoral Researcher with the Max Planck Institute for Intelligent
Systems, Tübingen, Germany until 2015 and with the Department of Electrical Engineering and Computer Sciences at the University
of California at Berkeley, CA, USA, from 2012 to 2014. 
From 2018 to 2019 she was a professor at the University of Freiburg, Germany. 
Her current research interests include safe learning-based control, as well as distributed control and optimization, with applications to robotics and human-in-the loop control.
\end{IEEEbiography}
\vskip -1\baselineskip plus -1fil

\begin{IEEEbiography}[{\includegraphics[trim=0cm 5cm 0cm 0cm, width=1in,height=1.25in,clip,keepaspectratio]{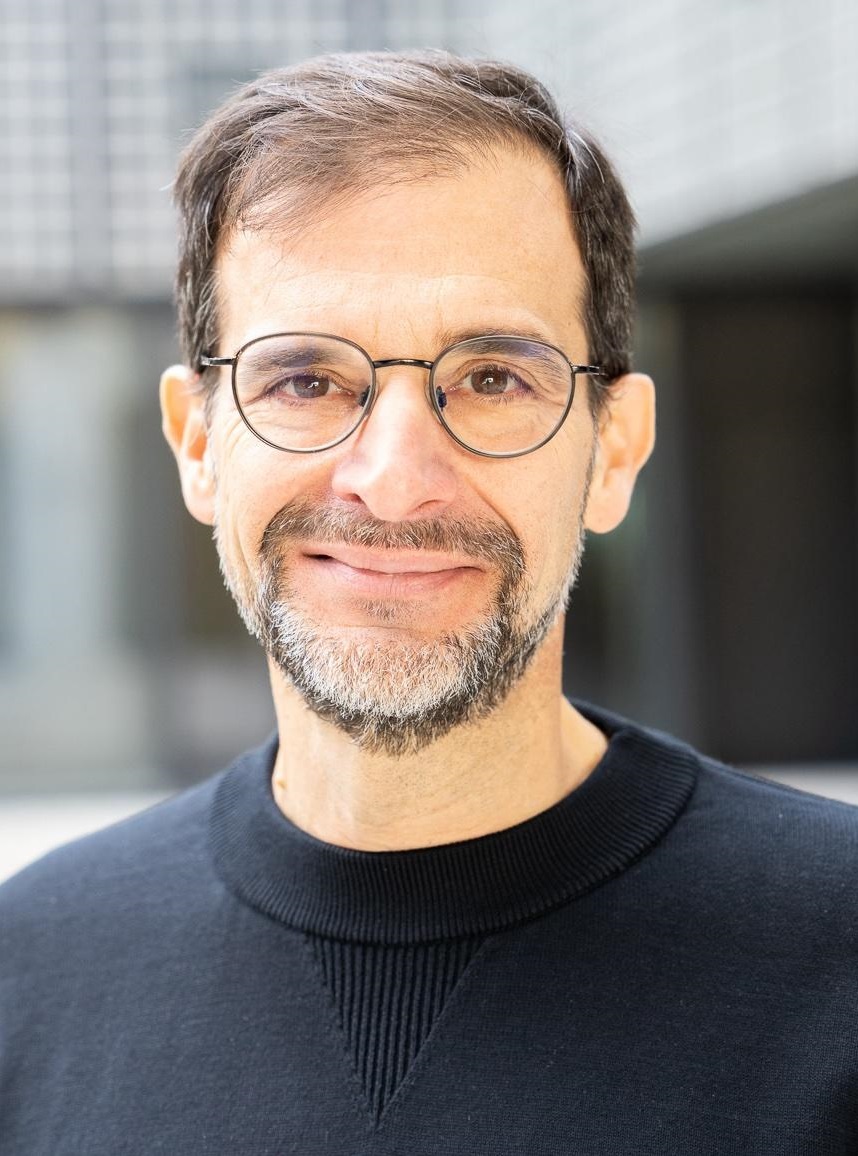}}]{John Lygeros}
completed a B.Eng. degree in electrical
engineering in 1990 and an M.Sc. degree in Systems
Control in 1991, both at Imperial College of Science
Technology and Medicine, London, U.K. In 1996 he
obtained a Ph.D. degree from the Electrical Engineering and Computer Sciences Department, University of
California, Berkeley. During the period 1996–2000 he
held a series of research appointments at the National
Automated Highway Systems Consortium, Berkeley, the
Laboratory for Computer Science, M.I.T., and the Electrical Engineering and Computer Sciences Department
at U.C. Berkeley. Between 2000 and 2003 he was a University Lecturer at
the Department of Engineering, University of Cambridge, U.K., and a Fellow
of Churchill College. Between 2003 and 2006 he was an Assistant Professor at
the Department of Electrical and Computer Engineering, University of Patras,
Greece. In July 2006 he joined the Automatic Control Laboratory at ETH Zürich,
first as an Associate Professor, and since January 2010 as a Full Professor;
he is currently serving as the Head of the laboratory and the head of the
Department of Information Technology and Electrical Engineering. His research
interests include modeling, analysis, and control of hierarchical, hybrid, and
stochastic systems, with applications to biochemical networks, transportation
systems, energy systems, and industrial processes. John Lygeros is a Fellow
of the IEEE, and a member of the IET and the Technical Chamber of Greece;
between 2013 and 2023 he served as the Vice President for Finances and a
Council Member of the International Federation of Automatic Control (IFAC), as
well as the Board of the IFAC Foundation.
\end{IEEEbiography}
\vskip -1\baselineskip plus -1fil
 
\begin{IEEEbiography}[{\includegraphics[width=1in,height=1.25in,clip,keepaspectratio]{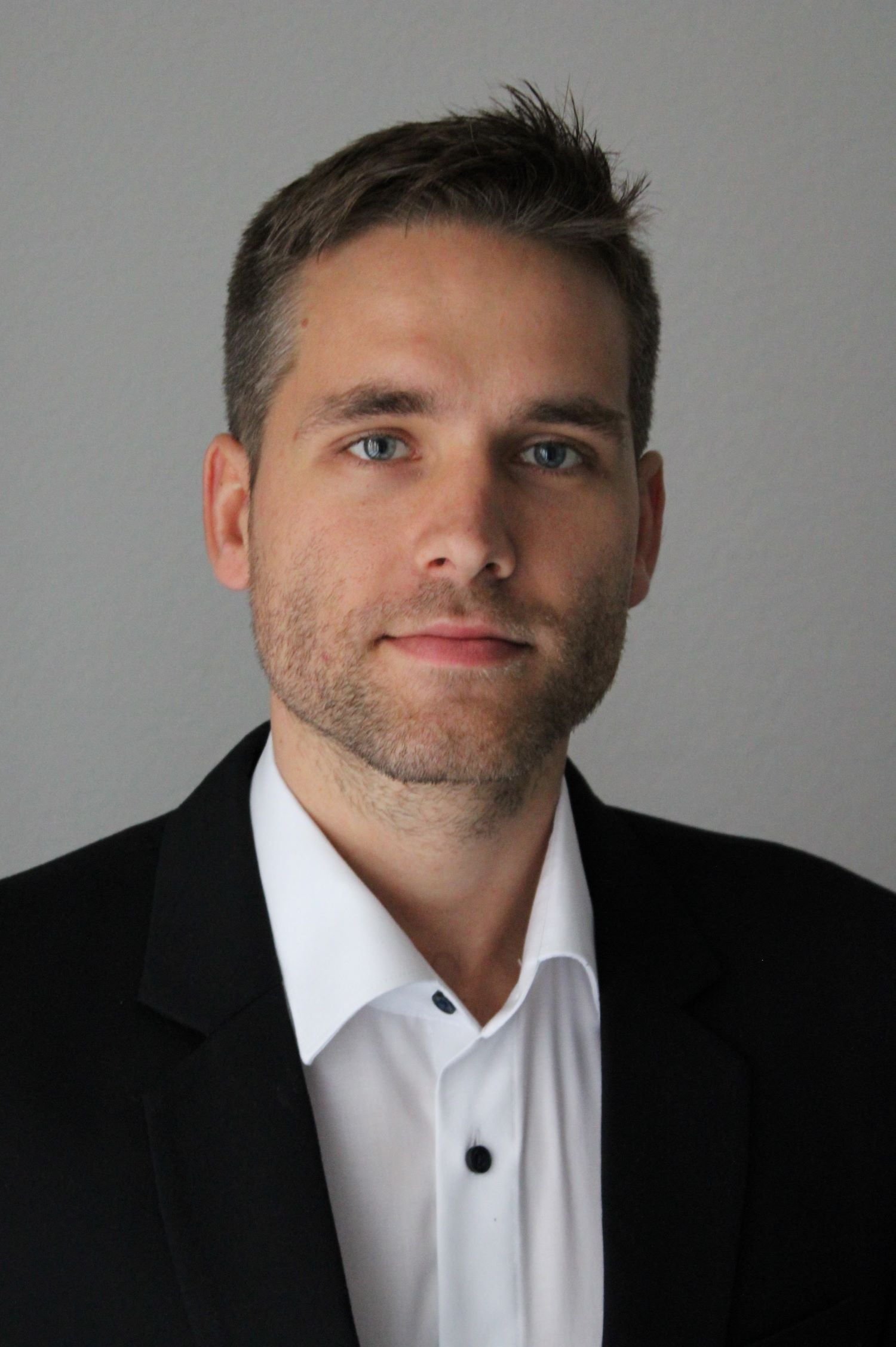}}]{Johannes K\"ohler} \looseness -1  is an Assistant Professor at Imperial College London. 
He received the Ph.D. degree from the University of Stuttgart, Germany, in 2021. 
From 2021 to 2025, he was a postdoctoral researcher at ETH Zurich, Switzerland.
He has received several awards including the 2021 European Systems \& Control PhD Thesis Award, the IEEE CSS George S. Axelby Outstanding Paper Award 2022, and the Journal of Process Control Paper Award 2023. 
His research interests include data-driven models and predictive control with applications to robotics, autonomous systems, and biomedical problems. 
\end{IEEEbiography}
\end{document}